\documentclass[11pt]{article}

\usepackage[hyphens]{url}
\usepackage[pagebackref,colorlinks]{hyperref}
\usepackage[hyphenbreaks]{breakurl}
\usepackage{amsmath} 
\usepackage{amsthm} 
\usepackage{amssymb}	
\usepackage{graphicx} 
\usepackage{multicol} 
\usepackage{multirow}
\usepackage{color}
\usepackage[dvips,letterpaper,margin=1in,bottom=1in]{geometry}
\usepackage[capitalize,noabbrev]{cleveref}

\usepackage{bm}

\usepackage{ytableau}
\usepackage{threeparttable}

\usepackage[utf8]{inputenc}
\usepackage[english]{babel}

\usepackage[T1]{fontenc}
\AtBeginDocument{%
  \DeclareFontShape{T1}{cmr}{m}{scit}{<->ssub*cmr/m/sc}{}%
}

\usepackage{diagbox}
\usepackage{mathtools}

\newtheorem{theorem}{Theorem}[section]

\newtheorem{lemma}[theorem]{Lemma}

\newtheorem{fact}[theorem]{Fact}

\newtheorem{definition}[theorem]{Definition}

\newcommand{\braket}[2]{\langle #1 | #2 \rangle}

\DeclarePairedDelimiter\rbra{\lparen}{\rparen}
\DeclarePairedDelimiter\sbra{\lbrack}{\rbrack}
\DeclarePairedDelimiter\cbra{\{}{\}}
\DeclarePairedDelimiter\abs{\lvert}{\rvert}
\DeclarePairedDelimiter\Abs{\lVert}{\rVert}
\DeclarePairedDelimiter\ceil{\lceil}{\rceil}

\DeclarePairedDelimiter\ket{\lvert}{\rangle}
\DeclarePairedDelimiter\bra{\langle}{\rvert}
\DeclarePairedDelimiter\ave{\langle}{\rangle}

\DeclareMathOperator*{\E}{\mathbb{E}}
\DeclareMathOperator*{\Var}{\mathbf{Var}}

\newcommand{\C} {\mathbb{C}}

\newcommand{\tr} {\operatorname{tr}}

\newcommand{\diag} {\operatorname{diag}}

\newcommand{\supp} {\operatorname{supp}}

\newcommand{\Real} {\operatorname{Re}}

\newcommand{\kett}[1]{|#1\rangle\!\rangle}

\newcommand{\ketbra}[2]{\ensuremath{\ket{#1}\!\bra{#2}}}

\usepackage{latexsym}
\usepackage{CJK}

\usepackage{enumerate}

\usepackage{algorithm}
\usepackage{algpseudocode}

\usepackage{stmaryrd}
\usepackage{tabularx}
\usepackage{booktabs}
\usepackage{adjustbox}

\newcommand{\footremember}[2]{%
    \footnote{#2}
    \newcounter{#1}
    \setcounter{#1}{\value{footnote}}%
}

\usepackage{tikz}
\usetikzlibrary{quantikz2}

\begin{document}

\title{Nearly Sample-Optimal Estimators for Quantum R{\'e}nyi
and Tsallis Entropies}
\author{Kean Chen\footremember{1}{\url{keanchen.gan@gmail.com}} and Qisheng Wang\footremember{2}{\url{QishengWang1994@gmail.com}}}
\date{}

\maketitle

\begin{abstract}
In this paper, we provide estimators for quantum R\'enyi and Tsallis entropies with nearly optimal sample complexity. 
Specifically, for order $\alpha$, dimension $d$, and additive error $\varepsilon$, 
\begin{itemize}
    \item For $0 < \alpha < 1$, the sample complexity is $O(d^{1+1/\alpha}/\varepsilon^{1/\alpha} + d^{1/\alpha-1}/\varepsilon^{2})$ for R\'enyi entropy and $O(d^{1+1/\alpha}/\varepsilon^{1/\alpha} + d^{2-2\alpha}/\varepsilon^2)$ for Tsallis entropy.
    In particular, for $0 < \alpha \leq 1/2$, the sample complexity for both entropies is $O(d^{1+1/\alpha}/\varepsilon^{1/\alpha})$. 
    \item For non-integer $\alpha > 1$, the sample complexity is $O(d^2/\varepsilon^{1/\alpha} + d^{1-1/\alpha}/\varepsilon^2)$ for R\'enyi entropy.
\end{itemize}
Our upper bounds improve the quantum R\'enyi entropy estimators due to \hyperlink{cite.AISW20}{Acharya, Issa, Shende, and Wagner (2017)} and the quantum Tsallis entropy estimators due to \hyperlink{cite.CLW26}{Chen, Liu, and Wang (2026)}, and match the lower bounds recently established by \hyperlink{cite.Wan26}{Wang (2026)}. 
\end{abstract}

\newpage
\tableofcontents
\newpage

\section{Introduction}

Entropy estimation is a fundamental task in information theory and statistics. 
The estimation of Shannon entropy \cite{Sha48a,Sha48b} has attracted a lot of attention in the literature \cite{Pan03,Pan04,BDKR05,Val11,VV11a,VV11b,VV17,JVHW15,JVHW17,WY16}. 
In particular, for a $d$-dimensional discrete distribution $P$, estimating its Shannon entropy, 
\[
\mathrm{H}\rbra{P} = - \sum_{i=1}^d p_i \log\rbra*{p_i},
\]
to within additive error $\varepsilon$ is known to have sample complexity $\Theta\rbra{\frac{d}{\varepsilon\log\rbra{d}}+\frac{\log^2\rbra{d}}{\varepsilon^2}}$ \cite{JVHW15,WY16}. 
As generalizations of Shannon entropy, (near-)optimal estimators for R\'enyi entropy \cite{Ren61} and Tsallis entropy \cite{Tsa88},
\[
\mathrm{H}_{\alpha}^{\mathrm{R}}\rbra{P} = \frac{1}{1-\alpha} \log\rbra*{\sum_{i=1}^d p_i^\alpha}, \qquad \mathrm{H}_{\alpha}^{\mathrm{T}}\rbra{P} = \frac{1}{1-\alpha}\rbra*{\sum_{i=1}^d p_i^\alpha - 1},
\]
have been proposed in \cite{AOST17} and \cite{JVHW15,JVHW17}, respectively. 

In the quantum world, the estimation of von Neumann entropy \cite{vN27,vN32},
\[
\mathrm{S}\rbra{\rho} = -\tr\rbra*{\rho\log\rbra{\rho}},
\]
has also got extensively studied \cite{BMW16,AISW20,WZ25,GW26}.
A nearly sample-optimal estimator for von Neumann entropy was known in \cite{BMW16} with sample complexity $O\rbra{\frac{d^2}{\varepsilon} + \frac{\log^2\rbra{d}}{\varepsilon^2}}$ (cf.\ \cite[Theorem 1.27]{OW17}), where an almost matching lower bound of $\widetilde{\Omega}\rbra{\frac{d^2}{\varepsilon}+\frac{1}{\varepsilon^2}}$ was recently established in \cite{Wan26}.\footnote{A lower bound of $\Omega\rbra{d^{2-\gamma}}$ was also recently established in \cite{FOW26} for any constant $\gamma > 0$.} 
However, the sample complexities of estimating the quantum R\'enyi entropy and quantum Tsallis entropy,
\begin{equation} \label{eq:global-functional}
\mathrm{S}_{\alpha}^{\mathrm{R}}\rbra{\rho} = \frac{1}{1-\alpha} \log\rbra*{\mathrm{F}_\alpha\rbra{\rho}}, \qquad \mathrm{S}_\alpha^{\mathrm{T}}\rbra{\rho} = \frac{1}{1-\alpha} \rbra*{\mathrm{F}_\alpha\rbra{\rho} - 1}, \qquad \textup{ where } \mathrm{F}_\alpha\rbra{\rho} = \tr\rbra*{\rho^\alpha},
\end{equation}
have not been fully settled. 
\begin{itemize}
    \item \textbf{R\'enyi entropy.}
    \begin{itemize}
        \item For $0 < \alpha < 1$, the upper bound is $O\rbra{\frac{d^{2/\alpha}}{\varepsilon^{2/\alpha}}}$ \cite{AISW20}, whereas the lower bound $\widetilde{\Omega}\rbra{\frac{d^{1+1/\alpha}}{\varepsilon^{1/\alpha}}+\frac{d^{1/\alpha-1}}{\varepsilon^2}}$ \cite{Wan26}. 
        \item For non-integer $\alpha > 1$, the upper bound is $O\rbra{\frac{d^2}{\varepsilon^2}}$ \cite{AISW20}, whereas the lower bound is $\widetilde{\Omega}\rbra{\frac{d^2}{\varepsilon^{1/\alpha}}+\frac{d^{1-1/\alpha}}{\varepsilon^2}}$ \cite{Wan26}. 
        \item For integer $\alpha \geq 2$, the sample complexity is known to be $\Theta\rbra{\frac{d^{2-2/\alpha}}{\varepsilon^{2/\alpha}}+\frac{d^{1-1/\alpha}}{\varepsilon^2}}$ \cite{AISW20}. 
    \end{itemize}
    \item \textbf{Tsallis entropy.}
    \begin{itemize}
        \item For $0 < \alpha < 1$, the upper bound is $O\rbra{\frac{d^{2/\alpha}}{\varepsilon^{2/\alpha}}}$ \cite{CLW26}, whereas the lower bound is $\widetilde{\Omega}\rbra{\frac{d^{1+1/\alpha}}{\varepsilon^{1/\alpha}}+\frac{d^{2-2\alpha}}{\varepsilon^2}}$ \cite{Wan26}. 
        \item For non-integer $\alpha > 1$, the sample complexity is shown to be $\widetilde{\Theta}\rbra{\frac{1}{\varepsilon^{\max\cbra{2/\rbra{\alpha-1}, 2}}}}$ \cite{CW25,Wan26}. 
        \item For integer $\alpha \geq 2$, the sample complexity is known to be $\Theta\rbra{\frac{1}{\varepsilon^2}}$, where the upper bound is by the generalized SWAP test \cite{BCWdW01,EAO+02} and the lower bound was recently established in \cite{CWLY23,GHYZ26,CWYZ26}.
    \end{itemize}
\end{itemize}

In this paper, we close all the remaining gaps between the sample complexity upper and lower bounds by providing near-optimal estimators for (i) quantum R\'enyi entropy for all non-integer order $\alpha$ and (ii) quantum Tsallis entropy for $0 < \alpha < 1$, leaving only room for polylogarithmic improvements. 

\begin{theorem}[Nearly sample-optimal estimators for quantum R\'enyi and Tsallis entropies] \label{thm:main}
    Given sample access to an unknown $d$-dimensional quantum state $\rho$ and the additive error $\varepsilon$, 
    \begin{itemize}
        \item For $0 < \alpha < 1$, we can estimate the $\alpha$-R\'enyi entropy $\mathrm{S}_\alpha^{\mathrm{R}}\rbra{\rho}$ and Tsallis entropy $\mathrm{S}_\alpha^{\mathrm{T}}\rbra{\rho}$ using 
        \[
        O\rbra*{\frac{d^{1+1/\alpha}}{\varepsilon^{1/\alpha}}+\frac{d^{1/\alpha-1}}{\varepsilon^2}}, \qquad O\rbra*{\frac{d^{1+1/\alpha}}{\varepsilon^{1/\alpha}}+\frac{d^{2-2\alpha}}{\varepsilon^2}},
        \]
        samples of $\rho$, respectively. 
        In particular, for $0 < \alpha \leq 1/2$, both estimators have sample complexity 
        \[
        O\rbra*{\frac{d^{1+1/\alpha}}{\varepsilon^{1/\alpha}}}.
        \]
        \item For non-integer $\alpha > 1$, we can estimate the $\alpha$-R\'enyi entropy $\mathrm{S}_\alpha^{\mathrm{R}}\rbra{\rho}$ using 
        \[
        O\rbra*{\frac{d^2}{\varepsilon^{1/\alpha}}+\frac{d^{1-1/\alpha}}{\varepsilon^2}}
        \]
        samples of $\rho$. 
    \end{itemize}
\end{theorem}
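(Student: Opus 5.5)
The plan is to reduce everything to estimating the power sum $\mathrm{F}_\alpha(\rho)=\tr(\rho^\alpha)=\sum_i \lambda_i^\alpha$ to a suitable \emph{multiplicative} accuracy (for R\'enyi) or \emph{additive} accuracy (for Tsallis). The reduction is elementary: since $\mathrm{S}_\alpha^{\mathrm{R}}=\frac{1}{1-\alpha}\log \mathrm{F}_\alpha$, an additive $\varepsilon$ error on R\'enyi entropy follows from a $(1\pm\Theta(|1-\alpha|\varepsilon))$-multiplicative estimate of $\mathrm{F}_\alpha$. Similarly, Tsallis needs additive error $\Theta(|1-\alpha|\varepsilon)$ on $\mathrm{F}_\alpha$.

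The estimator itself would be built from weak Schur sampling, using the unbiased ``quantum Hoeffding/OW-type'' machinery. Measuring $\rho^{\otimes n}$ yields a Young diagram $\lambda\vdash n$ whose normalized rows $\lambda_i/n$ concentrate around the spectrum (Keyl's estimator). Two spectrum-estimation parts would be combined, mirroring the classical Jiao--Venkat--Han--Weissman / Acharya et al.\ approach:
\begin{itemize}
\item \textbf{Large eigenvalues} ($\lambda_i\gtrsim \tau$): use a bias-corrected plug-in estimate of $\lambda_i^\alpha$ from the row lengths. The main tool is the known unbiased estimators of power sums $p_k(\mathrm{spec}\,\rho)$ for integer $k$ via central characters, i.e.\ the OW polynomials $p^\#_k(\lambda)$. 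These are combined in a Taylor expansion of $x^\alpha$ around the empirical value $\lambda_i/n$.
\item \textbf{Small eigenvalues} ($\lambda_i\lesssim \tau$): on $[0,\tau]$, use the best polynomial approximation $P_L$ of $x^\alpha$ of degree $L\approx\log$-free choice $\Theta(\sqrt{n\tau})$-scaled. Its error is $\approx (\tau/L^2)^\alpha$ per eigenvalue, and $\tr P_L(\rho)$ restricted to the small part is estimated unbiasedly with the $p^\#_k$.
\end{itemize}
To split eigenvalues without classical sample splitting, I would take two independent batches. One batch yields a Keyl estimate that labels rows as large or small. The other yields unbiased polynomial statistics, using the fact that with $n$ samples one can localize into the approximate eigenspaces (e.g.\ via a tomography-style projection of accuracy $\tau\approx d/n$ in each direction).

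The two error terms then need to be balanced. The bias from the small part is $d\,(\tau/L^2)^\alpha$; with $\tau\sim d/n$ effectively setting the resolution (quantum spectrum estimation needs $n\gtrsim d/\lambda$ for eigenvalue $\lambda$, whence the extra factor of $d$ compared to the classical case), setting this to $\varepsilon\mathrm{F}_\alpha$ gives $n\approx d^{1+1/\alpha}/\varepsilon^{1/\alpha}$ for $\alpha<1$, using $\mathrm{F}_\alpha\ge1$. For $\alpha>1$ one uses $\mathrm{F}_\alpha\ge d^{1-\alpha}$ and gets $d^2/\varepsilon^{1/\alpha}$. The variance term comes from a Hoeffding/Efron--Stein bound on the plug-in part, giving $\sum_i\lambda_i^{2\alpha-1}/(n\mathrm{F}_\alpha^2\varepsilon^2)$. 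Maximizing over spectra gives $d^{1/\alpha-1}/\varepsilon^2$ (R\'enyi, $\alpha<1$), $d^{1-1/\alpha}/\varepsilon^2$ ($\alpha>1$), and, without the normalization by $\mathrm{F}_\alpha$, $d^{2-2\alpha}/\varepsilon^2$ for Tsallis. The ``in particular'' clause follows since, for $\alpha\le1/2$, the second terms are dominated whenever $\varepsilon\lesssim d^{...}$ (and otherwise trivially).

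I expect the main obstacle to be making the small-eigenvalue polynomial estimator genuinely unbiased and low-variance in the quantum setting. The variance of $p^\#_k$-based estimators of high degree $L$ must be controlled, which requires bounding the moments of Young diagram statistics (e.g.\ via the Kerov--Olshanski polynomial algebra). I would first try to bound $\Var[p^\#_k]$ by $O(k^{2}\,\cdot\,(k/n)^{k})$-type expressions times power sums, and show that the coefficients of the approximating polynomial (after rescaling to $[0,\tau]$) grow at most like $e^{O(L)}$ so that $L=O(\log n)$ suffices. Then the log factors must be absorbed to obtain the clean $O(\cdot)$ bounds claimed.
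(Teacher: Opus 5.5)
There is a genuine gap. You transplant the classical split-and-approximate template (JVHW/AOST) to weak Schur sampling, but the two steps that make that template work classically have no quantum counterpart, and those are exactly the steps you leave open.

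\textbf{The bias-corrected plug-in on large eigenvalues.} Classically, the count of a \emph{known} symbol is $\mathrm{Bin}(n,p_i)$ and depends on $p_i$ alone. So falling factorials give unbiased estimates of $p_i^k$, and both the Taylor correction (large symbols) and the polynomial estimator (small symbols) can be restricted to a data-dependent set of symbols. Weak Schur sampling is different: the Young diagram has law $\dim(\lambda)\,s_\lambda(p_1,\ldots,p_d)$, which is symmetric in the whole spectrum. Rows repel, and a degenerate block spreads out like GUE eigenvalues. Hence every statistic of the diagram has expectation symmetric in $(p_1,\ldots,p_d)$, and nothing is unbiased for the power $p_{(i)}^k$ of an individual sorted eigenvalue or for $\sum_{i\in S}p_{(i)}^k$ over a subset of rows. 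The $p^\#_k$ give only full-spectrum power sums $\tr\rho^k$. The row lengths also carry a neighbour-dependent bias that no per-row Taylor correction removes.

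\textbf{The split.} Your remedy, localizing into approximate eigenspaces by a tomography-style projection $\Pi$, is the crux and is not carried out. In general $\tr(\rho^\alpha)\neq\tr((\Pi\rho\Pi)^\alpha)+\tr((\Pi^\perp\rho\Pi^\perp)^\alpha)$. Since $x^\alpha$ is not Lipschitz at $0$ for $\alpha<1$, the off-diagonal leakage must be controlled relative to the local eigenvalue scale, and a trace-distance or operator-norm tomography guarantee does not provide that. The variance of high-degree $p^\#_k$ combinations is also left unresolved. So your rates come from heuristic balancing rather than an argument. Moreover, for $\alpha<1/2$ the quantity $\sum_i p_i^{2\alpha-1}$ you maximize is unbounded, so a resolution floor is needed there too. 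The prior weak-Schur-sampling estimators that the paper improves on only reach $d^{2/\alpha}/\varepsilon^{2/\alpha}$ and $d^2/\varepsilon^2$.

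\textbf{What the paper does instead.} Your instinct that $d/n$ is the effective resolution is right; what is missing is a mechanism that uses it without splitting.

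For $0<\alpha<1$, the paper runs Bures-$\chi^2$ tomography and regularizes to $\sigma=\widehat\rho+\frac{2d}{n}I\succeq\frac{d}{n}I$, which gives $\mathrm{D}_{\chi^2}(\rho\,\|\,\sigma)=O(d^2/n)$. It then estimates the linearization $(1-\alpha)\tr(\sigma^\alpha)+\alpha\tr(\rho\sigma^{\alpha-1})$. The second term is a linear functional of $\rho$, estimated unbiasedly by measuring fresh copies in the eigenbasis of $\sigma$. Concavity and a divided-difference bound on the trace Hessian give $0\le\text{bias}\le(d/n)^{\alpha-1}\mathrm{D}_{\chi^2}(\rho\,\|\,\sigma)=O(d^{1+\alpha}/n^\alpha)$. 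The floor $\frac{d}{n}I$ plays the role of your threshold $\tau$, and the relative (Bures) guarantee is what controls the leakage you left open.

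For non-integer $\alpha>1$, the paper applies random purification and then Hayashi's covariant pure-state POVM. The output is $\sqrt{1-T}\,M+\sqrt{T}\,G$ with $T\sim\operatorname{Beta}(d^2-1,s+1)$. A Schatten-norm Taylor expansion yields $\E[Y_s]=F+\sum_{j\le k}c_j\mu_j(s)+O(F(d^2/s)^\alpha)$. Richardson extrapolation over batch sizes $2^\ell m$ cancels the integer-order bias terms, and the variance is bounded by a spherical Poincar\'e inequality.
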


All the sample complexity upper bounds in \cref{thm:main} match the lower bounds in \cite{Wan26} up to polylogarithmic factors in $d$ and $1/\varepsilon$.
We summarize the sample complexities of estimating the quantum R\'enyi and Tsallis entropies in \cref{tab1}. 

\begin{table}[t]
    \centering

    \begin{threeparttable}
    \caption{Sample complexity of estimating quantum R\'enyi and Tsallis entropies.}
    \label{tab1}\vspace{2mm}
    \begin{tabular}{ccc}
        \toprule
         Order & R\'enyi Entropy & Tsallis Entropy \\
         \midrule
         $0 < \alpha \leq 1/2$ & \multicolumn{2}{c}{\begin{tabular}{c}$\widetilde{\Theta}\rbra*{\dfrac{d^{1+1/\alpha}}{\varepsilon^{1/\alpha}}}$ \\ \addlinespace[2pt] This Work$^{\uparrow}$ and \cite{Wan26}$^{\downarrow}$ \end{tabular}} \\
         \midrule
         $1/2 < \alpha < 1$ & \begin{tabular}{c}
            $\widetilde{\Theta}\rbra*{\dfrac{d^{1+1/\alpha}}{\varepsilon^{1/\alpha}}+\dfrac{d^{1/\alpha-1}}{\varepsilon^2}}$ \\ \addlinespace[2pt] This Work$^\uparrow$ and \cite{Wan26}$^\downarrow$
         \end{tabular} & \begin{tabular}{c}
            $\widetilde{\Theta}\rbra*{\dfrac{d^{1+1/\alpha}}{\varepsilon^{1/\alpha}}+\dfrac{d^{2-2\alpha}}{\varepsilon^2}}$ \\ \addlinespace[2pt] This Work$^\uparrow$ and \cite{Wan26}$^\downarrow$
         \end{tabular} \\
         \midrule
         \begin{tabular}{c}$\alpha = 1$ \\ (von Neumann) \end{tabular} & \multicolumn{2}{c}{\begin{tabular}{c}
            $\widetilde{\Theta}\rbra*{\dfrac{d^2}{\varepsilon} + \dfrac{1}{\varepsilon^2}}$ \\ \addlinespace[2pt]
            \cite{BMW16,OW17}$^\uparrow$ and \cite{Wan26}$^\downarrow$
         \end{tabular}} \\
         \midrule
         $1 < \alpha < 2$ & \multirow{2}{*}{\begin{tabular}{c}$\widetilde{\Theta}\rbra*{\dfrac{d^2}{\varepsilon^{1/\alpha}}+\dfrac{d^{1-1/\alpha}}{\varepsilon^2}}$ \\ \addlinespace[2pt]
             This Work$^\uparrow$ and \cite{Wan26}$^\downarrow$
             \end{tabular}} 
             & \begin{tabular}{c} $\widetilde{\Theta}\rbra*{\dfrac{1}{\varepsilon^{2/\rbra{\alpha-1}}}}$ \\ \addlinespace[2pt]
              \cite{CW25}$^\uparrow$ and \cite{Wan26}$^\downarrow$ \end{tabular} \\
         \cmidrule{1-1}\cmidrule{3-3}
         Non-integer $\alpha > 2$ & & \begin{tabular}{c} $\widetilde{\Theta}\rbra*{\dfrac{1}{\varepsilon^{2}}}$ \\ \addlinespace[2pt]
              \cite{CW25}$^{\uparrow\downarrow}$ \end{tabular} \\
         \midrule
         Integer $\alpha \geq 2$ & \begin{tabular}{c} ${\Theta}\rbra*{\dfrac{d^{2-2/\alpha}}{\varepsilon^{2/\alpha}}+\dfrac{d^{1-1/\alpha}}{\varepsilon^2}}$ \\ \addlinespace[2pt]
              \cite{AISW20}$^{\uparrow\downarrow}$ \end{tabular} & \begin{tabular}{c} $\Theta\rbra*{\dfrac{1}{\varepsilon^{2}}}$ \\ \addlinespace[2pt]
              \cite{BCWdW01,EAO+02}$^\uparrow$ \\
              \cite{CWLY23,GHYZ26,CWYZ26}$^\downarrow$ \end{tabular} \\
        \bottomrule
    \end{tabular}
    \begin{tablenotes}
        \item[$\uparrow$] References for (near-)optimal upper bounds.
        \item[$\downarrow$] References for (near-)optimal lower bounds.
    \end{tablenotes}
    \end{threeparttable}
\end{table}

\subsection{Techniques}

Our estimators are different from the known ones in the literature \cite{AISW20,LW26,CW25,CLW26}. 
In comparison, the previous estimators build on weak Schur sampling \cite{CHW07} while our estimators build on specific quantum state tomography with new information-theoretic inequalities. 
Our estimators for $0 < \alpha < 1$ and $\alpha > 1$ are different. 
So we introduce them below separately. 

\paragraph{Some notations and facts.}
We use $\operatorname{Clip}_{[u, v]}\rbra{x} = \min\cbra{\max\cbra{x, u}, v}$ to ensure an estimate in a valid range, given the bounds
\begin{equation}
    \label{eq:global-physical-range}
    \mathrm{F}_\alpha\rbra{\rho} = \tr\rbra{\rho^\alpha} \in \begin{cases}
        \sbra{1, d^{1-\alpha}} & 0 < \alpha < 1, \\
        \sbra{d^{1-\alpha}, 1} & \alpha > 1.
    \end{cases}
\end{equation}
To estimate the R\'enyi entropy, we estimate $\mathrm{F}_\alpha\rbra{\rho}$ to a relative error, using the following fact.
\begin{fact} \label{lem:global-log-conversion}
    For $F, \widehat{F}, \varepsilon > 0$, if $\abs{\widehat{F} - F} \leq \rbra{1 - e^{-\varepsilon}} F$, then $\abs{\log\rbra{F} - \log\rbra{\widehat{F}}} \leq \varepsilon$.
\end{fact}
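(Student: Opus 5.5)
The plan is to rewrite the hypothesis as a two-sided multiplicative bound on $\widehat{F}/F$ and then take logarithms. Dividing $\abs{\widehat{F} - F} \leq \rbra{1 - e^{-\varepsilon}} F$ by $F > 0$ gives
\[
e^{-\varepsilon} \leq \frac{\widehat{F}}{F} \leq 2 - e^{-\varepsilon}.
\]
The lower bound yields $\log\rbra{F} - \log\rbra{\widehat{F}} \leq \varepsilon$ directly, since $\log$ is increasing. Positivity of $\widehat{F}$ is part of the hypotheses, so both logarithms are well defined.

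For the other direction we need $\log\rbra{\widehat{F}/F} \leq \varepsilon$. By the upper bound on the ratio, it suffices to show $2 - e^{-\varepsilon} \leq e^{\varepsilon}$. This is equivalent to $e^{\varepsilon} + e^{-\varepsilon} \geq 2$, which is AM--GM, or equivalently $\cosh\rbra{\varepsilon} \geq 1$. Combining the two directions gives $\abs{\log\rbra{F} - \log\rbra{\widehat{F}}} \leq \varepsilon$.

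The only step that needs any care is the upper direction, where the interval allowed by the hypothesis is asymmetric in log-scale. The AM--GM inequality handles this immediately, so no step poses a real obstacle.
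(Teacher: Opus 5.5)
Your proposal is correct and follows the paper's proof essentially step for step. Both divide by $F$ to get $e^{-\varepsilon} \leq \widehat{F}/F \leq 2 - e^{-\varepsilon}$, use $e^{\varepsilon} + e^{-\varepsilon} \geq 2$ to bound the upper end by $e^{\varepsilon}$, and then take logarithms.
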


\paragraph{The case of $0 < \alpha < 1$.}
Our estimator is based on the new inequality involving the Bures $\chi^2$-divergence (see \cref{lem:one-step-bias}): For $d$-dimensional density operators $\rho$ and $\sigma$ with $\sigma \succeq \frac{d}{n}I$, 
\[
0 \leq \rbra*{1-\alpha}\tr\rbra*{\sigma^\alpha} +\alpha\tr\rbra*{\rho\sigma^{\alpha-1}} -\tr\rbra{\rho^\alpha} \leq \rbra*{\frac{d}{n}}^{\alpha-1} \mathrm{D}_{\chi^2}\rbra{\rho\,\|\,\sigma}.
\]
This suggests that we take $\rbra{1-\alpha}\tr\rbra*{\sigma^\alpha} +\alpha\tr\rbra{\rho\sigma^{\alpha-1}}$ as an estimate of $\mathrm{F}_\alpha\rbra{\rho} = \tr\rbra{\rho^\alpha}$.
This estimate is good when the Bures $\chi^2$-divergence of $\rho$ to $\sigma$ is small. 
To this end, we use quantum state tomography in \cite{PSTW26}, which gives an estimate $\widehat{\rho}$ from $n$ samples of $\rho$; further analysis shows that taking $\sigma = \widehat{\rho} + \frac{2d}{n}I$ yields a small enough Bures $\chi^2$-divergence (see \cref{lem:regularized-chi2}):
\[
\mathrm{D}_{\chi^2}\rbra{\rho\,\|\,\sigma} \leq O\rbra*{\frac{d^2}{n}}.
\]
With these, the estimator then consists of two components: (i) estimate $\rbra{1-\alpha}\tr\rbra{\sigma^\alpha}$, which can be calculated directly from the known $\sigma$; and (ii) estimate $\alpha\tr\rbra{\rho\sigma^{\alpha-1}}$, which can be done by measuring the observable $\sigma^{\alpha-1}$ on $\rho$.
Let $m$ be the number of samples of $\rho$ used in step (ii). 
The total sample complexity is then simply $n + m$. 
By choosing $n$ and $m$ appropriately with detailed error analysis, we can yield the sample complexity upper bound presented in \cref{thm:main} for $0 < \alpha < 1$. 

\paragraph{The case of $\alpha > 1$.}
In this case, our estimator is based on a new analysis of Hayashi's pure state estimation algorithm~\cite{Hay98} equipped with random purification techniques~\cite{tang2025conjugate}. 
Suppose $\rho$ is an unknown state and $\ket{\psi}$ is a random purification of $\rho$.
Now, we consider a batch of $s$ samples of $\ket{\psi}$.
Let $\ket{\widehat{\psi}}$ be the output of Hayashi's state estimation algorithm applied on $\ket{\psi}^{\otimes s}$. 
We define a one-batch estimate of $F\coloneqq \tr\rbra*{\rho^\alpha}$ with batch size $s$ as:
\[Y_s\coloneqq \tr(\widehat\rho^\alpha),\quad \textup{where } \widehat{\rho}=\tr_{\mathrm{anc}}\rbra{\ket{\widehat{\psi}}\!\bra{\widehat{\psi}}}.\]
Central to our proof is the following expansion for the bias (see \cref{gt:prop:one-batch}):
\begin{equation*}
 \E \sbra{Y_s} - F
 =\sum_{j=1}^k c_j\mu_j(s)+R(s),
 \end{equation*}
where $c_j$ are $s$-independent coefficients, $R(s)$ is a small remainder with $|R(s)|=O(F\cdot (\frac{d^2}{s})^\alpha)$ and $\mu_j(s)$ is the $j$-th moment of beta distribution parameterized by $s$.
Then, we repeat the above process for many times for different batch sizes $s_1,\ldots,s_k$.
The final estimator for $F$ is defined as $\widehat{F}\coloneqq \sum_{i=1}^k a_i Y_{s_i}$, where $a_1,\ldots,a_k$ are carefully chosen coefficients such that $\sum_{i=1}^k a_i \mu_{j}(s_i)=0$.
This means all the $i$-th order term for $i\leq k$ can be canceled and the bias of $\widehat{F}$ is just a linear combination of the remainders: $\widehat{F}-F=\sum_i a_i R(s_i)$.
As $\sum_i |a_i|$ can also be constantly bounded, we can show that the bias $|\widehat{F}-F|$.
With a similar but more complicated analysis, the variance of $\widehat{F}$ can also be well bounded (see \cref{gt:prop:combined}).
These produce a good relative error estimate of $F$, and thus a good additive estimate of the R\'enyi entropy.

\subsection{Related work}

The estimation of quantum R\'enyi and Tsallis entropies has also been studied in other settings. 
In the incoherent setting, the estimation of quantum R\'enyi entropy of integer order was studied in \cite{PTTW26}. 
Given purified quantum query access, quantum R\'enyi entropy estimation was studied in \cite{SH21,WGL+24,WZL24} and quantum Tsallis entropy estimation was studied in \cite{LW26,Wan25};
for the special case of order $\alpha = 1$ (von Neumann entropy), von Neumann entropy estimation was studied in \cite{GL20,GHS21,WGL+24}.

\section{R\'enyi and Tsallis Entropy with \texorpdfstring{$0 < \alpha < 1$}{0 < α < 1}}

Fix $0<\alpha<1$.  
Both of our estimators for $\alpha$-R\'enyi and $\alpha$-Tsallis entropies first estimate $\mathrm{F}_\alpha\rbra{\rho}$ in the same framework, though with different parameter choices. 
The Tsallis entropy
estimator requires additive error, whereas the R\'enyi estimator requires
relative error.
The main results of this section are stated as follows. 

\begin{theorem} [Quantum R\'enyi and Tsallis entropy estimation for $0 < \alpha < 1$] \label{thm:leq1}
    For $0 < \alpha < 1$, given sample access to an unknown $d$-dimensional quantum state $\rho$ and the additive error $\varepsilon$, we can estimate the $\alpha$-R\'enyi entropy $\mathrm{S}_\alpha^{\mathrm{R}}\rbra{\rho}$ and Tsallis entropy $\mathrm{S}_\alpha^{\mathrm{T}}\rbra{\rho}$ using 
        \[
        O\rbra*{\frac{d^{1+1/\alpha}}{\varepsilon^{1/\alpha}}+\frac{d^{1/\alpha-1}}{\varepsilon^2}}, \qquad O\rbra*{\frac{d^{1+1/\alpha}}{\varepsilon^{1/\alpha}}+\frac{d^{2-2\alpha}}{\varepsilon^2}},
        \]
        samples of $\rho$, respectively. 
\end{theorem}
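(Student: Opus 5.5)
We follow the two-stage framework from the techniques overview. In the first stage we spend $n$ samples of $\rho$ on the tomography algorithm of \cite{PSTW26}, which returns an estimate $\widehat{\rho}$. We then set
\[
\sigma=\frac{\widehat\rho+\frac{2d}{n}I}{1+\frac{2d^2}{n}}.
\]
Normalizing in this way keeps $\sigma$ a density operator with $\sigma\succeq\frac{d}{n}I$ whenever $n\ge 2d^2$. In the second stage we use $m$ fresh samples. On each one we measure the observable $\sigma^{\alpha-1}$ in its eigenbasis. Let $Z$ be the empirical mean of the outcomes. The estimate of $\mathrm{F}_\alpha(\rho)$ is
\[
\widehat F=\operatorname{Clip}_{[1,d^{1-\alpha}]}\bigl((1-\alpha)\tr(\sigma^\alpha)+\alpha Z\bigr).
\]
For Tsallis entropy we output $(\widehat F-1)/(1-\alpha)$, so we need $\abs{\widehat F-F}\le(1-\alpha)\varepsilon$. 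For R\'enyi entropy we apply \cref{lem:global-log-conversion}, so it suffices to have $\abs{\widehat F-F}\le c\varepsilon F$.

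\textbf{Bias.} Conditioned on $\sigma$, $\E[Z]=\tr(\rho\sigma^{\alpha-1})$. Combining \cref{lem:one-step-bias} with \cref{lem:regularized-chi2}, the conditional bias satisfies
\[
0\le \E[\widehat F\mid\sigma]-F\le (d/n)^{\alpha-1}\,O(d^2/n)=O\bigl(d^{1+\alpha}/n^{\alpha}\bigr),
\]
at least in expectation over the tomography. This bias is at most $\varepsilon$ once $n=\Omega(d^{1+1/\alpha}/\varepsilon^{1/\alpha})$. That is the first term in both bounds, and it also covers $n\ge 2d^2$ whenever $\varepsilon\lesssim d^{1-\alpha}$, which holds in the nontrivial regime. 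For R\'enyi entropy we need relative error $\varepsilon F$ with $F\ge1$, so the same choice of $n$ suffices. If \cref{lem:regularized-chi2} bounds the divergence only in expectation, we use Markov's inequality and a median of a constant number of repetitions. The normalization factor $1+2d^2/n$ changes $\tr(\sigma^\alpha)$ and $\tr(\rho\sigma^{\alpha-1})$ only by a factor $1+O(d^2/n)$. I will either check that this perturbation is absorbed into the bias bound, or work with the unnormalized $\sigma$ if \cref{lem:one-step-bias} tolerates subnormalized operators.

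\textbf{Variance.} This is where the Tsallis and R\'enyi bounds separate, and I expect it to be the main obstacle. Each single-shot outcome has variance at most $\tr(\rho\sigma^{2\alpha-2})$. The crude bound is $\lambda_{\min}(\sigma)^{\alpha-1}\tr(\rho\sigma^{\alpha-1})\le (n/d)^{1-\alpha}\tr(\rho\sigma^{\alpha-1})$, which is too lossy. The better route is to relate $\tr(\rho\sigma^{2\alpha-2})$ to $\tr(\sigma^{2\alpha-1})$, using the same $\chi^2$ control: the mismatch is bounded by $\mathrm{D}_{\chi^2}$ times powers of $\lambda_{\min}$. Since $2\alpha-1<1$, $\tr(\sigma^{2\alpha-1})\le d^{2-2\alpha}$ by the physical-range bound \eqref{eq:global-physical-range}. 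That gives variance $O(d^{2-2\alpha})$ plus lower-order terms, and taking $m=O(d^{2-2\alpha}/\varepsilon^2)$ yields the Tsallis bound. For R\'enyi entropy we need the ratio $\tr(\rho\sigma^{2\alpha-2})/F^2$. We bound $\tr(\sigma^{2\alpha-1})/\tr(\sigma^\alpha)^2\le d^{1/\alpha-1}$ by H\"older's inequality: for $\alpha\ge1/2$, with unit trace, $\tr(\sigma^{2\alpha-1})\le \tr(\sigma^\alpha)^{(2\alpha-1)/\alpha}\,d^{(1-\alpha)/\alpha}$, and $\tr(\sigma^\alpha)\ge1$. This gives $m=O(d^{1/\alpha-1}/\varepsilon^2)$. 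For $\alpha\le1/2$ the exponent $2\alpha-1\le0$, so $\tr(\sigma^{2\alpha-1})$ must instead be bounded using $\sigma\succeq\frac dn I$. The resulting $m$ should then be dominated by $n$, consistent with the ``in particular'' clause of \cref{thm:main}. The delicate point is controlling the error terms from replacing $\rho$ by $\sigma$ in this second-moment quantity, with exponents between $-1$ and $0$. I would handle it first with the operator-monotone/convexity argument behind \cref{lem:one-step-bias}, applied to $x\mapsto x^{2\alpha-1}$.

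Finally, Chebyshev's inequality together with the bias bound gives constant success probability, which the median trick boosts. The total sample count is $n+m$.
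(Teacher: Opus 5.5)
Your estimator, your bias analysis via \cref{lem:one-step-bias} and \cref{lem:regularized-chi2}, and your decomposition $\tr(\rho\sigma^{2\alpha-2})=\tr(\sigma^{2\alpha-1})+\sum_i u_i s_i^{2\alpha-2}$ with $u_i=\bra{v_i}(\rho-\sigma)\ket{v_i}$ all follow the paper. The paper keeps the unnormalized $\sigma=\widehat\rho+\frac{2d}{n}I$, which \cref{lem:one-step-bias} allows. Your normalization only gives $\sigma\succeq\frac{d}{2n}I$ and is unnecessary. The $\chi^2$ bound also already holds on a probability-$0.99$ event, so no Markov or median step is needed.

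The gap is the cross term $\sum_i u_i s_i^{2\alpha-2}$ when $0<\alpha\le 1/2$. You propose to handle it by applying the concavity argument of \cref{lem:one-step-bias} to $x\mapsto x^{2\alpha-1}$. For $1/2<\alpha<1$ this works: it gives $(2\alpha-1)\tr(\rho\sigma^{2\alpha-2})\le\tr(\rho^{2\alpha-1})+(d/n)^{2\alpha-2}\mathrm{D}_{\chi^2}(\rho\,\|\,\sigma)$, and the last term is $O(d^{2\alpha}n^{1-2\alpha})$. That yields $O_\alpha(d^{2-2\alpha})$ when $n\ge d^2$, and by H\"older on $\rho$ also $O_\alpha(d^{1/\alpha-1}F^2)$ when $n\ge d^{1+1/\alpha}$. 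This is a valid alternative to the paper's Cauchy--Schwarz argument with its case split at $\alpha=3/4$.

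For $\alpha<1/2$, however, $x\mapsto x^{2\alpha-1}$ is convex. The first-order inequality then only lower-bounds $\tr(\rho\sigma^{2\alpha-2})$. In addition, $\tr(\rho^{2\alpha-1})=\infty$ for singular $\rho$, and the dominating function $(1-t)^{2\alpha-2}$ in the Taylor remainder is no longer integrable. At $\alpha=1/2$ the coefficient of $\tr(\rho\sigma^{-1})$ vanishes. So for $\alpha\le1/2$ you have bounded only the main term $\tr(\sigma^{2\alpha-1})\le d(d/n)^{2\alpha-1}$. Your heuristic that the mismatch is \emph{$\mathrm{D}_{\chi^2}$ times powers of $\lambda_{\min}$} cannot hold as stated either: the cross term is linear in $\rho-\sigma$, while $\mathrm{D}_{\chi^2}$ is quadratic.

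The missing ingredient, which the paper uses in every regime, is Cauchy--Schwarz against the diagonal part of the Bures divergence. The diagonal terms of \cref{def:chi2} give $\sum_i u_i^2/s_i\le\mathrm{D}_{\chi^2}(\rho\,\|\,\sigma)=O(d^2/n)$. Hence $\abs{\sum_i u_i s_i^{2\alpha-2}}\le(\sum_i u_i^2/s_i)^{1/2}(\sum_i s_i^{4\alpha-3})^{1/2}=O(d(d/n)^{2\alpha-1})$ for $\alpha\le 1/2$, using $s_i\ge d/n$ and $4\alpha-3<0$.

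A weighted AM--GM version, combined with $s_i^{4\alpha-3}\le\lambda_{\min}^{2\alpha-2}s_i^{2\alpha-1}$, gives $\abs{\sum_i u_i s_i^{2\alpha-2}}\le\frac12\lambda_{\min}(\sigma)^{2\alpha-2}\mathrm{D}_{\chi^2}(\rho\,\|\,\sigma)+\frac12\tr(\sigma^{2\alpha-1})$. This is the correct form of your heuristic, and it handles all three regimes at once.

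One further step is implicit in your R\'enyi argument. If you apply H\"older to $\sigma$, you also need $\tr(\sigma^\alpha)=O_\alpha(F)$. This follows from \cref{lem:one-step-bias}: $(1-\alpha)\tr(\sigma^\alpha)\le F+O(d^{1+\alpha}/n^\alpha)$, together with $F\ge1$.
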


\subsection{The estimator}

Our estimator is based on the quantum state tomography provided in \cite{PSTW26} which ensures a good precision in Bures $\chi^2$-divergence. 

\begin{lemma}[Bures $\chi^2$-divergence tomography, {\cite[Corollary 4.10 and Lemma 6.9]{PSTW26}}] \label{lemma:bures-chi2-tomo}
    There is a quantum algorithm $\mathsf{TomoBures\chi^2}\rbra{\rho, d, n}$ that, on input $n$ samples of an unknown $d$-dimensional quantum state $\rho$, outputs a $d$-dimensional Hermitian matrix 
    \begin{equation} \label{eq:new-hatrho-spectrum}
    \widehat{\rho} = \sum_{i=1}^d \widehat{\lambda}_i \ketbra{v_i}{v_i} \succeq -\frac{d}{n} I,
    \qquad
    \widehat{\lambda}_1 \geq \cdots \geq \widehat{\lambda}_d,
    \end{equation}
    with $\tr\rbra{\widehat{\rho}} = 1$ such that with probability at least $0.99$, the following holds:
    \begin{enumerate}
        \item Event $E$: For every unit vector $\ket{w} \in \mathbb{C}^d$, 
        \[
        \abs{\bra{w} \rbra{\widehat{\rho} - \rho} \ket{w}} \leq C \sqrt{\frac{d}{n}\rbra*{\bra{w} \widehat{\rho} \ket{w} + \frac{d}{n}}},
        \]
        where $C > 0$ is a universal constant. 
        \item On event $E$, for every $m \in \sbra{d}$, 
        \[
        \sum_{\substack{i, j \in \sbra{d} \\ \min\cbra{i,j} = m}} \abs*{ \bra{v_i} \rbra*{\widehat{\rho} - \rho} \ket{v_j} }^2 \leq 2C^2 \frac{d}{n} \rbra*{\widehat{\lambda}_m + \frac{d}{n}}.
        \]
    \end{enumerate}
\end{lemma}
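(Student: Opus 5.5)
The plan is to take the algorithm, the trace and spectrum conditions \eqref{eq:new-hatrho-spectrum}, and event $E$ directly from \cite[Corollary 4.10]{PSTW26}, and then to derive the second item from event $E$ alone. This is the content of \cite[Lemma 6.9]{PSTW26}; I would reprove it here because the argument is short. Item~1 is essentially a black box: $\mathsf{TomoBures\chi^2}$ is the PSTW26 tomography algorithm with the multiplicative-plus-additive per-direction guarantee. The only things to check against the cited statement are the normalization $\tr(\widehat\rho)=1$, the bound $\widehat\rho\succeq-\frac{d}{n}I$, and the success probability $0.99$. If the cited version is stated with a different confidence level or constant, I would boost it or absorb the difference into $C$.

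For item~2, fix $m\in[d]$ and condition on $E$. Let $\Delta=\widehat\rho-\rho$, which is Hermitian. Let $P_m$ be the orthogonal projector onto $S_m=\mathrm{span}\{\ket{v_m},\dots,\ket{v_d}\}$. Every unit vector $\ket{w}\in S_m$ satisfies $\bra{w}\widehat\rho\ket{w}\le\widehat\lambda_m$, because $\widehat\lambda_m$ is the largest eigenvalue of $\widehat\rho$ restricted to $S_m$. Event $E$ then gives, for all unit $\ket{w}\in S_m$,
\[
\abs{\bra{w}P_m\Delta P_m\ket{w}}\le \tau_m\coloneqq C\sqrt{\tfrac{d}{n}\rbra*{\widehat\lambda_m+\tfrac{d}{n}}}.
\]
Since $P_m\Delta P_m$ is Hermitian and supported on $S_m$, its operator norm equals the supremum of its numerical range on $S_m$. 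Hence $\Abs{P_m\Delta P_m}\le\tau_m$.

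Now split the index set $\{(i,j):\min\{i,j\}=m\}$ into the pairs $(m,j)$ with $j\ge m$ and the pairs $(i,m)$ with $i>m$. By Hermiticity $\abs{\bra{v_i}\Delta\ket{v_m}}=\abs{\bra{v_m}\Delta\ket{v_i}}$, so each part is at most
\[
\sum_{j\ge m}\abs{\bra{v_j}\Delta\ket{v_m}}^2=\Abs{P_m\Delta P_m\ket{v_m}}^2\le\tau_m^2 .
\]
Adding the two parts gives $2\tau_m^2=2C^2\frac{d}{n}(\widehat\lambda_m+\frac{d}{n})$, which is exactly the claimed bound.

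The main potential obstacle is the passage from the quadratic-form guarantee of event $E$ to control of the off-diagonal entries. The point that makes it work is to restrict to the tail eigenspace $S_m$, where $\bra{w}\widehat\rho\ket{w}$ is uniformly bounded by $\widehat\lambda_m$. This uniform bound is what turns the numerical-range estimate into an operator-norm estimate. No other difficulty is expected; the rest reduces to faithfully citing \cite[Corollary 4.10]{PSTW26}.
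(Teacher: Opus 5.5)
Your proposal is correct and matches the paper, which gives no argument of its own: it imports the algorithm, the trace and spectral conditions, and event $E$ from \cite[Corollary 4.10]{PSTW26}, and item~2 from \cite[Lemma 6.9]{PSTW26}. Your self-contained derivation of item~2 is valid. On the tail eigenspace $S_m$ one has $\bra{w}\widehat\rho\ket{w}\le\widehat\lambda_m$, so event $E$ bounds the numerical range, and hence the operator norm, of the Hermitian compression $P_m(\widehat\rho-\rho)P_m$ by $\tau_m$; counting $\Abs{P_m(\widehat\rho-\rho)\ket{v_m}}^2$ twice then gives exactly $2C^2\frac{d}{n}\bigl(\widehat\lambda_m+\frac{d}{n}\bigr)$.
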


Let $n$ be the sample size for tomography and let $m$ be the sample size for bias correction.  The two batches are disjoint.
The formal description of our estimator is given in \cref{alg:powertrace}.

\begin{algorithm}[htbp]
\caption{$\mathsf{PowerTrace}\rbra{\rho,d,\alpha,n,m}$}
\label{alg:powertrace}
\begin{algorithmic}[1]
\Require Independent samples of an unknown $d$-dimensional quantum state $\rho$;
$0<\alpha<1$; positive integers $n,m$.
\Ensure An estimate $\widehat{\mathrm{F}}_\alpha\in[1,d^{1-\alpha}]$ of $\mathrm{F}_\alpha\rbra{\rho} = \tr\rbra{\rho^\alpha}$.
\State $\widehat{\rho} \gets \mathsf{TomoBures\chi^2}\rbra{\rho,d,n}$, using $n$ samples of $\rho$. \Comment{see \cref{lemma:bures-chi2-tomo}}
\State $\sigma\gets\widehat{\rho}+\frac{2d}{n}I$, and decompose
$\sigma=\sum_{i=1}^d s_i\ketbra{v_i}{v_i}$.
\For{$j=1,\ldots,m$}
  \State Measure a sample of $\rho$ in the basis $\cbra*{\ketbra{v_i}{v_i}}_{i=1}^d$ and let $J_j$ be the outcome.
  \State Set $X_j\gets s_{J_j}^{\alpha-1}$.
\EndFor
\State $\widetilde{\mathrm{F}}_\alpha
 \gets
 \rbra{1-\alpha}\tr\rbra{\sigma^\alpha}
 +\frac{\alpha}{m}\sum_{j=1}^m X_j$.
\State \Return $\widehat{\mathrm{F}}_\alpha \gets \operatorname{Clip}_{[1, d^{1-\alpha}]}\rbra{\widetilde{\mathrm{F}}_\alpha}$. 
\end{algorithmic}
\end{algorithm}

\cref{alg:powertrace} gives estimates with different precisions depending on the choice of $n$ and $m$.
Below we collect the error results for our purposes. 

\begin{lemma}
\label{thm:common-estimator}
Fix $0<\alpha<1$.  There are constants
$C_\alpha^{(0)},C_\alpha^{(1)}\geq1$, depending only on $\alpha$, for
which the following holds.  Let $d\geq2$, let $\rho$ be a
$d$-dimensional density operator, and denote $F:=\mathrm{F}_\alpha\rbra{\rho}=\tr\rbra{\rho^\alpha}$.
Let $\widehat{\mathrm{F}}_\alpha$ be the output of $\mathsf{PowerTrace}\rbra{\rho,d,\alpha,n,m}$ defined in \cref{alg:powertrace}, where $n = \ceil{C_\alpha^{(0)}d^{1+1/\alpha}/\varepsilon^{1/\alpha}}$, and $m$ is to be specified below. 
Then, 
\begin{enumerate}[(i)]
 \item If $0<\alpha\leq1/2$, then set $m=\ceil{C_\alpha^{(1)}d^{1/\alpha-1}/\varepsilon^{1/\alpha}}$, which gives $\Pr\sbra{\abs{\widehat{\mathrm{F}}_{\alpha}-F} \leq \varepsilon} \geq 0.98$.
 
 \item If $1/2<\alpha<1$, then set $m=\ceil{C_\alpha^{(1)}d^{2-2\alpha}/\varepsilon^{2}}$,
 which gives $\Pr\sbra{\abs{\widehat{\mathrm{F}}_\alpha-F}\leq\varepsilon} \geq 0.98$.

 \item If $1/2<\alpha<1$, then, set $m=\ceil{C_\alpha^{(1)}d^{1/\alpha-1}/\varepsilon^{2}}$, which gives $\Pr\sbra{\abs{\widehat{\mathrm{F}}_\alpha-F}\leq\varepsilon F} \geq 0.98$.
\end{enumerate}
\end{lemma}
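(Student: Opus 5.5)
We split the error into a bias part and a fluctuation part, conditioning throughout on the tomography output $\widehat{\rho}$. On the event $E$ of \cref{lemma:bures-chi2-tomo} (probability $\geq 0.99$), the matrix $\sigma=\widehat{\rho}+\frac{2d}{n}I$ satisfies $\sigma\succeq\frac{d}{n}I$, and $\tr\sigma = 1+\frac{2d^2}{n}$. Write $B:=\rbra{1-\alpha}\tr\rbra{\sigma^\alpha}+\alpha\tr\rbra{\rho\sigma^{\alpha-1}}$. Conditioned on $\widehat{\rho}$, the $X_j$ are i.i.d.\ with mean $\tr\rbra{\rho\sigma^{\alpha-1}}$, so $\E\sbra{\widetilde{\mathrm F}_\alpha\mid\widehat\rho}=B$.

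\textbf{Bias.} We invoke the inequality of \cref{lem:one-step-bias} together with \cref{lem:regularized-chi2}. The first bounds the bias by $\rbra{d/n}^{\alpha-1}\mathrm D_{\chi^2}\rbra{\rho\,\|\,\sigma}$, and the second gives $\mathrm D_{\chi^2}\rbra{\rho\,\|\,\sigma}=O\rbra{d^2/n}$. Together,
\[
0\le B-F\le O\rbra*{\frac{d^{1+\alpha}}{n^{\alpha}}}.
\]
With $n=\ceil{C_\alpha^{(0)}d^{1+1/\alpha}/\varepsilon^{1/\alpha}}$ this is at most $\varepsilon/2$ for large $C_\alpha^{(0)}$. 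This settles the additive cases (i) and (ii). For (iii) we need $\varepsilon F/2$, which follows from $F\geq1$.

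\textbf{Variance.} Conditioned on $\widehat\rho$,
\[
\Var\sbra{X_j}\le \E\sbra{X_j^2}=\tr\rbra{\rho\sigma^{2\alpha-2}}=\sum_i \bra{v_i}\rho\ket{v_i}s_i^{2\alpha-2}.
\]
We control this sum using event $E$, which gives $\bra{v_i}\rho\ket{v_i}\le \widehat\lambda_i+O\rbra{\sqrt{(d/n)(\widehat\lambda_i+d/n)}}=O\rbra{s_i}$ since $s_i\geq \widehat\lambda_i+2d/n$ and $s_i\ge d/n$. Hence
\[
\E\sbra{X_j^2}\le O\rbra*{\sum_i s_i^{2\alpha-1}}.
\]
We then bound $\sum_i s_i^{2\alpha-1}$ according to the sign of $2\alpha-1$.
\begin{itemize}
\item For $\alpha>1/2$, Hölder/concavity with $\sum_i s_i\le 1+2d^2/n=O(1)$ gives $\sum_i s_i^{2\alpha-1}\le d^{2-2\alpha}\cdot O(1)$. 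Chebyshev with $m=\Theta\rbra{d^{2-2\alpha}/\varepsilon^2}$ then gives fluctuation $\le\varepsilon/2$ with probability $0.99$, which is (ii).
\item For (iii) the fluctuation must instead be $\le \varepsilon F/2$, which needs the sharper bound $\E\sbra{X_j^2}=O\rbra{F^2 d^{1/\alpha-1}}$. For this we use $X_j\le s_{\min}^{\alpha-1}$ only partially. Specifically, $\sum_i s_i^{2\alpha-1}\le \max_i s_i^{\alpha-1}\cdot\sum_i s_i^{\alpha}$, and $\sum_i s_i^\alpha\approx \tr\sigma^\alpha\lesssim F+\varepsilon$ (again by the bias inequality, since $\tr\rho\sigma^{\alpha-1}\ge0$). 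Meanwhile $s_i^{\alpha-1}\le (d/n)^{\alpha-1}$ is too large, so instead we interpolate: split the indices into $s_i\geq \tau$ and $s_i<\tau$. The first group contributes at most $\tau^{\alpha-1}\tr\sigma^\alpha$. The second contributes at most $d\tau^{2\alpha-1}$. We then optimize $\tau$ using $F\ge1$ and $F\le d^{1-\alpha}$; the target is $d^{1/\alpha-1}F^2$. The choice $\tau\approx (F/d)^{1/\alpha}$-ish appears to yield it, since for $\rho$ near maximally mixed the sum is $\approx d^{2-2\alpha}=F^2$, which is consistent.
\item For $\alpha\le1/2$, the exponent $2\alpha-1\le 0$ makes small $s_i$ dominate, giving $\sum_i s_i^{2\alpha-1}\le d\,(d/n)^{2\alpha-1}$. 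With the chosen $n$ and $m=\Theta\rbra{d^{1/\alpha-1}/\varepsilon^{1/\alpha}}$, the conditional variance $\E[X^2]/m$ becomes $O(d^{2}(d/n)^{2\alpha-1}\varepsilon^{1/\alpha}/d^{1/\alpha})$. Substituting $n$ yields $O(\varepsilon^{2})$ after checking exponents: $d^{2+(2\alpha-1)(-1/\alpha)}\cdot\varepsilon^{(2\alpha-1)/\alpha}\cdot\varepsilon^{1/\alpha}d^{1-1/\alpha}=d^{0}\varepsilon^{2}$. This is (i).
\end{itemize}

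\textbf{Conclusion.} A union bound over event $E$ (failure probability $0.01$) and Chebyshev (failure probability $0.01$) gives success probability $0.98$. Clipping to $[1,d^{1-\alpha}]$ only reduces the error, since $F$ lies in that interval.

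\textbf{Main obstacle.} The hardest step is the relative-variance bound in (iii), i.e.\ showing $\tr\rbra{\rho\sigma^{2\alpha-2}}=O\rbra{d^{1/\alpha-1}F^2}$. The truncation argument above is the first thing we would try, using the relation $\tr\sigma^\alpha= O(F)$ derived from the bias inequality.
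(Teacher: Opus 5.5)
Your proposal is correct. For the variance step it takes a genuinely different and shorter route than the paper.

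\textbf{Variance bound.} The paper writes $p_i=s_i+u_i$, so that
$\E[X^2]=\sum_i s_i^{2\alpha-1}+\sum_i u_i s_i^{2\alpha-2}$. It then controls the cross term by Cauchy--Schwarz against $\sum_i u_i^2/s_i\le\mathrm{D}_{\chi^2}(\rho\,\|\,\sigma)=O(d^2/n)$. This forces a case split on the sign of $4\alpha-3$ and uses the size of $n$ a second time.

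You instead apply property~1 of event $E$ with $\ket{w}=\ket{v_i}$. This gives the pointwise domination
$p_i\le\widehat\lambda_i+C\sqrt{(d/n)(\widehat\lambda_i+d/n)}\le(1+C)s_i$, which holds because $0\le\widehat\lambda_i+d/n\le s_i$ and $d/n\le s_i$. Hence $\E[X^2]\le(1+C)\sum_i s_i^{2\alpha-1}$ with no cross term at all, after which (i) and (ii) are one line each.

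What the paper's route buys is modularity. Its variance bound uses only the $\chi^2$ guarantee of \cref{lem:regularized-chi2}. From that guarantee alone one only gets $|u_i|\le\sqrt{\mathrm{D}_{\chi^2}(\rho\,\|\,\sigma)\,s_i}\le O(\sqrt d)\,s_i$, which loses a $\sqrt d$. So your argument genuinely relies on the per-direction bound of \cref{lemma:bures-chi2-tomo}, which happens to be available here.

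\textbf{Part (iii).} Your truncation does close, and you should state it as a proof rather than as a heuristic. Take $S=\tr(\sigma^\alpha)$ and $\tau=(S/d)^{1/\alpha}$. Then both pieces $\tau^{\alpha-1}S$ and $d\tau^{2\alpha-1}$ equal $d^{1/\alpha-1}S^{2-1/\alpha}$. This is the paper's H\"older bound up to a factor of $2$. Combining it with $(1-\alpha)S\le F+\varepsilon/2$ and $F\ge1$ gives $\E[X^2]=O_\alpha(d^{1/\alpha-1}F^{2-1/\alpha})\le O_\alpha(d^{1/\alpha-1}F^2)$. The choice $\tau=(F/d)^{1/\alpha}$ also works once $S=O(F)$ is in hand.

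\textbf{Part (i) bookkeeping.} The first factor in your exponent check should be $d^{1+(2\alpha-1)(-1/\alpha)}$, coming from $d\,(d/n)^{2\alpha-1}$. As written, your displayed product equals $d\,\varepsilon^2$ rather than $\varepsilon^2$. Also, since $2\alpha-1\le0$ there, the variance bound needs the upper bound $n\le 2C_\alpha^{(0)}d^{1+1/\alpha}\varepsilon^{-1/\alpha}$ supplied by the ceiling. Consequently $C_\alpha^{(1)}$ must be chosen after $C_\alpha^{(0)}$.
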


For readability, we postpone the proof of \cref{thm:common-estimator} in \cref{sec:proof-common-estimator}. 

\subsubsection{Tsallis entropy}
 
The formal description of our Tsallis entropy estimator is given in \cref{alg:tsallis}.

\begin{algorithm}[htbp]
\caption{$\mathsf{TsallisEntropy}_{<1}\rbra{\rho,d,\alpha,\varepsilon}$}
\label{alg:tsallis}
\begin{algorithmic}[1]
\Require Independent samples of an unknown $d$-dimensional quantum state $\rho$;
$0<\alpha<1$; $0<\varepsilon\leq1$.
\Ensure An estimate
$\widehat{\mathrm{S}}_\alpha^{\mathrm{T}}
 \in\sbra*{0,\rbra{d^{1-\alpha}-1}/\rbra{1-\alpha}}$
of $\mathrm{S}_\alpha^{\mathrm{T}}\rbra{\rho}$.
\State $\delta\gets\rbra{1-\alpha}\varepsilon$.
\State $n\gets
 \ceil{C_\alpha^{(0)}d^{1+1/\alpha}\delta^{-1/\alpha}}$.
\If{$0<\alpha\leq1/2$}
 \State $m\gets
  \ceil{C_\alpha^{(1)}d^{1/\alpha-1}\delta^{-1/\alpha}}$.
\Else
 \State $m\gets
  \ceil{C_\alpha^{(1)}d^{2-2\alpha}\delta^{-2}}$.
\EndIf
\State $\widehat{\mathrm{F}}_\alpha\gets
 \mathsf{PowerTrace}\rbra{\rho,d,\alpha,n,m}$.
\State \Return
 $\widehat{\mathrm{S}}_\alpha^{\mathrm{T}}
 \gets\dfrac{\widehat{\mathrm{F}}_\alpha-1}{1-\alpha}$.
\end{algorithmic}
\end{algorithm}

\begin{proof}[Proof of \cref{thm:leq1} (Tsallis)]
Let $F:=\mathrm{F}_\alpha\rbra{\rho}$.  The choices of $n$ and $m$ in
\cref{alg:tsallis} are those prescribed by
\cref{thm:common-estimator} with $\varepsilon \coloneqq \delta$.
Since $0<\delta\leq1$, parts~(i) and~(ii) of that lemma, for
$0<\alpha\leq1/2$ and $1/2<\alpha<1$, respectively, give $\Pr\sbra{\abs{\widehat{\mathrm{F}}_\alpha-F}\leq\delta}
 \geq0.98$.
On this event, the definition of $\mathrm{S}_\alpha^{\mathrm{T}}$ yields
\begin{align*}
 \abs*{\widehat{\mathrm{S}}_\alpha^{\mathrm{T}}
 -\mathrm{S}_\alpha^{\mathrm{T}}\rbra{\rho}}
 =\frac{1}{1-\alpha}
 \abs*{\widehat{\mathrm{F}}_\alpha-F}\leq\frac{\delta}{1-\alpha}
 =\varepsilon.
\end{align*}
Thus the success probability is at least $0.98>2/3$.

The total number of samples of $\rho$ is
\[
 n+m=O_\alpha\rbra*{
 \frac{d^{1+1/\alpha}}{\varepsilon^{1/\alpha}}
 +\frac{d^{2-2\alpha}}{\varepsilon^{2}} }.
\]
\end{proof}

\subsubsection{R\'enyi entropy}

The formal description of our R\'enyi entropy estimator is given in \cref{alg:renyi}.

\begin{algorithm}[htbp]
\caption{$\mathsf{RenyiEntropy}_{<1}\rbra{\rho,d,\alpha,\varepsilon}$}
\label{alg:renyi}
\begin{algorithmic}[1]
\Require Independent samples of an unknown $d$-dimensional quantum state $\rho$;
$0<\alpha<1$; $0<\varepsilon\leq1$.
\Ensure An estimate
$\widehat{\mathrm{S}}_\alpha^{\mathrm{R}}
 \in\sbra*{0,\log d}$
of $\mathrm{S}_\alpha^{\mathrm{R}}\rbra{\rho}$.
\State $\delta\gets1-e^{-\rbra{1-\alpha}\varepsilon}$.
\State $n\gets
 \ceil{C_\alpha^{(0)}d^{1+1/\alpha}\delta^{-1/\alpha}}$.
\If{$0<\alpha\leq1/2$}
 \State $m\gets
  \ceil{C_\alpha^{(1)}d^{1/\alpha-1}\delta^{-1/\alpha}}$.
\Else
 \State $m\gets
  \ceil{C_\alpha^{(1)}d^{1/\alpha-1}\delta^{-2}}$.
\EndIf
\State $\widehat{\mathrm{F}}_\alpha\gets
 \mathsf{PowerTrace}\rbra{\rho,d,\alpha,n,m}$.
\State \Return
 $\widehat{\mathrm{S}}_\alpha^{\mathrm{R}}
 \gets\dfrac{\log\rbra{\widehat{\mathrm{F}}_\alpha}}{1-\alpha}$.
\end{algorithmic}
\end{algorithm}

\begin{proof}[Proof of \cref{thm:leq1} (R\'enyi)]
Let $F \coloneqq \mathrm{F}_\alpha\rbra{\rho}$ and
$\delta \coloneqq 1-e^{-\rbra{1-\alpha}\varepsilon} = \Theta_\alpha\rbra{\varepsilon}$.  The choices of $n$ and
$m$ in \cref{alg:renyi} are those prescribed by
\cref{thm:common-estimator} with $\varepsilon \coloneqq \delta$.
Since $0<\delta<1$, if $0<\alpha\leq1/2$, part~(i) of that lemma gives
$\abs{\widehat{\mathrm{F}}_\alpha-F}\leq\delta$ with probability at
least $0.98$.  Since $F\geq1$ by \cref{eq:global-physical-range}, this
implies $\abs{\widehat{\mathrm{F}}_\alpha-F}\leq\delta F$.  If
$1/2<\alpha<1$, part~(iii) gives the same relative bound directly.
Thus, in either regime, $\Pr\sbra{\abs{\widehat{\mathrm{F}}_\alpha-F}\leq\delta F}
 \geq0.98$.
By \cref{lem:global-log-conversion}, we have
\[
\Pr\sbra*{\abs*{\widehat{\mathrm{S}}_\alpha^{\mathrm{R}}
 -\mathrm{S}_\alpha^{\mathrm{R}}\rbra{\rho}}
 \leq\varepsilon} \geq 0.98.
\]

The total number of samples of $\rho$ is
\[
 n+m=O_\alpha\rbra*{
 \frac{d^{1+1/\alpha}}{\varepsilon^{1/\alpha}}
 + \frac{d^{1/\alpha-1}}{\varepsilon^{2}} }.
\]
\end{proof}

\subsection{Technical lemmas}
\label{sec:part-one-technical}

Our error analysis involves the Bures $\chi^2$-divergence. 

\begin{definition}[Bures $\chi^2$-divergence]\label{def:chi2}
Let $\rho, \sigma \in \mathbb{C}^{d \times d}$ be positive semidefinite operators with $\sigma = \sum_{i=1}^d \lambda_i \ketbra{u_i}{u_i}$.
If $\supp\rbra{\rho} \subseteq \supp\rbra{\sigma}$, then the Bures $\chi^2$-divergence of $\rho$ to $\sigma$ is defined by
\[
\mathrm{D}_{\chi^2}\rbra{\rho \,\|\, \sigma}
= \sum_{\substack{i,j\in\sbra{d}\\\lambda_i+\lambda_j>0}}
\frac{2}{\lambda_i + \lambda_j}
\abs*{ \bra{u_i} \rbra{\rho - \sigma} \ket{u_j} }^2.
\]
If $\supp\rbra{\rho} \not\subseteq \supp\rbra{\sigma}$, define $\mathrm{D}_{\chi^2}\rbra{\rho \,\|\, \sigma} = +\infty$. 
\end{definition}

\subsubsection{Bures \texorpdfstring{$\chi^2$}{chi-square}-divergence bounds for tomography}

\begin{lemma}\label{lem:regularized-chi2}
    Let $\widehat{\rho}$ be the output matrix of $\mathsf{TomoBures\chi^2}\rbra{\rho, d, n}$ and set $\sigma = \widehat{\rho} + \frac{2d}{n}I$. 
    Then, $\sigma \succeq \frac{d}{n} I$ and $\tr\rbra{\sigma} = 1 + \frac{2d^2}{n}$.
    Moreover, on event $E$ in \cref{lemma:bures-chi2-tomo}, 
    \begin{equation}\label{eq:part-one-chi2}
    \mathrm{D}_{\chi^2}\rbra{\rho \,\|\, \sigma} \leq 8\rbra{C^2+1} \frac{d^2}{n},
    \end{equation}
    where $C$ is the constant in \cref{lemma:bures-chi2-tomo}. 
\end{lemma}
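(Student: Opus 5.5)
The first two claims are immediate. Since $\widehat{\rho} \succeq -\frac{d}{n}I$ by \cref{eq:new-hatrho-spectrum}, adding $\frac{2d}{n}I$ gives $\sigma \succeq \frac{d}{n}I$. Also $\tr(\widehat\rho)=1$, so $\tr(\sigma)=1+\frac{2d^2}{n}$. Note that $\sigma$ has the same eigenbasis $\{\ket{v_i}\}$ as $\widehat\rho$, with eigenvalues $s_i=\widehat\lambda_i+\frac{2d}{n}$. Each $s_i$ satisfies $s_i\geq \frac{d}{n}$, and moreover $s_i \geq \widehat\lambda_i+\frac{d}{n}$. Because $\sigma$ is full rank, the support condition holds, and
\[
\mathrm{D}_{\chi^2}\rbra{\rho\,\|\,\sigma}=\sum_{i,j}\frac{2}{s_i+s_j}\abs*{\bra{v_i}(\rho-\sigma)\ket{v_j}}^2 .
\]

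The main step is to split $\rho-\sigma=(\rho-\widehat\rho)-\frac{2d}{n}I$. By $\abs{a+b}^2\leq 2\abs{a}^2+2\abs{b}^2$, the divergence is at most $2T_1+2T_2$, where
\[
T_1=\sum_{i,j}\frac{2}{s_i+s_j}\abs*{\bra{v_i}(\widehat\rho-\rho)\ket{v_j}}^2,\qquad T_2=\sum_i \frac{2}{2s_i}\rbra*{\frac{2d}{n}}^2 .
\]
The identity term is easy. Using $s_i\geq \frac{d}{n}$, we get $T_2\leq d\cdot\frac{n}{d}\cdot\frac{4d^2}{n^2}=\frac{4d^2}{n}$.

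For $T_1$, group the pairs $(i,j)$ by $m=\min\{i,j\}$. Since eigenvalues are sorted decreasingly, $s_i+s_j\geq s_{m}$; this works because the larger eigenvalue corresponds to the smaller index, and $\min\{i,j\}=m$ means one of $s_i,s_j$ equals $s_m$, with both terms positive. Hence
\[
T_1\leq \sum_{m=1}^d \frac{2}{s_m}\sum_{\min\{i,j\}=m}\abs*{\bra{v_i}(\widehat\rho-\rho)\ket{v_j}}^2 .
\]
Part 2 of \cref{lemma:bures-chi2-tomo} holds on event $E$ and bounds the inner sum by $2C^2\frac{d}{n}(\widehat\lambda_m+\frac{d}{n})\leq 2C^2\frac{d}{n}s_m$. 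So each term is at most $4C^2\frac{d}{n}$, and summing over $m$ gives $T_1\leq 4C^2\frac{d^2}{n}$.

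Combining the two terms, $\mathrm{D}_{\chi^2}\leq 8C^2\frac{d^2}{n}+8\frac{d^2}{n}=8(C^2+1)\frac{d^2}{n}$, which is exactly \cref{eq:part-one-chi2}.

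The only delicate point is the grouping by $\min\{i,j\}$ and the inequality $s_i+s_j\geq s_{\min\{i,j\}}$, which is what makes part 2 of \cref{lemma:bures-chi2-tomo} directly applicable. Part 1 (the diagonal bound) is not needed here.
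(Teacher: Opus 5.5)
Your proof is correct and follows the paper's argument essentially step for step. Both split $\rho-\sigma=(\rho-\widehat{\rho})-\frac{2d}{n}I$ using $\abs{a+b}^2\le 2\abs{a}^2+2\abs{b}^2$, bound the identity part via $s_i\ge d/n$, and handle the tomography part by grouping pairs by $\min\{i,j\}$ and applying part~2 of \cref{lemma:bures-chi2-tomo} together with $\widehat{\lambda}_m+\frac{d}{n}\le s_m$.
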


\begin{proof}
By \cref{eq:new-hatrho-spectrum} and
$\tr\rbra{\widehat{\rho}}=1$, we have
\[
 \sigma\succeq-\frac{d}{n}I+\frac{2d}{n}I
 =\frac{d}{n}I,
 \qquad
 \tr\rbra{\sigma}
 =\tr\rbra{\widehat{\rho}}+\frac{2d}{n}\tr\rbra{I}
 =1+\frac{2d^2}{n}.
\]
In particular, $\sigma\succ0$, so the divergence $\mathrm{D}_{\chi^2}\rbra{\rho \,\|\, \sigma}$ is
finite.
Use the eigenbasis in \cref{eq:new-hatrho-spectrum},
\[
\sigma = \sum_{i=1}^d s_i \ketbra{v_i}{v_i}, \qquad s_i = \widehat{\lambda}_i + \frac{2d}{n} \geq \frac{d}{n}.
\]
Let $\Delta = \rho - \widehat{\rho}$. 
Then, $\rho - \sigma = \Delta - \frac{2d}{n}I$, which gives
\begin{align*}
\mathrm{D}_{\chi^2}\rbra{\rho \,\|\, \sigma} & = \sum_{i=1}^d \sum_{j=1}^d \frac{2}{s_i + s_j} \abs*{ \bra{v_i} \rbra*{\Delta - \frac{2d}{n}I} \ket{v_j} }^2 \\
& \leq \sum_{i=1}^d \sum_{j=1}^d \frac{2}{s_i + s_j} \rbra*{ 2\abs*{\bra{v_i} \Delta \ket{v_j}}^2 + 2\abs*{\frac{2d}{n}\braket{v_i}{v_j}}^2 } \\
& = 4 \sum_{i=1}^d \sum_{j=1}^d \frac{1}{s_i+s_j} \abs*{\bra{v_i} \Delta \ket{v_j}}^2 +\frac{8d^2}{n^2} \sum_{i=1}^d \frac{1}{s_i}. 
\end{align*}
Since $s_i\ge d/n$, the second term is then bounded by
\[
\frac{8 d^2}{n^2} \sum_{i=1}^d \frac{1}{s_i} \leq \frac{8d^2}{n}.
\]
For the first term, enumerating $\rbra{i, j}$ with $\min\cbra{i, j} = m$ with \cref{lemma:bures-chi2-tomo} gives
\begin{align*}
    4 \sum_{i=1}^d \sum_{j=1}^d \frac{1}{s_i+s_j} \abs*{\bra{v_i} \Delta \ket{v_j}}^2
    & = 4 \sum_{m=1}^d \sum_{\substack{i, j \in \sbra{d} \\ \min\cbra{i, j} = m}} \frac{1}{s_i+s_j} \abs*{\bra{v_i} \Delta \ket{v_j}}^2 \\
    & \leq 4 \sum_{m=1}^d \frac{1}{s_m} \sum_{\substack{i, j \in \sbra{d} \\ \min\cbra{i, j} = m}} \abs*{\bra{v_i} \Delta \ket{v_j}}^2 \\
    & \leq 4 \sum_{m=1}^d \frac{1}{s_m} \cdot 2C^2 \frac{d}{n} \rbra*{\widehat{\lambda}_m + \frac{d}{n}} \\
    & = 8C^2 \frac{d}{n} \sum_{m=1}^d \frac{\widehat{\lambda}_m + d/n}{\widehat{\lambda}_m + {2d}/{n}} \\
    & \leq 8C^2 \frac{d^2}{n}.
\end{align*}
Therefore, 
\[
\mathrm{D}_{\chi^2}\rbra{\rho \,\|\, \sigma} \leq 8\rbra{C^2 + 1} \frac{d^2}{n}. 
\]
\end{proof}

\subsubsection{Bias bounds by Bures \texorpdfstring{$\chi^2$}{chi-square}-divergence}

\begin{lemma}\label{lem:one-step-bias}
Let $0<\alpha<1$, let $n$ be a positive integer, let $\rho$ be a
$d$-dimensional density operator, and let
$\sigma\succeq\frac{d}{n}I$.  Then
\begin{align}
0
\leq
\rbra*{1-\alpha}\tr\rbra*{\sigma^\alpha}
+\alpha\tr\rbra*{\rho\sigma^{\alpha-1}}
-\tr\rbra{\rho^\alpha}
\leq
\rbra*{\frac{d}{n}}^{\alpha-1}
\mathrm{D}_{\chi^2}\rbra{\rho\,\|\,\sigma}.
\label{eq:one-step-bias}
\end{align}
\end{lemma}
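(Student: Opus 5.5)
Write $f(x)=x^\alpha$, which is concave on $(0,\infty)$ since $0<\alpha<1$. The middle quantity in \cref{eq:one-step-bias} is $\tr f(\sigma)+\tr\rbra{f'(\sigma)(\rho-\sigma)}-\tr f(\rho)$, because $(1-\alpha)\tr\sigma^\alpha+\alpha\tr\rbra{\rho\sigma^{\alpha-1}} = \tr\sigma^\alpha+\alpha\tr\rbra{\sigma^{\alpha-1}(\rho-\sigma)}$. It is therefore minus a Bregman-type remainder of the concave trace functional $\rho\mapsto\tr f(\rho)$ around $\sigma$. I would prove both inequalities by writing this remainder in a double-operator-integral (Daleckii–Krein) form in the eigenbasis of $\sigma$.

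\textbf{Lower bound.} For concave $f$, the trace functional $\tr f(\cdot)$ is concave on PSD matrices. Its Fréchet derivative at $\sigma\succ 0$ in direction $H$ is $\tr\rbra{f'(\sigma)H}$, so the tangent-plane inequality gives $\tr f(\rho)\le \tr f(\sigma)+\tr\rbra{f'(\sigma)(\rho-\sigma)}$. This is the left inequality. If one wants to avoid general theory, Klein's inequality gives it directly: expanding in the eigenbases $\{\ket{a_k}\}$ of $\rho$ and $\{\ket{u_i}\}$ of $\sigma$, the expression equals $\sum_{k,i}\abs{\braket{a_k}{u_i}}^2\sbra{f(\lambda_i)+f'(\lambda_i)(r_k-\lambda_i)-f(r_k)}\ge 0$, by scalar concavity.

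\textbf{Upper bound.} I would use the same Klein expansion, now with the scalar estimate
\[
f(\lambda)+f'(\lambda)(r-\lambda)-f(r)\le c\,\frac{(r-\lambda)^2}{\lambda}\,\lambda^{\alpha-1}\quad(\lambda\ge d/n,\ r\ge 0).
\]
The difficulty is that the Klein form produces $\sum_{k,i}\abs{\braket{a_k}{u_i}}^2 (r_k-\lambda_i)^2/\lambda_i$. This is a ``commutative'' $\chi^2$ quantity, and it is not obviously bounded by the Bures $\chi^2$-divergence, whose weight is $2/(\lambda_i+\lambda_j)$.

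A cleaner route is an integral remainder. Set $\rho_t=\sigma+t(\rho-\sigma)$, which is PSD for $t\in[0,1]$. Then
\[
\text{remainder}=-\int_0^1(1-t)\,\frac{d^2}{dt^2}\tr f(\rho_t)\,dt .
\]
In the eigenbasis of $\rho_t$, $-\frac{d^2}{dt^2}\tr f(\rho_t)=\sum_{i,j} f'^{[1]}(\mu_i,\mu_j)\abs{(\rho-\sigma)_{ij}}^2$, where $f'^{[1]}$ is the divided difference of $f'$. This form is awkward because the basis moves with $t$.

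I would instead use the integral representation $x^\alpha=c_\alpha\int_0^\infty\rbra{1-\frac{s}{x+s}}s^{\alpha-1}\,ds$, which reduces everything to resolvents. For $g_s(x)=-s/(x+s)$ the remainder is computable exactly. With $\Delta=\rho-\sigma$ and $R_s(\cdot)=(\cdot+s)^{-1}$, the resolvent identity gives
\[
g_s(\sigma)+g_s'(\sigma)\Delta-g_s(\rho)\ \text{(traced)} = s\,\tr\rbra{\Delta R_s(\sigma)\Delta R_s(\sigma)R_s(\rho)} - \text{correction}.
\]
Bounding this by $s\,\tr\rbra{\Delta R_s(\sigma)\Delta R_s(\sigma)}\cdot\Abs{R_s(\rho)}$ loses control when $\rho$ is near singular. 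I would instead apply the exact identity and bound $\rho+s\succeq s$ in the symmetric form $R_s(\sigma)^{1/2}\Delta R_s(\rho)\Delta R_s(\sigma)^{1/2}$, which gives a bound in terms of $\sum_{ij}\abs{\Delta_{ij}}^2/((\lambda_i+s)(\lambda_j+s))$. Integrating in $s$ against $c_\alpha s^{\alpha-1}$ then reduces the problem to a scalar inequality:
\[
\int_0^\infty \frac{s^{\alpha}\,ds}{(\lambda_i+s)(\lambda_j+s)}\lesssim \frac{2}{\lambda_i+\lambda_j}\min(\lambda_i,\lambda_j)^{\alpha-1}\le \frac{2}{\lambda_i+\lambda_j}\rbra*{\frac dn}^{\alpha-1}.
\]
The last step uses $\lambda\ge d/n$ and $\alpha-1<0$.

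\textbf{Main obstacle.} The hard step is obtaining the exact constant $1$ claimed in \cref{eq:one-step-bias}. This requires tracking $c_\alpha$ and the beta-integral $\int_0^\infty s^{\alpha}/(s+a)^2\,ds=\alpha\,a^{\alpha-1}\Gamma(\alpha)\Gamma(1-\alpha)$ carefully, together with the operator inequality $R_s(\rho)\preceq s^{-1}$, so that the constants combine to at most $1$.
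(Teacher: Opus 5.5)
Your lower bound is correct and is the paper's own argument: concavity of $A\mapsto\tr(A^\alpha)$, equivalently Klein's inequality. The upper bound, however, has a gap at its crux, namely producing the two-sided weight $2/(\lambda_i+\lambda_j)$. The exact $g_s$-remainder is $s\,\tr(R_s(\sigma)\Delta R_s(\rho)\Delta R_s(\sigma))$, with no correction term. The only use of $\rho+s\succeq s$ you describe is inserting $R_s(\rho)\preceq s^{-1}I$, and that gives $\tr(R_s(\sigma)^2\Delta^2)=\sum_{i,j}\abs{\Delta_{ij}}^2(\lambda_i+s)^{-2}$, not $\sum_{i,j}\abs{\Delta_{ij}}^2(\lambda_i+s)^{-1}(\lambda_j+s)^{-1}$. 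In your symmetric form, which is not the remainder, it gives $s^{-1}\tr(R_s(\sigma)\Delta^2)$, which is equally one-sided. After integrating against $c_\alpha s^{\alpha-1}$ you get $(1-\alpha)\tr(\sigma^{\alpha-2}\Delta^2)$. This is the same one-sided quantity you rejected in the Klein route, and it is not controlled by $(d/n)^{\alpha-1}\mathrm{D}_{\chi^2}(\rho\,\|\,\sigma)$. For example, take $d=2$, $q=d/n\le 1/2$, $\sigma=\diag(q,1-q)$, and $\rho$ the pure state with diagonal $(q,1-q)$ in that basis. The one-sided quantity is at least $(1-\alpha)(1-q)q^{\alpha-1}$, whereas $(d/n)^{\alpha-1}\mathrm{D}_{\chi^2}(\rho\,\|\,\sigma)=4q^\alpha(1-q)$, so their ratio is unbounded as $q\to 0$. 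Your bookkeeping shows the same problem. The integral $\int_0^\infty s^{\alpha}(\lambda_i+s)^{-1}(\lambda_j+s)^{-1}\,\mathrm{d}s=\frac{\pi}{\sin(\pi\alpha)}\cdot\frac{\lambda_j^\alpha-\lambda_i^\alpha}{\lambda_j-\lambda_i}$ is homogeneous of degree $\alpha-1$ in the eigenvalues. By contrast, $\frac{2}{\lambda_i+\lambda_j}\min\{\lambda_i,\lambda_j\}^{\alpha-1}$ has degree $\alpha-2$, so your final scalar inequality cannot hold uniformly. Any correct product-form bound per $s$ must carry one more factor $1/s$. The real obstacle is therefore not the constant $1$ but obtaining a Bures-type weight at all.

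The paper obtains this weight along the path you set aside. With $Z_t=(1-t)\sigma+t\rho$ and $g(t)=\tr(Z_t^\alpha)$, it uses the scalar bound $-(f')^{[1]}(x,y)\le\alpha\min\{x,y\}^{\alpha-1}\frac{2}{x+y}$ together with $Z_t\succeq(1-t)\frac{d}{n}I$. The moving eigenbasis is harmless because $\sum_{i,j}\frac{2}{z_i+z_j}\abs{\Delta_{ij}}^2=2\ave{\Delta,(L_{Z_t}+R_{Z_t})^{-1}(\Delta)}_{\mathrm{HS}}$ is basis-free. Since $Z_t\succeq(1-t)\sigma$ gives $L_{Z_t}+R_{Z_t}\succeq(1-t)(L_\sigma+R_\sigma)$, antimonotonicity of the inverse yields $-g''(t)\le\alpha(d/n)^{\alpha-1}(1-t)^{\alpha-2}\mathrm{D}_{\chi^2}(\rho\,\|\,\sigma)$. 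Then $\int_0^1(1-t)^{\alpha-1}\,\mathrm{d}t=1/\alpha$ gives the constant $1$ exactly, with dominated convergence as $t\to1$ when $\rho$ is singular. Your resolvent route can also be rescued, but it needs a different step. Put $A=\sigma+sI$, $X=A^{-1/2}(\rho+sI)A^{-1/2}$, and $P=sA^{-1}$; then the $g_s$-remainder equals $\tr(P(X-I)^2X^{-1})$. Since $P\preceq I$ and $P\preceq X$, in the eigenbasis of $X$ this is at most $\sum_k\min\{1,x_k\}(x_k-1)^2/x_k\le\tr((X-I)^2)=\tr(R_s(\sigma)\Delta R_s(\sigma)\Delta)$. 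Integrating against $c_\alpha s^{\alpha-1}$, not $s^\alpha$, gives $\sum_{i,j}\abs{\Delta_{ij}}^2\frac{\lambda_i^{\alpha-1}-\lambda_j^{\alpha-1}}{\lambda_j-\lambda_i}$. The paper's scalar inequality, divided by $\alpha$, bounds this by $(d/n)^{\alpha-1}\mathrm{D}_{\chi^2}(\rho\,\|\,\sigma)$, again with constant $1$.
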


To prove \cref{lem:one-step-bias}, we need the following results in matrix analysis. 

\begin{lemma}[{\cite[Theorem V.3.3]{Bha97}}] \label{lemma:dfah}
    Let $f \in C^1\rbra{I}$, where $I$ is an interval, and let
    $A = U \Gamma U^\dag$ be Hermitian, where $U$ is unitary,
    $\Gamma = \diag\rbra{\lambda_1, \lambda_2, \dots, \lambda_d}$,
    and $\lambda_i\in I$ for every $i\in\sbra{d}$.
    If $H$ is Hermitian, then
    \[
    \mathrm{D} f\rbra{A} \sbra{H} \coloneqq \left.\frac{\mathrm{d}}{\mathrm{d}t} f\rbra{A + tH}\right|_{t=0} = \lim_{t \to 0} \frac{f\rbra{A+tH} - f\rbra{A}}{t} = U \rbra*{ f^{[1]}\rbra{\Gamma} \odot U^\dag H U } U^\dag,
    \]
    where $A \odot B$ is the Hadamard element-wise product and $f^{[1]}\rbra{\Gamma}$ is a matrix defined by $\rbra{f^{[1]}\rbra{\Gamma}}_{ij} = f^{[1]}\rbra{\lambda_i, \lambda_j}$, with
    \[
    f^{[1]}\rbra{x, y} = \begin{cases}
        \dfrac{f\rbra{x} - f\rbra{y}}{x - y}, & x \neq y, \\
        f'\rbra{x}, & x = y.
    \end{cases}
    \]
\end{lemma}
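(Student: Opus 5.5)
The plan is to prove the formula first for polynomials by direct computation, and then extend it to general $f\in C^1\rbra{I}$ by a uniform approximation argument in which the error is controlled by $\sup\abs{f'-p'}$. By unitary invariance (replace $H$ by $U^\dag H U$ and $A$ by $\Gamma$), it suffices to treat $A=\Gamma$ diagonal. Throughout, write $\Abs{\cdot}_2$ for the Hilbert--Schmidt norm.

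\textbf{Step 1: polynomials.} Both sides of the formula are linear in $f$, so it suffices to take $f\rbra{x}=x^k$. Expanding $\rbra{\Gamma+tH}^k$ to first order in $t$ gives
\[
\mathrm{D} f\rbra{\Gamma}\sbra{H}=\sum_{j=0}^{k-1}\Gamma^jH\Gamma^{k-1-j}.
\]
Its $\rbra{i,l}$ entry is $H_{il}\sum_{j}\lambda_i^j\lambda_l^{k-1-j}$. This sum equals $\rbra{\lambda_i^k-\lambda_l^k}/\rbra{\lambda_i-\lambda_l}$ when $\lambda_i\neq\lambda_l$, and equals $k\lambda_i^{k-1}$ when $\lambda_i=\lambda_l$. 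In both cases it is exactly $f^{[1]}\rbra{\lambda_i,\lambda_l}$.

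\textbf{Step 2: a Lipschitz-type estimate for polynomials.} For any Hermitian $B$ with spectrum in $I$, the mean value theorem gives $\abs{p^{[1]}\rbra{x,y}}\leq\sup_I\abs{p'}$. Applying Step~1 in the eigenbasis of $B$, we get
\[
\Abs{\mathrm{D}p\rbra{B}\sbra{H}}_2\leq \sup_I\abs{p'}\,\Abs{H}_2 .
\]
Integrating along the segment from $A$ to $A+tH$ (whose spectra stay in $I$ by convexity of $I$, for $t$ in the admissible range) then yields
\[
\Abs{p\rbra{A+tH}-p\rbra{A}}_2\leq \abs{t}\,\sup_I\abs{p'}\,\Abs{H}_2 .
\]

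\textbf{Step 3: approximation.} Work on a compact subinterval $K\subseteq I$ containing the spectra of $A+tH$ for all small $t$; if an eigenvalue of $A$ lies at an endpoint of $I$, we use one-sided $t$. By Weierstrass, choose polynomials $q_m\to f'$ uniformly on $K$ and set $p_m\rbra{x}=f\rbra{x_0}+\int_{x_0}^x q_m$. Then $p_m\to f$ and $p_m'\to f'$ uniformly on $K$. Apply Step~2 to $p_m-p_l$ and let $l\to\infty$; since functional calculus is continuous under uniform convergence on the spectrum, we obtain
\[
\Abs*{\frac{(f-p_m)\rbra{A+tH}-(f-p_m)\rbra{A}}{t}}_2\leq \sup_K\abs{f'-p_m'}\,\Abs{H}_2,
\]
uniformly in $t$.

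\textbf{Step 4: conclusion and main obstacle.} The difference quotient of $f$ is within $\eta_m:=\sup_K\abs{f'-p_m'}\,\Abs{H}_2$ of that of $p_m$, uniformly in $t$. As $t\to0$, the latter converges to $p_m^{[1]}\rbra{\Gamma}\odot H$ by Step~1. Moreover, $p_m^{[1]}\to f^{[1]}$ pointwise on the eigenvalue pairs: off the diagonal this follows from $p_m\to f$, and on the diagonal from $p_m'\to f'$. Taking $\limsup_{t\to0}$ and then $m\to\infty$ gives the formula for $f$. The step I expect to be the main obstacle is Step~3: the approximation must control derivatives uniformly along the whole segment $A+stH$, not just at $A$. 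This is why I derive the bound through divided differences, bounded by $\sup\abs{p'}$, rather than through naive norm estimates on $\sum_j\Gamma^jH\Gamma^{k-1-j}$, which would blow up with the degree.
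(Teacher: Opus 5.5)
Your proof is correct. The paper gives no argument of its own for this lemma (it only cites Bhatia), and what you wrote is essentially the standard proof of the cited result. You check monomials by expanding $(\Gamma+tH)^k$, then approximate $f$ and $f'$ simultaneously by polynomials on a compact interval $K$ containing the spectra of all $A+stH$, $s\in[0,1]$. The uniform control along the segment comes from the mean value theorem for divided differences together with the Hilbert--Schmidt Schur-multiplier bound $\lVert X\odot Y\rVert_2\le\max_{i,j}\lvert X_{ij}\rvert\,\lVert Y\rVert_2$. You are right that this, and not a termwise bound on $\sum_j\Gamma^jH\Gamma^{k-1-j}$, is what makes the approximation step work.

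The only thing to fix is the endpoint remark in Step~3. Taking one-sided $t$ does not handle eigenvalues at an endpoint of a closed $I$ in general. For $I=[0,\infty)$, $A=0\in\mathbb{C}^{2\times 2}$ and $H=\diag(1,-1)$, the matrix $A+tH$ has an eigenvalue outside $I$ for every $t\neq0$. So $f(A+tH)$ is not even defined, and the statement itself is meaningless there. The lemma should be read with $I$ open (equivalently, all eigenvalues of $A$ in the interior of $I$). That is all the paper needs: its only use is in \cref{lem:trace-Hessian}, applied to $f'\in C^1((0,\infty))$ at $A\succ0$.
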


We need the following lemma, which is a complex-valued version of \cite[Theorem 3.3]{LS01}.

\begin{lemma}\label{lem:trace-Hessian}
Let $f\in C^2((0,\infty))$, let $H$ be Hermitian, and let
$A = \diag\rbra{a_1, a_2, \dots, a_d}\succ0$.  Then
\[
\left.\frac{\mathrm{d}^2}{\mathrm{d}t^2} \tr\rbra{f\rbra{A+tH}}\right|_{t = 0} = \sum_{i=1}^d \sum_{j=1}^d \rbra{f'}^{[1]}\rbra{a_i, a_j} \abs{H_{ij}}^2.
\]
\end{lemma}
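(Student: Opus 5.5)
The plan is to differentiate twice directly, using \cref{lemma:dfah} for the first derivative and then differentiating the resulting explicit expression. Set $g(t) = \tr\rbra{f\rbra{A+tH}}$.

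\textbf{First derivative.} By linearity of trace and \cref{lemma:dfah}, for every $t$ in a neighborhood of $0$ (so that $A+tH\succ0$),
\[
g'(t) = \tr\rbra{\mathrm{D}f\rbra{A+tH}\sbra{H}}.
\]
Writing $A+tH = U_t\Gamma_t U_t^\dag$, the trace of $U_t\rbra{f^{[1]}\rbra{\Gamma_t}\odot U_t^\dag H U_t}U_t^\dag$ only picks up the diagonal entries. There $f^{[1]}\rbra{\lambda_i,\lambda_i}=f'\rbra{\lambda_i}$. Hence
\[
g'(t)=\sum_i f'\rbra{\lambda_i(t)}\rbra{U_t^\dag H U_t}_{ii}=\tr\rbra{f'\rbra{A+tH}H}.
\]

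\textbf{Second derivative.} Since $f'\in C^1\rbra{(0,\infty)}$, we can apply \cref{lemma:dfah} again, now to $f'$ at $t=0$. Because $A$ is already diagonal we may take $U=I$, giving
\[
g''(0)=\tr\rbra{\mathrm{D}f'\rbra{A}\sbra{H}\,H}=\tr\rbra{\rbra{\rbra{f'}^{[1]}\rbra{\Gamma}\odot H}H}.
\]
Expanding the trace,
\[
g''(0)=\sum_{i,j}\rbra{f'}^{[1]}\rbra{a_i,a_j}H_{ij}H_{ji}=\sum_{i,j}\rbra{f'}^{[1]}\rbra{a_i,a_j}\abs{H_{ij}}^2,
\]
where the last step uses $H_{ji}=\overline{H_{ij}}$.

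\textbf{Main obstacle.} The real difficulty is justifying that $g'(t)=\tr\rbra{f'\rbra{A+tH}H}$ holds on a whole neighborhood of $0$, so that it may legitimately be differentiated again. This is where we need $A\succ0$: then $A+tH\succ0$ for all small $\abs{t}$, and the spectrum stays inside $(0,\infty)$, where $f$ is $C^2$.

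One must also check that $t\mapsto\tr\rbra{f\rbra{A+tH}}$ is differentiable with the stated derivative even when eigenvalues cross. \cref{lemma:dfah} already supplies exactly this, as a statement about Fréchet/directional derivatives at an arbitrary Hermitian point. The directional derivative of $f(\cdot)$ at $A+tH$ in direction $H$ equals $g'(t)$ by the chain rule applied along the line.

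Everything else is bookkeeping with the Hadamard product. The complex case differs from the real one of \cite[Theorem 3.3]{LS01} only through $H_{ij}H_{ji}=\abs{H_{ij}}^2$.
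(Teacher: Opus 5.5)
Your proof is correct and matches the paper's argument: both obtain $g'(t)=\tr\rbra{f'\rbra{A+tH}H}$ near $t=0$, then apply \cref{lemma:dfah} to $f'$ at the diagonal matrix $A$, and expand $\tr\rbra{\mathrm{D}f'\rbra{A}\sbra{H}H}$ entrywise using $H_{ji}=\overline{H_{ij}}$. The only minor difference is that you derive the first-derivative formula from \cref{lemma:dfah} (via the diagonal entries $f^{[1]}\rbra{\lambda_i,\lambda_i}=f'\rbra{\lambda_i}$), whereas the paper simply quotes it as a known fact.
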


\begin{proof}
Let $g(t)\coloneqq\tr\rbra{f(A+tH)}$.
Then, using the fact that $\mathrm{D} \rbra{\tr \circ f} \rbra{A} [H] = \tr\rbra{f'\rbra{A} H}$, $g'(t)=\tr\rbra{f'(A+tH)H}$.
By \cref{lemma:dfah}, $\rbra{\mathrm{D} f'(A)[H]}_{ij}=(f')^{[1]}(a_i,a_j)H_{ij}$.
Consequently,
\begin{align*}
  g''(0)
  & = \tr\rbra*{ \mathrm{D} f'(A)[H]H } \\
  &=\sum_{i=1}^d \sum_{j=1}^d
    \rbra*{\mathrm{D} f'(A)[H]}_{ij}H_{ji}\\
  &=\sum_{i=1}^d \sum_{j=1}^d
    (f')^{[1]}(a_i,a_j)H_{ij}H_{ji}\\
  &=\sum_{i=1}^d \sum_{j=1}^d
    (f')^{[1]}(a_i,a_j)\abs{H_{ij}}^2,
\end{align*}
where the last equality uses $H_{ji}=H_{ij}^*$.
\end{proof}

Now we are ready to prove \cref{lem:one-step-bias}.

\begin{proof}[Proof of \cref{lem:one-step-bias}]
Set $\Delta=\rho-\sigma$, $Z_t=\rbra{1-t}\sigma+t\rho$, and
$g\rbra{t}=\tr\rbra{Z_t^\alpha}$.  Since the map
$A\mapsto\tr\rbra{A^\alpha}$ is concave,
\[
\tr\rbra{\rho^\alpha} = g\rbra{1}\leq g\rbra{0}+g'\rbra{0}
=\rbra{1-\alpha}\tr\rbra{\sigma^\alpha}
+\alpha\tr\rbra{\rho\sigma^{\alpha-1}}.
\]
This proves the first inequality.

For $0\leq t<1$, $Z_t\succeq\rbra{1-t}\frac{d}{n}I$. 
Let
$Z_t=\sum_{i=1}^d z_i\ketbra{v_i}{v_i}$ be the spectrum decomposition.
Applying \cref{lem:trace-Hessian} to $f\rbra{x}=x^\alpha$ in this eigenbasis gives
\begin{equation}\label{eq:part-one-second-derivative}
 g''\rbra{t}
 =\sum_{i=1}^d\sum_{j=1}^d
 \rbra{f'}^{[1]}\rbra{z_i, z_j}
 \abs*{\bra{v_i}\Delta\ket{v_j}}^2.
\end{equation}
Note that for $x > y \geq q > 0$, it holds that
\begin{equation} \label{eq:fp[1]}
-\rbra{f'}^{[1]}\rbra{x, y} 
\leq\alpha \min\cbra{x, y}^{\alpha-1} \cdot \frac{2}{x+y}.
\end{equation}
To see this, without loss of generality, we assume that $x \geq y$; denoting $x = wy$ where $w > 1$ and $\beta = 1 - \alpha$, it holds that 
\begin{align*}
-\rbra{f'}^{[1]}\rbra{x, y} 
& =-\alpha \cdot \frac{x^{\alpha-1}-y^{\alpha-1}}{x-y} \\
&=\alpha y^{-\beta-1} \cdot \frac{1-w^{-\beta}}{w-1}\\
&\leq\alpha y^{-\beta-1} \cdot \frac{2}{w+1}\\
&\leq\alpha q^{\alpha-1} \cdot \frac{2}{x+y},
\end{align*}
where the first inequality follows from $w^{-\beta}\geq w^{-1}$.
The case $x=y$ follows by continuity.

For a positive definite operator $A$, define
$L_A\rbra{H}=AH$ and $R_A\rbra{H}=HA$.  With the Hilbert--Schmidt inner
product $\ave{A,B}_{\mathrm{HS}}=\tr\rbra{A^\dag B}$, the definition of $\mathrm{D}_{\chi^2}\rbra{\rho\,\|\,\sigma}$ in
\cref{def:chi2} is equivalently $\mathrm{D}_{\chi^2}\rbra{\rho\,\|\,\sigma}=2\ave{\Delta,\rbra{L_\sigma+R_\sigma}^{-1}\rbra{\Delta}}_{\mathrm{HS}}$ (cf.\ \cite[Definition 1 and Equation (19)]{TKR+10}).
Combining \cref{eq:fp[1]} with
\cref{eq:part-one-second-derivative} gives
\[
 -g''\rbra{t}
 \leq
 \alpha\rbra*{\rbra{1-t}\frac{d}{n}}^{\alpha-1} \cdot 
 2\ave*{\Delta,
 \rbra{L_{Z_t}+R_{Z_t}}^{-1}\rbra{\Delta}}_{\mathrm{HS}}.
\]
Since $L_{Z_t}+R_{Z_t}\succeq\rbra{1-t}\rbra{L_\sigma+R_\sigma}$,
\begin{equation}\label{eq:part-one-second-derivative-bound}
 -g''\rbra{t}
 \leq
 \alpha\rbra*{\frac{d}{n}}^{\alpha-1}
 \rbra{1-t}^{\alpha-2}
 \mathrm{D}_{\chi^2}\rbra{\rho\,\|\,\sigma}.
\end{equation}

For $0<u<1$, Taylor's integral formula gives
\begin{equation} \label{eq:taylor-integral}
 g\rbra{0}+ug'\rbra{0}-g\rbra{u}
 =-\int_0^u\rbra{u-t}g''\rbra{t}\,\mathrm{d}t.
\end{equation}
We now justify the limit $u \to 1^-$ explicitly.  For $0<u<1$, extend
the integrand to the fixed interval $\sbra{0,1}$ by setting
\[
 h_u\rbra{t}:=
 \begin{cases}
  \rbra{u-t}\rbra{-g''\rbra{t}},&0\leq t\leq u,\\
  0,&u<t\leq1.
 \end{cases}
\]
Then, \cref{eq:taylor-integral} becomes
\begin{equation} \label{eq:g-eq-int-h}
 g\rbra{0}+ug'\rbra{0}-g\rbra{u}
 =\int_0^1 h_u\rbra{t}\,\mathrm{d}t.
\end{equation}
Define the limiting function by
\[
 h\rbra{t}:=
 \begin{cases}
  \rbra{1-t}\rbra{-g''\rbra{t}},&0\leq t<1,\\
  0,&t=1.
 \end{cases}
\]
For every fixed
$0\leq t<1$, once $u>t$ we have
$h_u\rbra{t}=\rbra{u-t}\rbra{-g''\rbra{t}}$, so
$h_u\rbra{t}\to h\rbra{t}$ as $u \to 1^-$.  At $t=1$, the same
convergence holds because $h_u\rbra{1}=h\rbra{1}=0$.

Moreover, concavity gives $-g''\rbra{t}\geq0$.  If $t\leq u$, then
$0\leq u-t\leq1-t$, so \cref{eq:part-one-second-derivative-bound}
implies
\begin{align*}
 0\leq h_u\rbra{t}
 &\leq\rbra{1-t}\rbra{-g''\rbra{t}}\\
 &\leq
 \alpha\rbra*{\frac{d}{n}}^{\alpha-1}
 \rbra{1-t}^{\alpha-1}\mathrm{D}_{\chi^2}\rbra{\rho\,\|\,\sigma}.
\end{align*}
The same bound is immediate when $t>u$, because then $h_u\rbra{t}=0$.
Thus the family $\cbra{h_u}_{0<u<1}$ is dominated independently of $u$
by
\[
 \Phi\rbra{t}:=
 \begin{cases}
 \alpha\rbra*{\dfrac{d}{n}}^{\alpha-1}
 \mathrm{D}_{\chi^2}\rbra{\rho\,\|\,\sigma}
 \rbra{1-t}^{\alpha-1},&0\leq t<1,\\
 0,&t=1.
 \end{cases}
\]
This function is integrable on $\sbra{0,1}$: indeed,
$\alpha-1>-1$ and
\[
 \int_0^1\Phi\rbra{t}\,\mathrm{d}t
 =\rbra*{\frac{d}{n}}^{\alpha-1}
 \mathrm{D}_{\chi^2}\rbra{\rho\,\|\,\sigma}<\infty.
\]
The dominated convergence theorem therefore gives
\begin{equation} \label{eq:limit}
 \lim_{u\to1^-}\int_0^1h_u\rbra{t}\,\mathrm{d}t
 =\int_0^1h\rbra{t}\,\mathrm{d}t
 =\int_0^1\rbra{1-t}\rbra{-g''\rbra{t}}\,\mathrm{d}t,
\end{equation}
where the value assigned to the last integrand at $t=1$ is immaterial.
On the other hand, when $u \to 1^-$, $Z_u\to\rho$ and the map
$A\mapsto\tr\rbra{A^\alpha}$ is continuous on the positive semidefinite
cone, so $g\rbra{u}\to g\rbra{1}$; hence the left-hand side of \cref{eq:taylor-integral} converges to 
\begin{equation} \label{eq:limit-g}
\lim_{u \to 1^-} \rbra*{g\rbra{0} + ug'\rbra{0} - g\rbra{u}} = g\rbra{0}+g'\rbra{0}-g\rbra{1}.
\end{equation}
Taking the limit $u \to 1^-$ in \cref{eq:g-eq-int-h} and applying
\cref{eq:part-one-second-derivative-bound,eq:limit,eq:limit-g} yields
\begin{align*}
g\rbra{0}+g'\rbra{0}-g\rbra{1}
&\leq
\alpha\rbra*{\frac{d}{n}}^{\alpha-1}
\mathrm{D}_{\chi^2}\rbra{\rho\,\|\,\sigma}
\int_0^1\rbra{1-t}^{\alpha-1}\,\mathrm{d}t\\
&=\rbra*{\frac{d}{n}}^{\alpha-1}
\mathrm{D}_{\chi^2}\rbra{\rho\,\|\,\sigma}.
\end{align*}
This proves \cref{eq:one-step-bias} by noting that $g\rbra{0}+g'\rbra{0}-g\rbra{1}=\rbra{1-\alpha}\tr\rbra{\sigma^\alpha}+\alpha\tr\rbra{\rho\sigma^{\alpha-1}}
-\tr\rbra{\rho^\alpha}$. 
\end{proof}

\subsubsection{Variance bounds}

\begin{lemma}\label{lem:correction-moments}
Let $\widehat{\rho}$ be the output matrix of $\mathsf{TomoBures\chi^2}\rbra{\rho, d, n}$ for a $d$-dimensional quantum state.
Let $\sigma = \widehat{\rho} + \frac{2d}{n}I = \sum_{i=1}^d s_i \ketbra{v_i}{v_i}$ and define a random variable $X$ by
\[
X = s_{J}^{\alpha-1}, \qquad \Pr\sbra*{ J = i } = \bra{v_i} \rho \ket{v_i}, \qquad 1 \leq i \leq d.
\]
Then, on event $E$ in \cref{lemma:bures-chi2-tomo}:
\begin{enumerate}[(i)]
 \item When $0<\alpha\leq1/2$, then there is a universal constant $C > 0$ such that
 \[
  \E\sbra*{X^2}
  \leq C d\rbra*{\frac{d}{n}}^{2\alpha-1}.
 \]
 \item When $1/2<\alpha<1$, if $n \geq d^2$, then there is a constant $C_\alpha > 0$ depending only on $\alpha$ such that
 \[
  \E\sbra*{X^2}\leq C_\alpha d^{2-2\alpha}.
 \]
 \item When $1/2<\alpha<1$, if $n^\alpha \geq d^{\alpha+1}$, then there is a constant $C_\alpha > 0$ depending only on $\alpha$ such that
 \[
  \E\sbra*{X^2}
  \leq C_\alpha d^{1/\alpha-1}
  \rbra*{\tr\rbra{\rho^\alpha}}^2.
 \]
\end{enumerate}
\end{lemma}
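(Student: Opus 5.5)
We start from the identity
\[
\E\sbra{X^2} = \sum_{i=1}^d s_i^{2\alpha-2} p_i, \qquad p_i \coloneqq \bra{v_i}\rho\ket{v_i},
\]
and relate $p_i$ to $s_i$ using event $E$. Apply the first part of \cref{lemma:bures-chi2-tomo} with $\ket{w}=\ket{v_i}$; then $\bra{v_i}\widehat\rho\ket{v_i}=\widehat\lambda_i$. Since $\widehat\lambda_i + d/n \leq s_i$, this gives
\[
p_i \leq \widehat\lambda_i + C\sqrt{\tfrac{d}{n}\rbra{\widehat\lambda_i+\tfrac{d}{n}}} \leq s_i + C\sqrt{\tfrac{d}{n} s_i}.
\]
Because $s_i\geq d/n$, we have $\sqrt{(d/n)s_i}\leq s_i$. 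Hence $p_i \leq (1+C)s_i$. This single comparison $p_i=O(s_i)$ will be the workhorse. We also use $\sum_i s_i = 1+2d^2/n$ and $s_i \geq d/n$.

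\textbf{Case (i), $\alpha\le 1/2$.} Here $2\alpha-1\le 0$, so $s_i^{2\alpha-2}p_i \leq (1+C)s_i^{2\alpha-1} \leq (1+C)(d/n)^{2\alpha-1}$. Summing over the $d$ indices gives $\E\sbra{X^2}\le (1+C)d(d/n)^{2\alpha-1}$. The constant is universal, as claimed.

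\textbf{Case (ii), $1/2<\alpha<1$ and $n\geq d^2$.} Now $0<2\alpha-1<1$. We bound
\[
\E\sbra{X^2}\le (1+C)\sum_i s_i^{2\alpha-1}.
\]
By concavity (power-mean or Hölder), $\sum_i s_i^{2\alpha-1}\le d^{2-2\alpha}\rbra{\sum_i s_i}^{2\alpha-1}$. Since $n\ge d^2$, we have $\sum_i s_i \leq 3$. Therefore $\E\sbra{X^2}\le C_\alpha d^{2-2\alpha}$.

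\textbf{Case (iii), $1/2<\alpha<1$ and $n^\alpha\ge d^{\alpha+1}$.} This is the main obstacle, because the bound must scale as $(\tr\rho^\alpha)^2$ rather than a worst-case power of $d$. The plan is to split the indices according to whether $s_i$ is small or large.

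Let $\tau \coloneqq d^{-1/\alpha}$. The hypothesis says exactly $d/n \leq \tau$. Use $F\coloneqq\tr(\rho^\alpha)\geq 1$.

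\emph{Large indices} ($s_i \geq \tau$). Here $s_i^{2\alpha-1}$ with the positive exponent $2\alpha-1$ is harmless. Applying Hölder as in case (ii) gives a contribution $O(d^{2-2\alpha})$. Then compare $d^{2-2\alpha}$ with $d^{1/\alpha-1}F^2$: we have $(1/\alpha-1)-(2-2\alpha) = (1-\alpha)(1-2\alpha)/\alpha\cdot(-1)\cdot(-1)\ldots$. Simpler to check directly: $1/\alpha-1\ge 2-2\alpha$ iff $(1-\alpha)(1/\alpha-2)\geq 0$, which fails for $\alpha>1/2$. So this crude bound is insufficient, and we must use $F$ here. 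Instead bound
\[
s_i^{2\alpha-2}p_i \lesssim s_i^{2\alpha-1} = s_i^{\alpha}\, s_i^{\alpha-1} \leq \tau^{\alpha-1}s_i^\alpha = d^{1/\alpha-1}s_i^\alpha.
\]
So the large indices contribute at most $d^{1/\alpha-1}\sum_i s_i^\alpha = d^{1/\alpha-1}\tr(\sigma^\alpha)$.

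\emph{Small indices} ($s_i<\tau$). Use $p_i\lesssim s_i$ and $s_i^{2\alpha-1}\le \tau^{2\alpha-1}$. Summing over at most $d$ indices gives $d\cdot d^{-(2\alpha-1)/\alpha} = d^{1/\alpha-1}$, which is at most $d^{1/\alpha-1}F^2$.

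It remains to show $\tr(\sigma^\alpha)=O_\alpha(F)$ on $E$. We do this with \cref{lem:one-step-bias} and \cref{lem:regularized-chi2}, which give
\[
(1-\alpha)\tr\sigma^\alpha+\alpha\tr(\rho\sigma^{\alpha-1})\le F+(d/n)^{\alpha-1}O(d^2/n) = F + O\rbra{d^{1+\alpha}n^{-\alpha}} \leq F+O(1) = O(F).
\]
Drop the nonnegative $\alpha\tr(\rho\sigma^{\alpha-1})$ term. This yields $\tr\sigma^\alpha\le O(F)/(1-\alpha)$. Combining, $\E\sbra{X^2}\le C_\alpha d^{1/\alpha-1}F \le C_\alpha d^{1/\alpha-1}F^2$, using $F\geq 1$. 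The delicate point to double-check is the threshold choice $\tau=d^{-1/\alpha}$ together with the bias-lemma control of $\tr\sigma^\alpha$. If that control fails, a fallback is to bound $\tr\sigma^\alpha$ directly via $s_i\le p_i+O(\sqrt{(d/n)s_i})$, though this is messier.
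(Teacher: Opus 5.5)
Your proof is correct, but it takes a different route from the paper's.

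The paper writes $p_i=s_i+u_i$ and must control the cross term $\sum_i u_i s_i^{2\alpha-2}$. It does this by Cauchy--Schwarz against the diagonal part of the Bures $\chi^2$-divergence, $\sum_i u_i^2/s_i\le \mathrm{D}_{\chi^2}(\rho\,\|\,\sigma)=O(d^2/n)$. This forces a further case split on the sign of $4\alpha-3$ in parts (ii) and (iii).

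You instead apply the per-direction guarantee of event $E$ at $\ket{w}=\ket{v_i}$ and combine it with $s_i\ge d/n$. This gives the pointwise domination $p_i\le(1+C)s_i$, hence $\E[X^2]\le(1+C)\tr(\sigma^{2\alpha-1})$. The cross term never appears, and all three parts reduce to elementary spectral bounds on $\sigma$. In part (iii) you replace the paper's H\"older step $\sum_i s_i^{2\alpha-1}\le d^{1-(2\alpha-1)/\alpha}(\tr\sigma^\alpha)^{(2\alpha-1)/\alpha}$ by a threshold split at $\tau=d^{-1/\alpha}$. Your control of $\tr\sigma^\alpha=O_\alpha(F)$ is the same as the paper's: use the bias lemma with $d^{\alpha+1}/n^\alpha\le 1$ and drop the nonnegative term $\alpha\tr(\rho\sigma^{\alpha-1})$.

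Your route is shorter, and as written it yields $O_\alpha(d^{1/\alpha-1}F)$ in (iii), slightly stronger than stated. The paper's route uses only $\chi^2$-closeness of $\sigma$ to $\rho$, the same quantity that drives the bias analysis. From a $\chi^2$ bound alone, your pointwise comparison would degrade to $p_i\le O(\sqrt{d})\,s_i$, so it genuinely relies on the multiplicative per-direction bound in event $E$.

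Two cosmetic points. The garbled exponent comparison in your large-index paragraph should be replaced by $(1/\alpha-1)-(2-2\alpha)=(2\alpha-1)(\alpha-1)/\alpha<0$. Also, the fact $d/n\le\tau$ is never used in the threshold split itself; the hypothesis $n^\alpha\ge d^{\alpha+1}$ enters only through the bias bound on $\tr\sigma^\alpha$.
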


\begin{proof}
Set $p_i=\bra{v_i}\rho\ket{v_i}$ and $u_i=p_i-s_i= \bra{v_i}\rbra{\rho-\sigma}\ket{v_i}$.
On event $E$, \cref{lem:regularized-chi2} gives
\begin{equation}\label{eq:part-one-diagonal-control}
 \sum_{i=1}^d\frac{u_i^2}{s_i}
 \leq
 \mathrm{D}_{\chi^2}\rbra{\rho\,\|\,\sigma}
 \leq
 8\rbra{C^2+1}\frac{d^2}{n}.
\end{equation}
Moreover,
\begin{equation}\label{eq:part-one-moment-decomposition}
 \E\sbra*{X^2}=\sum_{i=1}^d p_i \rbra*{s_i^{\alpha-1}}^2
 =\sum_{i=1}^d s_i^{2\alpha-1}
  +\sum_{i=1}^d u_i s_i^{2\alpha-2}.
\end{equation}
By the Cauchy--Schwarz inequality, together with 
\cref{eq:part-one-diagonal-control}, we have
\begin{equation}\label{eq:part-one-cross-term}
 \abs*{\sum_{i=1}^d u_i s_i^{2\alpha-2}} \leq \rbra*{\sum_{i=1}^d \frac{u_i^2}{s_i}}^{1/2} \rbra*{\sum_{i=1}^d s_i^{4\alpha-3}}^{1/2}
 \leq \sqrt{8\rbra{C^2+1}}\frac{d}{\sqrt n}
 \rbra*{\sum_{i=1}^d s_i^{4\alpha-3}}^{1/2}.
\end{equation}

\textbf{Part (i).} Suppose $0<\alpha\leq1/2$.  Since $s_i\geq d/n$ and both exponents
$2\alpha-1$ and $4\alpha-3$ are nonpositive,
\[
 \sum_{i=1}^d s_i^{2\alpha-1}
 \leq d\rbra*{\frac{d}{n}}^{2\alpha-1},
 \qquad
 \sum_{i=1}^d s_i^{4\alpha-3}
 \leq d\rbra*{\frac{d}{n}}^{4\alpha-3}.
\]
Substitution into \cref{eq:part-one-moment-decomposition,eq:part-one-cross-term} gives
\[
\E\sbra*{X^2} \leq \rbra*{\sqrt{8\rbra{C^2+1}}+1} d\rbra*{\frac{d}{n}}^{2\alpha-1}.
\]

\textbf{Part (ii).} Now suppose $1/2<\alpha<1$ and $n \geq d^2$. Then,
\cref{lem:regularized-chi2} gives
\[
 \sum_{i=1}^d s_i=\tr\rbra{\sigma}\leq3,
 \qquad
 \frac{d}{n}\leq\frac1d.
\]
Because $2\alpha-1\in\rbra{0,1}$, concavity and
$\sum_{i=1}^d s_i\leq3$ give
\begin{align}
 \sum_{i=1}^d s_i^{2\alpha-1}
 \leq d^{1-(2\alpha-1)}
 \rbra*{\sum_{i=1}^d s_i}^{2\alpha-1}\leq3^{2\alpha-1}d^{2-2\alpha}.
 \label{eq:part-one-leading-large}
\end{align}
\begin{itemize}
    \item If $1/2<\alpha\leq3/4$, then $4\alpha-3\leq0$, so
$s_i\geq d/n$ implies
\[
 \sum_{i=1}^d s_i^{4\alpha-3}
 \leq d\rbra*{\frac{d}{n}}^{4\alpha-3}.
\]
Consequently, \cref{eq:part-one-cross-term} gives
\begin{align*}
 \abs*{\sum_{i=1}^d u_i s_i^{2\alpha-2}}
 &\leq\sqrt{8\rbra{C^2+1}}\frac{d}{\sqrt n}
 \rbra*{d\rbra*{\frac{d}{n}}^{4\alpha-3}}^{1/2}\\
 &=\sqrt{8\rbra{C^2+1}}d
 \rbra*{\frac{d}{n}}^{2\alpha-1}\\
 &\leq\sqrt{8\rbra{C^2+1}}d^{2-2\alpha},
\end{align*}
where the last inequality uses $d/n\leq1/d$.

    \item If $3/4\leq\alpha<1$, then $4\alpha-3\in\sbra{0,1}$.  Concavity gives
\begin{align*}
 \sum_{i=1}^d s_i^{4\alpha-3}
 \leq d^{1-(4\alpha-3)}
 \rbra*{\sum_{i=1}^d s_i}^{4\alpha-3}\leq3^{4\alpha-3}d^{4-4\alpha}.
\end{align*}
It follows from $d/\sqrt n\leq1$ and
\cref{eq:part-one-cross-term} that
\[
 \abs*{\sum_{i=1}^d u_i s_i^{2\alpha-2}}
 \leq\sqrt{8\rbra{C^2+1}}3^{(4\alpha-3)/2}
 d^{2-2\alpha}.
\]
\end{itemize}
Combining the above two cases with
\cref{eq:part-one-leading-large,eq:part-one-moment-decomposition} gives
\begin{align*}
 \E\sbra*{X^2}
 &\leq\rbra*{3^{2\alpha-1}
 +\sqrt{8\rbra{C^2+1}}
 \max\cbra*{1,3^{(4\alpha-3)/2}}}
 d^{2-2\alpha}.
\end{align*}

\textbf{Part (iii).} Now suppose $1/2<\alpha<1$ and $n^\alpha \geq d^{\alpha+1}$. Let
\[
 S=\sum_{i=1}^d s_i^\alpha,
 \qquad
 G=\rbra{1-\alpha}S
 +\alpha\tr\rbra{\rho\sigma^{\alpha-1}}.
\]
Both terms in $G$ are nonnegative.  Hence
$S\leq G/\rbra{1-\alpha}$; moreover, \cref{lem:one-step-bias} gives $G\geq \tr\rbra{\rho^{\alpha}} \geq1$.
Since $0<2\alpha-1<\alpha$, H\"older's inequality gives
\begin{align}
 \sum_{i=1}^d s_i^{2\alpha-1}
 \leq d^{1-(2\alpha-1)/\alpha}
 \rbra*{\sum_{i=1}^d s_i^\alpha}^{(2\alpha-1)/\alpha}.
 \label{eq:part-one-relative-leading}
\end{align}

\begin{itemize}
    \item If $1/2<\alpha<3/4$, then
$s_i^{4\alpha-3}\leq\rbra{d/n}^{2\alpha-2}s_i^{2\alpha-1}$.
By \cref{eq:part-one-relative-leading,eq:part-one-cross-term}, 
we have 
\begin{align*}
 \abs*{\sum_{i=1}^d u_i s_i^{2\alpha-2}}
 &\leq\sqrt{8\rbra{C^2+1}}\,
 d^{1/(2\alpha)}
 \rbra*{\frac{d}{n}}^{\alpha-1/2}
 S^{(2\alpha-1)/(2\alpha)}\\
 &\leq\sqrt{8\rbra{C^2+1}}\,
 d^{1/\alpha-1}S^{(2\alpha-1)/(2\alpha)}\\
 &\leq
 \sqrt{8\rbra{C^2+1}}\,
 \rbra{1-\alpha}^{-(2\alpha-1)/(2\alpha)}
 d^{1/\alpha-1}G^2.
\end{align*}
The last inequality again follows from
$S\leq G/\rbra{1-\alpha}$ and $G\geq1$.
\item If $3/4\leq\alpha<1$, put $r=4\alpha-3$.  For $r>0$, H\"older's
inequality gives
$\sum_i s_i^r\leq d^{1-r/\alpha}S^{r/\alpha}$; for $r=0$, the same formula
is the identity $\sum_i s_i^0=d$.  Substituting this bound into
\cref{eq:part-one-cross-term} gives
\begin{align*}
 \abs*{\sum_{i=1}^d u_i s_i^{2\alpha-2}}
 &\leq\sqrt{8\rbra{C^2+1}}\,
 d^{1-r/(2\alpha)}\rbra*{\frac{d}{n}}^{1/2}
 S^{r/(2\alpha)}\\
 &\leq\sqrt{8\rbra{C^2+1}}\,
 d^{1/\alpha-1}S^{r/(2\alpha)}\\
 &\leq
 \sqrt{8\rbra{C^2+1}}\,
 \rbra{1-\alpha}^{-r/(2\alpha)}
 d^{1/\alpha-1}G^2.
\end{align*}
Here the second inequality uses $d/n\leq d^{-1/\alpha}$ and
$r+1=4\alpha-2$, while the last inequality uses
$S\leq G/\rbra{1-\alpha}$ and $G^{r/(2\alpha)}\leq G^2$.
\end{itemize}

Combining the above two cases with
\cref{eq:part-one-leading-large,eq:part-one-moment-decomposition} gives
\begin{align*}
 \E\sbra*{X^2}
 &\leq\rbra*{3^{2\alpha-1}
 +\sqrt{8\rbra{C^2+1}}
 \max\cbra*{1,3^{(4\alpha-3)/2}}}
 d^{2-2\alpha}.
\end{align*}
Since $4\alpha-3\leq2\alpha-1$, the above two cases gives
\[
 \abs*{\sum_{i=1}^d u_i s_i^{2\alpha-2}} \leq \sqrt{8\rbra{C^2+1}}\,
 \rbra{1-\alpha}^{-(2\alpha-1)/(2\alpha)}
 d^{1/\alpha-1}G^2.
\]
Together with
\cref{eq:part-one-relative-leading,eq:part-one-moment-decomposition},
this gives
\begin{align}
 \E\sbra*{X^2}
 \leq
 \rbra*{
 \rbra{1-\alpha}^{-(2\alpha-1)/\alpha}
 +\sqrt{8\rbra{C^2+1}}\,
 \rbra{1-\alpha}^{-(2\alpha-1)/(2\alpha)} } d^{1/\alpha-1}G^2. \label{eq:EX2-iii}
\end{align}
Moreover, \cref{lem:regularized-chi2,lem:one-step-bias} and
$n^\alpha\geq d^{\alpha+1}$ give
\begin{align*}
 0
 \leq G-\tr\rbra{\rho^\alpha}\leq
 \rbra*{\frac{d}{n}}^{\alpha-1}
 \mathrm{D}_{\chi^2}\rbra{\rho\,\|\,\sigma}\leq8\rbra{C^2+1}\frac{d^{\alpha+1}}{n^\alpha}
 \leq8\rbra{C^2+1}.
\end{align*}
Since $\mathrm{F}_\alpha\rbra{\rho}\geq1$, it follows that $G\leq\rbra{8\rbra{C^2+1}+1}
 \tr\rbra{\rho^\alpha}$.
Substituting this into \cref{eq:EX2-iii} gives $\E\sbra{X^2}
  \leq C_\alpha d^{1/\alpha-1}
  \rbra{\tr\rbra{\rho^\alpha}}^2$ with
\begin{align*}
 C_\alpha
 ={}&\rbra*{8\rbra{C^2+1}+1}^2
 \rbra*{
 \rbra{1-\alpha}^{-(2\alpha-1)/\alpha}
 +\sqrt{8\rbra{C^2+1}}\,
 \rbra{1-\alpha}^{-(2\alpha-1)/(2\alpha)}
 }.
\end{align*}
\end{proof}

\subsubsection{Proof of Lemma \ref{thm:common-estimator}} \label{sec:proof-common-estimator}

Using the above technical lemmas, we prove \cref{thm:common-estimator} as follows. 

\begin{proof}[Proof of \cref{thm:common-estimator}]
Let $\widehat{\rho}$ be the tomography output of $\mathsf{TomoBures\chi^2}\rbra{\rho, d, n}$ and write
$\sigma=\widehat{\rho}+\frac{2d}{n}I=\sum_{i=1}^d
s_i\ketbra{v_i}{v_i}$, as in \cref{alg:powertrace}.  
Note that $X_j$ for $1 \leq j \leq m$ are independent and identically distributed, with
\[
 \Pr\sbra*{J_j=i\mid\sigma}=\bra{v_i}\rho\ket{v_i},
 \qquad X_j=s_{J_j}^{\alpha-1}.
\]
Consequently,
\begin{equation}\label{eq:part-one-conditional-mean}
 \E\sbra*{\widetilde{\mathrm{F}}_\alpha}
 =\rbra{1-\alpha}\tr\rbra{\sigma^\alpha}
 +\alpha\tr\rbra{\rho\sigma^{\alpha-1}}.
\end{equation}
Let $E$ be the event in \cref{lemma:bures-chi2-tomo} with
$\Pr\sbra{E}\geq0.99$. 
On event $E$,
\cref{lem:regularized-chi2,lem:one-step-bias} and
\cref{eq:part-one-conditional-mean} give
\begin{align}
 0\leq \E\sbra*{\widetilde{\mathrm{F}}_\alpha}-F
 \leq
 \rbra*{\frac{d}{n}}^{\alpha-1}
 \mathrm{D}_{\chi^2}\rbra{\rho\,\|\,\sigma}
 =O_\alpha\!\left(\frac{d^{\alpha+1}}{n^\alpha}\right)
 =O_\alpha\rbra{\varepsilon}.
 \label{eq:part-one-bias}
\end{align}
On the other hand, 
\begin{equation*}
 \Var\sbra*{\widetilde{\mathrm{F}}_\alpha}
 =\frac{\alpha^2}{m}\Var\sbra*{X_1}
 \leq\frac{\alpha^2}{m}\E\sbra*{X_1^2}.
\end{equation*}
By Chebyshev's inequality,
\begin{equation} \label{eq:prob-cheby}
\Pr\sbra*{ \abs*{\widetilde{\mathrm{F}}_\alpha-\E\sbra*{\widetilde{\mathrm{F}}_\alpha}} \leq t } \geq 1 - \frac{\Var\sbra{\widetilde{\mathrm{F}}_\alpha}}{t^2} \geq 1 - \frac{\alpha^2}{m t^2} \E\sbra*{X_1^2}.
\end{equation}

\textbf{Part (i).} Suppose first that $0<\alpha\leq1/2$.  Part~(i) of
\cref{lem:correction-moments} gives that on $E$, $\E\sbra{X_1^2} \leq O\rbra{d^{2\alpha}n^{1-2\alpha}}$.
By \cref{eq:prob-cheby}, with appropriate choice of $C_{\alpha}^{(0)}$ and $C_{\alpha}^{(1)}$, we have $\Pr\sbra{\abs{\widetilde{\mathrm{F}}_\alpha - \E\sbra{\widetilde{\mathrm{F}}_\alpha}} \leq \varepsilon/2} \geq 0.99$. 
Together with \cref{eq:part-one-bias}, this yields $\Pr\sbra{\abs{\widetilde{\mathrm{F}}_\alpha - F} \leq \varepsilon} \geq 0.98$. 

\textbf{Part (ii).} Suppose that $1/2<\alpha<1$.  The choice of $n$ satisfies $n\geq d^{1+1/\alpha}\geq d^2$.
Hence part~(ii) of \cref{lem:correction-moments} gives that on $E$, $\E\sbra{X_1^2}\leq O_\alpha \rbra{d^{2-2\alpha}}$.
By \cref{eq:prob-cheby}, with $t=\varepsilon/2$ and
$C_\alpha^{(1)}$ sufficiently large, we have on $E$, $\Pr\sbra{
 \abs{\widetilde{\mathrm{F}}_\alpha-\E\sbra{\widetilde{\mathrm{F}}_\alpha}}
 \leq{\varepsilon}/{2}}
 \geq0.99$.
Together with \cref{eq:part-one-bias}, this gives on $E$, $\Pr\sbra{
 \abs{\widetilde{\mathrm{F}}_\alpha-F}
 \leq\varepsilon}
 \geq0.98$.

\textbf{Part (iii).} Suppose that $1/2<\alpha<1$.  The choice of
$n$ satisfies $n^\alpha\geq d^{\alpha+1}$.  Hence part~(iii) of
\cref{lem:correction-moments} and \cref{eq:global-physical-range} give
that on $E$, $\E\sbra*{X_1^2}
 \leq C_\alpha d^{1/\alpha-1}F^2$.
By \cref{eq:prob-cheby}, with $t=\varepsilon F/2$ and
$C_\alpha^{(1)}$ sufficiently large, we have on $E$, $\Pr\sbra{
 \abs{\widetilde{\mathrm{F}}_\alpha-\E\sbra{\widetilde{\mathrm{F}}_\alpha}}
 \leq{\varepsilon F}/{2}}
 \geq0.99$.
Since \cref{eq:part-one-bias} gives
$0\leq \E\sbra{\widetilde{\mathrm{F}}_\alpha}-F\leq\varepsilon/2\leq\varepsilon F/2$, we obtain on $E$, $\Pr\sbra{
 \abs{\widetilde{\mathrm{F}}_\alpha-F}
 \leq\varepsilon F}
 \geq0.98$.
\end{proof}

\section{R\'enyi entropy with \texorpdfstring{$\alpha>1$}{α > 1}}

In this section, we prove the following result.

\begin{theorem}[R\'enyi entropy estimation for $\alpha > 1$] \label{gt:thm:main}
    For non-integer $\alpha > 1$, given sample access to an unknown $d$-dimensional quantum state $\rho$ and the additive error $\varepsilon$, we can estimate the $\alpha$-R\'enyi entropy $\mathrm{S}_\alpha^{\mathrm{R}}\rbra{\rho}$ using 
    \[
    O\rbra*{\frac{d^2}{\varepsilon^{1/\alpha}}+\frac{d^{1-1/\alpha}}{\varepsilon^2}}
    \]
    samples of $\rho$. 
\end{theorem}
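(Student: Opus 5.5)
The plan follows the outline in the Techniques section. We estimate $F = \tr\rbra{\rho^\alpha} \in [d^{1-\alpha},1]$ to relative error $\delta = 1-e^{-(\alpha-1)\varepsilon} = \Theta_\alpha(\varepsilon)$. Applying \cref{lem:global-log-conversion} to $\widehat{\mathrm{S}} = \log(\widehat F)/(1-\alpha)$ then gives additive error $\varepsilon$ for $\mathrm{S}_\alpha^{\mathrm{R}}$.

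\textbf{One-batch estimator.} Draw a random purification $\ket{\psi}\in\C^d\otimes\C^d$ of $\rho$, following \cite{tang2025conjugate}. The point is that $s$ samples of $\rho$ simulate $\ket{\psi}^{\otimes s}$ for a Haar-random purification. Run Hayashi's pure-state estimator \cite{Hay98} on $\ket{\psi}^{\otimes s}$ to get $\ket{\widehat\psi}$, and set $Y_s=\tr(\widehat\rho^\alpha)$ with $\widehat\rho=\tr_{\mathrm{anc}}\ketbra{\widehat\psi}{\widehat\psi}$.

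The output distribution of Hayashi's estimator is explicit: it is covariant, with density proportional to $\abs{\braket{\widehat\psi}{\psi}}^{2s}$ on the $d^2$-dimensional sphere. So we may write $\ket{\widehat\psi}=\sqrt{1-\eta}\,\ket{\psi}+\sqrt{\eta}\,\ket{\phi}$. Here $\eta$ follows a Beta$(d^2-1,s+1)$-type law with mean $\approx d^2/s$, and $\ket\phi$ is Haar-random on the orthogonal complement, independent of $\eta$ up to a phase.

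\textbf{Bias expansion.} Expand $\tr(\widehat\rho^\alpha)$ around $\rho$. Write $\widehat\rho=(1-\eta)\rho+\eta\,\tr_{\mathrm{anc}}\ketbra{\phi}{\phi}+\sqrt{\eta(1-\eta)}\,(\text{cross terms})$, and average over the Haar $\ket\phi$ at fixed $\eta$. Odd-order terms in the cross part vanish. The conditional expectation is then a function of $\eta$ that we Taylor expand to order $k=\lceil\alpha\rceil$ (or slightly more):
\[
\E[Y_s\mid\eta]-F=\sum_{j=1}^k c_j\eta^j+r(\eta),\qquad |r(\eta)|\lesssim F\,(d^2\eta)^{\alpha}\ \text{(heuristically)}.
\]
The coefficients $c_j$ depend on $\rho$ but not on $s$. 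Because $\alpha$ is non-integer, $x^\alpha$ is not smooth at $0$ to all orders; small eigenvalues therefore produce the non-analytic remainder, whose size is controlled by $(d^2/s)^\alpha$ relative to $F$. Taking expectations over $\eta$ gives $\E[Y_s]-F=\sum_j c_j\mu_j(s)+R(s)$ with $\mu_j(s)=\E\eta^j$. This is the content of \cref{gt:prop:one-batch}.

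\textbf{Richardson-type cancellation.} Choose batch sizes $s_i=2^{i-1}s$ for $i=1,\dots,k+1$. Choose $a_i$ solving $\sum_i a_i=1$ and $\sum_i a_i\mu_j(s_i)=0$ for $j\le k$. The Beta moments $\mu_j(s)$ are explicit rational functions, roughly $(d^2/s)^j$, so the system is a perturbed Vandermonde system and $\sum_i|a_i|=O_\alpha(1)$. The combined bias is then $O(F(d^2/s)^\alpha)\le \delta F/2$ once $s\gtrsim d^2/\delta^{1/\alpha}$.

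\textbf{Variance and repetition.} For the variance we need $\Var[Y_s]\lesssim F^2 d^{1-1/\alpha}/s+(\text{lower order})$. One route is to note that $\tr(\widehat\rho^\alpha)$ fluctuates mainly through the linear term $\alpha\tr(\rho^{\alpha-1}(\widehat\rho-\rho))$. Its variance is about $\tr(\rho^{2\alpha-1})\,\E\eta$, and $\tr(\rho^{2\alpha-1})\le d^{1-1/\alpha}F^2$ by Hölder; the $d^2$ from $\E\eta$ should partially cancel under the Haar average, and this needs checking. Averaging $N$ independent copies of the combined estimator, with total samples $N\sum_i s_i = O(Ns)$, and applying Chebyshev yields total cost $O(d^2/\varepsilon^{1/\alpha}+d^{1-1/\alpha}/\varepsilon^2)$. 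The combined bias and variance bounds are \cref{gt:prop:combined}.

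\textbf{Main obstacle.} The hardest step is the remainder bound $|R(s)|=O(F(d^2/s)^\alpha)$ with the factor $F$ retained, that is, as a relative rather than additive bound. The difficulty is that $x^\alpha$ has non-integer order, so the Taylor remainder must be handled via a divided-difference or integral representation over eigenvalue perturbations. I would first try a Duhamel/operator-integral representation of $\tr((\rho+\Delta)^\alpha)$ combined with Haar moment bounds for $\Delta$.
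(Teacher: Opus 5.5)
Your architecture matches the paper's. The shared pieces are: relative-error estimation of $F$ followed by the logarithm conversion; random purification plus Hayashi's POVM with $T\sim\operatorname{Beta}(d^2-1,s+1)$; odd terms killed by $G\mapsto -G$; and Richardson cancellation over doubling batch sizes with $\sum_\ell|a_\ell|=O_\alpha(1)$. Averaging $N$ copies instead of using one batch of size $m$ is immaterial.

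The gap is exactly the step you flag, and the route you suggest for it fails. You expand $X\mapsto\tr X^\alpha$ around $\rho$ in $\Delta=\widehat\rho-\rho$. But $\Delta$ contains the cross term $\sqrt{\eta(1-\eta)}\,(MG^\dagger+GM^\dagger)$ of size $\sqrt\eta$, while $x\mapsto x^\alpha$ has only order-$\alpha$ regularity at $0$. An operator-integral expansion of order $\lceil\alpha\rceil-1$ therefore only certifies a remainder of order $\|\Delta\|_\alpha^\alpha=\Theta(\eta^{\alpha/2})$. That gives a relative bias $(d^2/s)^{\alpha/2}$ and hence a $d^2/\varepsilon^{2/\alpha}$ cost.

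You also cannot push the expansion further. Take $\rho=\ketbra{0}{0}$ with $d=2$, and let $G$ have entries $g_{ij}$ with $g_{00}=0$. The smallest eigenvalue of $\widehat\rho$ is $\eta|g_{11}|^2+o(\eta)$, so $\tr(\widehat\rho^\alpha)=1-\alpha\eta|g_{11}|^2+o(\eta)$. The first-order term instead gives $1-\alpha\eta(|g_{10}|^2+|g_{11}|^2)$. So the remainder contains a genuine $\Theta(\eta)$ piece. Yet the second-order term that would capture it contains $\tfrac12 f''(0)|\Delta_{11}|^2=\infty$ for $1<\alpha<2$, and the analogous higher derivatives blow up for $\alpha>2$.

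Two smaller errors in the same step. Your heuristic $|r(\eta)|\lesssim F(d^2\eta)^\alpha$ is off: averaged over $\eta\approx d^2/s$ it gives $F(d^4/s)^\alpha$, whereas the correct conditional remainder is $O(F\eta^\alpha)$. And the conditional mean is not $C^{\lceil\alpha\rceil}$ at $0$, so you must stop at $k=\lceil\alpha\rceil-1$, not $\lceil\alpha\rceil$.

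The missing idea is to expand in the purification's coefficient matrix rather than in $\rho$. Write $\tr(\widehat\rho^\alpha)=\|\widehat M\|_{2\alpha}^{2\alpha}$ with $\widehat M=\sqrt{1-T}\,M+\sqrt T\,G$, which is linear in $\sqrt T$. The map $X\mapsto\|X\|_p^p$ with $p=2\alpha$ has regularity $2\alpha$, matched to that scale. This is why a zero eigenvalue of $\rho$ moves by $O(\eta)$, not $O(\sqrt\eta)$.

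The paper applies the Schatten-norm Taylor expansion of order $\lceil 2\alpha\rceil-1$ to $(1-t)^\alpha\,\E_G\|M+uG\|_{2\alpha}^{2\alpha}$ with $u=\sqrt{t/(1-t)}$. That expansion has multilinear bounds $C\|M\|_p^{p-j}\prod_i\|H_i\|_p$ and uniform remainder $C_p\|H\|_p^p$. It is combined with the Gaussian-comparison bound $\E\|G\|_{2\alpha}^{2\alpha}=O(d^{1-\alpha})\le O(F)$. This yields $|r(t)|=O(Ft^\alpha)$ and $|c_j|=O(F)$ (using $d^{1/\alpha-1}\le F^{1/\alpha}$), which is the relative form you need.

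The variance step is also left open in your proposal. The $1/d^2$ you hope for is real for the linear term: for $G$ uniform on the unit sphere of the $(d^2-1)$-dimensional $M^\perp$, $\E|\langle G,X\rangle|^2\le\|X\|_{\mathrm F}^2/(d^2-1)$. But you still need to control the full nonlinear angular fluctuation and the radial part. The paper does this via the law of total variance. For the angular term it uses the spherical Poincar\'e inequality on $M^\perp$ (real dimension $2d^2-2$) together with $\|\nabla_X\|X\|_{2\alpha}^{2\alpha}\|_{\mathrm F}^2=(2\alpha)^2\tr[(XX^\dagger)^{2\alpha-1}]$, giving $O(\tr(\rho^{2\alpha-1})/s+\ldots)$. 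For the radial term it bounds $\Var_T\phi(T)$ through the expansion above and $\Var(T^j)=O((d^2/s)^{2j}/d^2)$.
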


\subsection{The estimator}
    First, our estimator requires the following subroutines.

    \begin{lemma}[Random purification, {\cite[Lemma 2.11]{tang2025conjugate}}]\label{lemma-8170436}
        For every integer $s\ge1$, there is a channel $\Lambda_{\mathrm{pur}}^{(s)}$ such that for every $d$-dimensional quantum state $\rho$,
        \[\Lambda_{\mathrm{pur}}^{(s)}(\rho^{\otimes s})=\E_{\ket{\psi_\rho}}\sbra*{\ketbra{\psi_\rho}{\psi_\rho}^{\otimes s}},\]
where $\ket{\psi_\rho}\in\mathbb{C}^d\otimes\mathbb{C}^d$ is sampled uniformly from the space of purifications of $\rho$, meaning that $\tr_\textup{anc}(\ketbra{\psi_\rho}{\psi_\rho})=\rho$.
    \end{lemma}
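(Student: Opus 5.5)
The plan is to compute the target state $\E_{\ket{\psi_\rho}}\sbra*{\ketbra{\psi_\rho}{\psi_\rho}^{\otimes s}}$ explicitly via Schur--Weyl duality. We will see that it is a \emph{linear} function of $\rho^{\otimes s}$, block by block, and then exhibit a channel realizing that linear map.

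\textbf{Step 1 (parametrize the purifications).} Write $\ket{\Omega}=\sum_{i=1}^d\ket{i}\ket{i}$ (unnormalized). Every purification of $\rho$ on $\mathbb{C}^d\otimes\mathbb{C}^d$ has the form $\ket{\psi_\rho}=(\sqrt{\rho}\otimes U)\ket{\Omega}$ with $U$ unitary on the ancilla. The uniform distribution over purifications is obtained by taking $U$ Haar random. Hence
\[
\E_{\ket{\psi_\rho}}\sbra*{\ketbra{\psi_\rho}{\psi_\rho}^{\otimes s}}=\E_{U}\sbra*{(\sqrt\rho^{\otimes s}\otimes U^{\otimes s})\,\ketbra{\Omega}{\Omega}^{\otimes s}\,(\sqrt\rho^{\otimes s}\otimes U^{\otimes s})^\dag}.
\]

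\textbf{Step 2 (Schur--Weyl decomposition).} Decompose $(\mathbb{C}^d)^{\otimes s}\cong\bigoplus_{\lambda}\mathcal{Q}_\lambda\otimes\mathcal{P}_\lambda$. Here $\mathcal{Q}_\lambda$ carries the irrep $q_\lambda$ of $\mathrm{U}(d)$ (extended polynomially to $\mathrm{GL}_d$) and $\mathcal{P}_\lambda$ carries the irrep of $S_s$. Choose bases adapted to this decomposition on both the system and the ancilla. Then $\ket{\Omega}^{\otimes s}=\bigoplus_\lambda\ket{\Omega_{\mathcal{Q}_\lambda}}\ket{\Omega_{\mathcal{P}_\lambda}}$, after fixing conjugation conventions, since $\ket\Omega^{\otimes s}$ is the vectorized identity on $(\mathbb{C}^d)^{\otimes s}$.

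Moreover $\sqrt\rho^{\otimes s}=\bigoplus_\lambda q_\lambda(\sqrt\rho)\otimes I_{\mathcal{P}_\lambda}$ and $U^{\otimes s}=\bigoplus_\lambda q_\lambda(U)\otimes I_{\mathcal{P}_\lambda}$. Apply the Haar average over $U$, which acts only on the ancilla $\mathcal{Q}$-factors:

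\begin{itemize}
\item Off-diagonal blocks $\lambda\neq\mu$ vanish by Schur's lemma, since the irreps are inequivalent.
\item Inside a block, the twirl replaces the ancilla $\mathcal{Q}_\lambda$ part by $I/\dim\mathcal{Q}_\lambda$. After partially tracing the $\mathcal{Q}$-indices, the system $\mathcal{Q}_\lambda$ register is left with $q_\lambda(\sqrt\rho)q_\lambda(\sqrt\rho)^\dag=q_\lambda(\rho)$.
\item The $\mathcal{P}_\lambda$ registers of system and ancilla stay in the unnormalized maximally entangled state $\ketbra{\Omega_{\mathcal{P}_\lambda}}{\Omega_{\mathcal{P}_\lambda}}$, since nothing acts on them.
\end{itemize}

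The result is
\[
\E\sbra*{\ketbra{\psi_\rho}{\psi_\rho}^{\otimes s}}=\bigoplus_\lambda q_\lambda(\rho)\otimes\frac{I_{\mathcal{Q}'_\lambda}}{\dim\mathcal{Q}_\lambda}\otimes\ketbra{\Omega_{\mathcal{P}_\lambda}}{\Omega_{\mathcal{P}_\lambda}},
\]
with registers suitably reordered.

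\textbf{Step 3 (the channel).} Compare with $\rho^{\otimes s}=\bigoplus_\lambda q_\lambda(\rho)\otimes I_{\mathcal{P}_\lambda}$. Define $\Lambda^{(s)}_{\mathrm{pur}}$ as follows:

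\begin{enumerate}
\item Perform weak Schur sampling, i.e.\ measure $\lambda$. This causes no disturbance, since $\rho^{\otimes s}$ is block diagonal.
\item Discard the $\mathcal{P}_\lambda$ register, which is maximally mixed.
\item Prepare fresh registers in the state $\frac{I}{\dim\mathcal{Q}_\lambda}\otimes\frac{1}{\dim\mathcal{P}_\lambda}\ketbra{\Omega_{\mathcal{P}_\lambda}}{\Omega_{\mathcal{P}_\lambda}}$.
\item Embed everything back into $(\mathbb{C}^d\otimes\mathbb{C}^d)^{\otimes s}$ via the Schur transforms.
\end{enumerate}

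Each step is CPTP. A normalization check (comparing traces $\dim\mathcal{P}_\lambda\tr q_\lambda(\rho)$ on both sides) shows that the output is exactly the expression from Step 2, for every $\rho$.

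\textbf{Main obstacle.} I expect the main obstacle to be the bookkeeping in Step 2. The ancilla transforms under $U$ rather than $\bar U$, so one must use the transpose trick $(I\otimes U)\ket\Omega=(U^{T}\otimes I)\ket\Omega$, and track how $\ket\Omega^{\otimes s}$ splits as $\ket{\Omega_{\mathcal{Q}_\lambda}}\ket{\Omega_{\mathcal{P}_\lambda}}$ with the correct dual and conjugate representations. Only with these conventions fixed does Schur's lemma kill the cross terms and produce $I/\dim\mathcal{Q}_\lambda$ cleanly. I would handle this by working with real orthonormal Gelfand--Tsetlin and Young bases, in which the transposes are harmless.
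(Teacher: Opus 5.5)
The paper gives no proof of this lemma; it imports the result directly from \cite{tang2025conjugate}. Your argument is a correct, self-contained proof along the standard Schur--Weyl lines behind that result.

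The Specht registers you carry along in the state $\ket{\Omega_{\mathcal{P}_\lambda}}/\sqrt{\dim\mathcal{P}_\lambda}$ are exactly the $S_s$-invariant vector realizing $\operatorname{Sym}^s(\mathbb{C}^d\otimes\mathbb{C}^d)\cong\bigoplus_\lambda\mathcal{Q}_\lambda\otimes\mathcal{Q}_\lambda$. Your channel (weak Schur sampling, discard the Specht register, adjoin $I/\dim\mathcal{Q}_\lambda$ on the ancilla irrep, re-embed) is the usual construction, and the weights $\dim\mathcal{P}_\lambda\tr q_\lambda(\rho)$ match as you check.

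The obstacle you anticipate is milder than it looks. Once you fix a real orthogonal Schur transform, $\ket{\Omega}^{\otimes s}$ splits as you claim. Then $\sqrt\rho^{\otimes s}$ and $U^{\otimes s}$ each act on their own side, so no transpose trick is needed. Everything reduces to $\E_U\sbra{q_\lambda(U)Yq_\mu(U)^\dagger}=\delta_{\lambda\mu}\tr(Y)I/\dim\mathcal{Q}_\lambda$.

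If you prefer not to rely on a real Schur basis, use the complex-conjugate Schur basis on the ancilla instead. This works because $\ket{\Omega}=\sum_k\ket{e_k}\ket{\overline{e_k}}$ for any orthonormal basis $\{\ket{e_k}\}$. In that basis, the $\mathrm{U}(d)$-action on the ancilla's block $\lambda$ is $\overline{q_\lambda(\overline{U})}$, which is still a unitary irrep of type $\lambda$, so the same Schur-orthogonality argument goes through.
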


    \begin{definition}[Hayashi's POVM \cite{Hay98}]\label{def-8170435}
        For a pure state $\ket{\psi}\in\C^{d^2}$, Hayashi's covariant POVM on
        $\operatorname{Sym}^s(\C^{d^2})$ is
        \begin{equation}\label{gt:eq:hayashi-povm}
        \dim\bigl(\operatorname{Sym}^s(\C^{d^2})\bigr)
         \ket{v}\!\bra{v}^{\otimes s}\,\mathrm{d}v = \binom{d^2+s-1}{s} \ketbra{v}{v}^{\otimes s} \mathrm{d}v,
        \end{equation}
        where $\mathrm{d}v$ is Haar probability measure.
        Note that on input $\ket{\psi}^{\otimes s}$, the outcome density is proportional to $|\braket{v}{\psi}|^{2s}$.
    \end{definition}
    
Now we are ready to define our estimator. 
Fix an order $\alpha>1$ and an accuracy $0<\varepsilon\le1$.  Set
\begin{equation}\label{gt:eq:parameters}
 \theta:=1-e^{-(\alpha-1)\varepsilon},
 \qquad
 k:=\lceil\alpha\rceil-1.
\end{equation}
Fix an $\alpha$-dependent constant $K_\alpha\ge1$ (defined later in \cref{eq-8170410}) and define
\begin{equation}\label{gt:eq:batch-sizes}
 m:=\ceil*{ K_\alpha\rbra*{
 \frac{d^2}{\theta^{1/\alpha}}
 +\frac{d^{1-1/\alpha}}{\theta^{2}}
 } },
 \qquad
 s_\ell:=2^\ell m
 \quad(0\le\ell\le k).
\end{equation}
Because $\theta\le1$, every $s_\ell$ is at least $d^2$.
For integers $j\ge0$ and $s\ge1$, put
\begin{equation}\label{gt:eq:mu-def}
 \mu_0(s):=1,
 \qquad
 \mu_j(s):=\prod_{r=0}^{j-1}
 \frac{d^2-1+r}{d^2+s+r}
 \quad(j\ge1).
\end{equation}
Let $a_0,\ldots,a_k$ be the unique solution of
\begin{equation}\label{gt:eq:rich-system}
 \sum_{\ell=0}^k a_\ell=1,
 \qquad
 \sum_{\ell=0}^k a_\ell\mu_j(s_\ell)=0
 \quad(1\le j\le k).
\end{equation}
The existence, uniqueness, and uniform boundedness of these coefficients are proved in
Lemma~\ref{gt:lem:richardson}.

\begin{algorithm}[htbp]
\caption{$\mathsf{RenyiEntropy}_{>1}\rbra{\rho,d,\alpha,\varepsilon}$}
\label{gt:alg:main}
\begin{algorithmic}[1]
\Require Independent samples of an unknown $d$-dimensional quantum state $\rho$; $\alpha > 1$; $0<\varepsilon\le1$.
\State Compute $\theta,k,m,s_0,\ldots,s_k$ from
\cref{gt:eq:parameters,gt:eq:batch-sizes}.
\State Compute $\mu_j(s_\ell)$ from \cref{gt:eq:mu-def} and solve
\cref{gt:eq:rich-system} for $a_0,\ldots,a_k$.
\For{$\ell=0,\ldots,k$}
  \State Reserve a fresh batch of $s_\ell$ samples of $\rho$.
  \State Apply the random-purification channel $\Lambda_{\mathrm{pur}}^{(s_\ell)}$ to this
  batch. \Comment{see \cref{lemma-8170436}}
  \State Apply the covariant pure-state POVM to the channel output and denote its outcome by
  $\ket{v_\ell}\in\C^d\otimes\C^d$. \Comment{see \cref{def-8170435}}
  \State Set $\widehat\rho_\ell\gets\tr_{\mathrm{anc}}\ket{v_\ell}\!\bra{v_\ell}$ and
  $Y_\ell\gets\tr(\widehat\rho_\ell^\alpha)$.
\EndFor
\State Set $\widetilde{\mathrm{F}}_\alpha\gets\sum_{\ell=0}^k a_\ell Y_\ell$.
\State Set
$\widehat{\mathrm{F}}_\alpha\gets
\operatorname{Clip}_{[d^{1-\alpha},1]}(\widetilde{\mathrm{F}}_\alpha)$.
\State \Return
$\displaystyle
\widehat{\mathrm{S}}_\alpha^{\mathrm{R}}
\gets\frac{\log\rbra{\widehat{\mathrm{F}}_\alpha}}{1-\alpha}$.
\end{algorithmic}
\end{algorithm}

\subsection{Main proof}

For this section only, abbreviate $F:=\mathrm{F}_\alpha(\rho)$. 
To prove \cref{gt:thm:main}, we need the following two lemmas.

\begin{lemma}[One-batch estimates]\label{gt:prop:one-batch}
Let $s\ge d^2$, and let $Y_s$ be the power estimate obtained from one batch of size $s$ in
Algorithm~\ref{gt:alg:main}.  There are coefficients $c_1(\rho),\ldots,c_k(\rho)$ independent
of $s$, such that
\begin{equation}\label{gt:eq:one-batch-mean}
 \E\sbra*{Y_s}
 =F+\sum_{j=1}^k c_j(\rho)\mu_j(s)+R_s,
 \qquad
 |c_j(\rho)|\le O_\alpha(F),
 \qquad
 |R_s|\le O_{\alpha}\!\left(F\cdot \left(\frac{d^2}{s}\right)^\alpha\right).
\end{equation}
Moreover,
\begin{equation}\label{gt:eq:one-batch-var}
 \Var\sbra*{Y_s}
 \le O_\alpha\!\left(F^2\cdot \left[
 \frac{d^{1-1/\alpha}}{s}
 +\left(\frac{d^2}{s}\right)^{2\alpha}
 \right]\right).
\end{equation}
\end{lemma}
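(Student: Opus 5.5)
The plan is to first reduce the one-batch estimate to an explicit random perturbation of the true purification, then expand $\tr(\cdot^\alpha)$ along it. Write $D:=d^2$ and let $\ket{\psi}$ be the random purification produced by \cref{lemma-8170436}.

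\textbf{Law of the Hayashi outcome.} Conditioned on $\ket{\psi}$, the outcome of Hayashi's POVM (\cref{def-8170435}) has density proportional to $\abs{\braket{v}{\psi}}^{2s}$ with respect to Haar measure. Under Haar measure, $x=\abs{\braket{v}{\psi}}^2\sim\mathrm{Beta}(1,D-1)$, and the component of $\ket{v}$ orthogonal to $\ket{\psi}$ is uniform and independent of $x$. Reweighting by $x^s$ therefore gives
\[
\ket{v}=\sqrt{1-t}\,e^{i\phi}\ket{\psi}+\sqrt{t}\,\ket{w},\qquad t\sim\mathrm{Beta}(D-1,s+1).
\]
Here $\ket{w}$ is Haar-random on $\ket{\psi}^\perp$, independent of $t$, and the phase $e^{i\phi}$ is irrelevant. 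A direct Beta-moment computation gives $\E[t^j]=\mu_j(s)$, which is exactly \cref{gt:eq:mu-def}. Consequently
\[
\widehat\rho=\rho+\Delta_t,\qquad \Delta_t=t(\tau_w-\rho)+\sqrt{t(1-t)}\,X_w,
\]
where $\tau_w=\tr_{\mathrm{anc}}\ketbra{w}{w}$ and $X_w=2\Real\tr_{\mathrm{anc}}\ketbra{\psi}{w}$ (up to the phase).

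\textbf{Mean expansion.} Define $\varphi(t):=\E_{w,\psi}\tr\rbra{(\rho+\Delta_t)^\alpha}$, so that $\E[Y_s]=\E_t[\varphi(t)]$. I will show
\[
\varphi(t)=F+\sum_{j=1}^k c_j t^j+r(t),\qquad \abs{c_j}=O_\alpha(F),\qquad \abs{r(t)}\le O_\alpha(F\,t^\alpha).
\]
Given this, \cref{gt:eq:one-batch-mean} follows from $\E[t^j]=\mu_j(s)$ and $\E[t^\alpha]\le O_\alpha((D/s)^\alpha)$, the latter by Beta moments together with $s\ge D$.

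To obtain the expansion I split the spectrum of $\rho$ at the scale $t/d$.
\begin{itemize}
\item[] \emph{Large eigenvalues.} On the eigenvalues of $\rho$ that are $\gg t/d$, $x^\alpha$ is analytic. The typical size of $\tau_w$ is $I/d$ and of $X_w$ is about $\sqrt{\rho/d}$, so a Fr\'echet/divided-difference expansion in the style of \cref{lemma:dfah,lem:trace-Hessian}, averaged over $w$, produces a polynomial in $t$. Odd powers of $\sqrt t$ vanish by the symmetry $w\mapsto -w$. Each coefficient is a combination of terms like $\tr(\rho^{\alpha-j})d^{-j}$, and these should be $O(F)$.
\item[] \emph{Small eigenvalues.} The part of the spectrum $\lesssim t/d$ contributes at most $d\,(t/d)^\alpha=d^{1-\alpha}t^\alpha\le F t^\alpha$, using $F\ge d^{1-\alpha}$ from \cref{eq:global-physical-range}. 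The Taylor tail past order $k=\lceil\alpha\rceil-1$ on the large part is also $O(F t^\alpha)$, because $x^\alpha$ is $C^{k}$ with $(\alpha-k)$-H\"older $k$-th derivative.
\end{itemize}
\emph{This step is the main obstacle.} Making the split rigorous for non-commuting perturbations, and in particular controlling the mixed large--small off-diagonal blocks of $X_w$, is delicate. My first attempt will be the integral representation
\[
x^\alpha=c\int_0^\infty\rbra*{\text{resolvent terms}}\lambda^{\alpha-k-1}\,\mathrm{d}\lambda .
\]
This converts the problem into resolvent expansions, whose remainders can be bounded uniformly with the cutoff set at $\lambda\approx t/d$.

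\textbf{Variance.} I bound $\Var[Y_s]\le\E\sbra{(Y_s-\E[\varphi(t)])^2}$ and decompose it into three pieces.
\begin{itemize}
\item[] \emph{Linear fluctuation in $w$.} The dominant term is $\alpha\sqrt{t}\,\tr(\rho^{\alpha-1}X_w)$. Its variance over $w$ is about $t\,\bra{\psi}\rho^{2\alpha-2}\otimes I\ket{\psi}/D=t\,\tr(\rho^{2\alpha-1})/D$, and $\E[t]\approx D/s$, so this piece is $O(\tr(\rho^{2\alpha-1})/s)$. Since $\norm{\rho}^\alpha\le F$, we get $\tr(\rho^{2\alpha-1})\le\norm{\rho}^{\alpha-1}F\le F^{2-1/\alpha}\le F^2d^{1-1/\alpha}$, the last step using $F\ge d^{1-\alpha}$. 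This gives the term $F^2d^{1-1/\alpha}/s$.
\item[] \emph{Fluctuation of $t$.} This contributes $O(F^2\Var t)=O(F^2D/s^2)\le O(F^2/s)$, which is dominated by the previous term.
\item[] \emph{Higher-order terms and remainder.} Squaring the pointwise remainder bound gives $O(F^2\E[t^{2\alpha}])=O(F^2(D/s)^{2\alpha})$. Intermediate-order terms are absorbed by AM--GM into the two displayed terms of \cref{gt:eq:one-batch-var}.
\end{itemize}
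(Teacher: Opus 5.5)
There is a genuine gap, and it is the step you flag as the main obstacle. The mean expansion $\varphi(t)=F+\sum_{j\le k}c_jt^j+r(t)$ with $\abs{r(t)}\le O_\alpha(Ft^\alpha)$ is the core of the lemma, and your proposal only sketches it.

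The mechanism you cite for the tail also does not give the needed order. In your $\rho$-picture, $\Delta_t$ contains the term $\sqrt{t(1-t)}\,X_w$. Expanding $\tr(\rho+\Delta_t)^\alpha$ to order $k=\lceil\alpha\rceil-1$ in $\Delta_t$, and using $(\alpha-k)$-H\"older continuity of the $k$-th derivative of $x^\alpha$, bounds the tail only by $O(\|\Delta_t\|_\alpha^\alpha)$. That is of order $t^{\alpha/2}$; for $\rho=I/d$ it is $\asymp Ft^{\alpha/2}$. This would give $\abs{R_s}=O(F(d^2/s)^{\alpha/2})$ and lose the $\varepsilon^{1/\alpha}$ rate. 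Reaching $t^\alpha$ needs an expansion to order roughly $2\alpha$ in $\sqrt t$, which $x^\alpha$ does not support near the bottom of the spectrum. Your spectral split at $t/d$ is meant to handle this. However, making it rigorous for the non-commuting $X_w$ is exactly what is missing: the mixed blocks, the $t$-dependent cutoff, and keeping the $c_j$ independent of $t$.

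The idea you are missing is to work with the coefficient matrix of the purification. With $\rho=MM^\dagger$ and $F=\|M\|_{2\alpha}^{2\alpha}$, the statistic is
\[
Y_s=\|\sqrt{1-t}\,M+\sqrt t\,G\|_{2\alpha}^{2\alpha}=(1-t)^\alpha\|M+uG\|_{2\alpha}^{2\alpha},\qquad u=\sqrt{t/(1-t)}.
\]
So the perturbation is linear and the exponent is $p=2\alpha>2$. The Schatten-norm Taylor expansion (\cref{gt:lem:schatten-taylor}) of $\|M+H\|_p^p$ to order $\lceil 2\alpha\rceil-1$, with uniform remainder $C_p\|H\|_p^p$, then applies:
\begin{itemize}
\item the symmetry $G\mapsto-G$ removes the odd terms;
\item the even terms are $b_ju^{2j}$ for $j\le k$, with $\abs{b_j}\lesssim F^{1-j/\alpha}d^{j(1/\alpha-1)}\le F$ by the isotropic bound $(\E\|G\|_q^r)^{1/r}\lesssim d^{1/q-1/2}$ (\cref{gt:lem:isotropic});
\item the remainder is $O(u^{2\alpha}\E\|G\|_{2\alpha}^{2\alpha})=O(Fu^{2\alpha})$;
\item a scalar Taylor expansion of $(1-t)^{\alpha-j}$ finishes the argument.
\end{itemize}
Doubling the Schatten exponent is what turns the $\sqrt t$-size perturbation into a $t^\alpha$ remainder, with no spectral splitting.

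The variance argument has a similar problem. Your three-piece decomposition needs a pointwise-in-$w$ expansion of $Y_s$ with a controlled remainder. Your remainder bound was stated only for $\varphi(t)$, i.e.\ after averaging over $w$. Saying the intermediate terms are ``absorbed by AM--GM'' does not control the higher-order fluctuations either; for $1<\alpha<2$ these involve derivatives of $x^\alpha$ that blow up on small eigenvalues.

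The paper needs no expansion here. By the law of total variance, conditionally on $T=t$ it combines three ingredients:
\begin{itemize}
\item the spherical Poincar\'e inequality (\cref{gt:lem:spherical-poincare}) on $M^\perp$, of real dimension $2d^2-2$;
\item the exact identity $\|\nabla_X\|X\|_{2\alpha}^{2\alpha}\|_{\mathrm F}^2=(2\alpha)^2\tr[(XX^\dagger)^{2\alpha-1}]$;
\item the Schatten triangle inequality with exponent $4\alpha-2$.
\end{itemize}
This gives $\Var_G\lesssim \frac{t}{d^2}\bigl(\tr(\rho^{2\alpha-1})+t^{2\alpha-1}d^{2-2\alpha}\bigr)$. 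The radial part $\Var_T[\phi(T)]$ is then handled by the mean expansion together with $\Var[T^j]\lesssim d^{-2}(d^2/s)^{2j}$. Your heuristic leading term $\tr(\rho^{2\alpha-1})/s$ and the bound $\tr(\rho^{2\alpha-1})\le F^2d^{1-1/\alpha}$ do match what this route produces.
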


The proof of \cref{gt:prop:one-batch} is given in \cref{gt:sec:one-batch-proof}.

\begin{lemma}[Bias and variance]\label{gt:prop:combined}
The raw estimator produced by Algorithm~\ref{gt:alg:main} satisfies: there exists constants $C_{\alpha}^{(1)}$ and $C_{\alpha}^{(2)}$ depending only on $\alpha$, such that
\begin{align}
 \left|\E\sbra*{\widetilde{\mathrm{F}}_\alpha}-F\right|
 &\le C_\alpha^{(1)} F\left(\frac{d^2}{m}\right)^\alpha,
 \label{gt:eq:combined-bias}\\
 \Var\sbra{\widetilde{\mathrm{F}}_\alpha}
 &\le C_\alpha^{(2)} F^2\left[
 \frac{d^{1-1/\alpha}}{m}
 +\left(\frac{d^2}{m}\right)^{2\alpha}
 \right].
 \label{gt:eq:combined-var}
\end{align}
\end{lemma}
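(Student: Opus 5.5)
The proof combines the one-batch estimates of \cref{gt:prop:one-batch} through the linear combination $\widetilde{\mathrm{F}}_\alpha=\sum_{\ell=0}^k a_\ell Y_\ell$. The one external ingredient is Lemma~\ref{gt:lem:richardson}, which I will invoke for the existence, uniqueness, and uniform boundedness $\sum_\ell|a_\ell|\le A_\alpha$ of the coefficients solving \cref{gt:eq:rich-system}, with $A_\alpha$ depending only on $\alpha$. I expect this boundedness to be the main obstacle if it has to be verified directly. The difficulty is that $\mu_j(s)\approx(d^2/s)^j$, so the system matrix $(\mu_j(s_\ell))$ is close to a scaled Vandermonde matrix in the nodes $x_\ell=(d^2-1)/(d^2+s_\ell)\approx 2^{-\ell}d^2/m$. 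If I had to prove it myself, I would first write $\mu_j(s)=x(s)^j(1+O_\alpha(x(s)))$. I would then rescale row $j$ by $(m/d^2)^j$, and show that the rescaled matrix is a small perturbation of the fixed invertible Vandermonde matrix with nodes $2^{-\ell}$. This requires the perturbation to be uniformly controlled, which holds since $s_\ell\ge d^2$, or else an exact product formula should be used. Given that, Cramer's rule bounds the $a_\ell$ by constants depending only on $k$, that is, only on $\alpha$.

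\textbf{Bias.} The batches are independent, and each $s_\ell=2^\ell m\ge d^2$, so \cref{gt:eq:one-batch-mean} applies to every batch. By linearity,
\[
\E\sbra{\widetilde{\mathrm{F}}_\alpha}=\Big(\sum_\ell a_\ell\Big)F+\sum_{j=1}^k c_j(\rho)\sum_\ell a_\ell\mu_j(s_\ell)+\sum_\ell a_\ell R_{s_\ell}.
\]
The crucial point is that the $c_j(\rho)$ do not depend on $s$, so \cref{gt:eq:rich-system} makes the first term exactly $F$ and kills the middle terms. Using $|R_{s_\ell}|\le O_\alpha(F(d^2/s_\ell)^\alpha)\le O_\alpha(F(d^2/m)^\alpha)$, since $s_\ell\ge m$, the remaining term is at most $A_\alpha\cdot O_\alpha(F(d^2/m)^\alpha)$. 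This gives \cref{gt:eq:combined-bias}.

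\textbf{Variance.} By independence of the batches, $\Var\sbra{\widetilde{\mathrm{F}}_\alpha}=\sum_\ell a_\ell^2\Var\sbra{Y_\ell}$. Applying \cref{gt:eq:one-batch-var} with $s_\ell\ge m$, each term is at most $O_\alpha\big(F^2[d^{1-1/\alpha}/m+(d^2/m)^{2\alpha}]\big)$. Finally, $\sum_\ell a_\ell^2\le A_\alpha^2$ gives \cref{gt:eq:combined-var}. All constants depend only on $\alpha$, because $k=\lceil\alpha\rceil-1$ does.
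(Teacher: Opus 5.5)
Your argument matches the paper's proof. Linearity together with \cref{gt:eq:rich-system} cancels the $s$-independent terms $c_j(\rho)\mu_j(s_\ell)$ and leaves $\sum_\ell a_\ell R_{s_\ell}$; batch independence gives $\sum_\ell a_\ell^2\Var\sbra{Y_\ell}$; both are then bounded using $s_\ell\ge m$ and Lemma~\ref{gt:lem:richardson}.

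Your optional sketch of that lemma is inaccurate, though. First, $\mu_j(s)/x(s)^j=\prod_{r<j}\frac{1+r/(d^2-1)}{1+r/(d^2+s)}$ deviates from $1$ by order $1/d^2$, not $O(x(s))$. Second, the rescaled nodes are $(1-d^{-2})/(2^\ell+d^2/m)$ rather than $2^{-\ell}$. This is why the paper uses the exact Vandermonde-type determinant, uniformly over $(d^2/m,1/m)\in[0,1]\times[0,1/4]$, instead of a perturbation argument.
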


\begin{proof}
By \cref{gt:prop:one-batch},
\[
 \E\sbra*{Y_\ell}
 =F+\sum_{j=1}^k c_j(\rho)\mu_j(s_\ell)+R_{s_\ell}.
\]
The coefficients $c_j(\rho)$ are common to all batches.  Hence
\cref{gt:eq:rich-system} cancels the displayed polynomial terms and gives
\[
 \E\sbra*{\widetilde{\mathrm{F}}_\alpha}-F
 =\sum_{\ell=0}^k a_\ell R_{s_\ell}.
\]
Using $s_\ell\ge m$ with \cref{gt:eq:one-batch-mean}, and the coefficient bound in Lemma~\ref{gt:lem:richardson}, we can prove
\cref{gt:eq:combined-bias}.

The batches use disjoint input samples and separate channels and measurements, so
$Y_0,\ldots,Y_k$ are independent.  Therefore
\[
 \Var\sbra*{\widetilde{\mathrm{F}}_\alpha}
 =\sum_{\ell=0}^k a_\ell^2\Var\sbra*{Y_\ell}.
\]
Now apply \cref{gt:eq:one-batch-var}, $s_\ell\ge m$, and
Lemma~\ref{gt:lem:richardson}.
\end{proof}

Then, we are able to prove \cref{gt:thm:main}.
\begin{proof}[Proof of \cref{gt:thm:main}]
Let $C_\alpha$ be a constant dominating the constants in \cref{gt:prop:combined}.  Choose the fixed tuning
constant $K_\alpha$ large enough that
\begin{equation}\label{eq-8170410}
 C_\alpha K_\alpha^{-\alpha}\le\frac14,
 \qquad
 C_\alpha\bigl(K_\alpha^{-1}+K_\alpha^{-2\alpha}\bigr)\le\frac1{100}.
\end{equation}
This choice depends only on $\alpha$.  From \cref{gt:eq:batch-sizes},
\[
 \left(\frac{d^2}{m}\right)^\alpha
 \le K_\alpha^{-\alpha}\theta,
 \qquad
 \frac{d^{1-1/\alpha}}{m}
 \le K_\alpha^{-1}\theta^2,
 \qquad
 \left(\frac{d^2}{m}\right)^{2\alpha}
 \le K_\alpha^{-2\alpha}\theta^2.
\]
Consequently, \cref{gt:prop:combined} yields
\begin{equation}\label{gt:eq:final-moments}
 \left|\E\sbra*{\widetilde{\mathrm{F}}_\alpha}-F\right|
 \le\frac14\theta F,
 \qquad
 \Var\sbra{\widetilde{\mathrm{F}}_\alpha}
 \le\frac1{100}\theta^2F^2.
\end{equation}
Chebyshev's inequality gives
$\mathbb{P}\sbra{
 \abs{\widetilde{\mathrm{F}}_\alpha-\E\sbra{\widetilde{\mathrm{F}}_\alpha}} >\frac12\theta F }\le\frac{1}{25}$.
On the complementary event,
$\abs{\widetilde{\mathrm{F}}_\alpha-F}
 \le\frac34\theta F
 \le\theta F$.
Since $F\in[d^{1-\alpha},1]$, the clip operation into this interval does not increase the
error.  Thus
$\abs{\widehat{\mathrm{F}}_\alpha-F}\le\theta F$.
The projected estimate is positive, and
\cref{lem:global-log-conversion}, with
$\gamma=(\alpha-1)\varepsilon$ and
$\theta=1-e^{-\gamma}$, implies
$\abs{\log\widehat{\mathrm{F}}_\alpha-\log F}
 \le(\alpha-1)\varepsilon$.
Dividing by $\alpha-1$ proves the asserted entropy error.  The success probability is at least
$24/25$, and hence at least $2/3$.

Finally, note that $\sum_{\ell=0}^k s_\ell=(2^{k+1}-1)m = O\rbra{m}$, and $k$ depends only on $\alpha$. Moreover, $\theta=\Theta(\varepsilon)$. Therefore, the final sample complexity is 
\[
O(m)=O\rbra*{
 \frac{d^2}{\varepsilon^{1/\alpha}}
 +\frac{d^{1-1/\alpha}}{\varepsilon^{2}}
 }.
\]
\end{proof}

\subsection{Proof of the one-batch estimates}\label{gt:sec:one-batch-proof}

Now, we give the proof of \cref{gt:prop:one-batch}.
Fix a purification $\ket{\psi}\in\mathbb{C}^{d}\otimes \mathbb{C}^d$ of $\rho$, we identify it as $\kett{M}$ for a coefficient matrix
$M\in\C^{d\times d}$.  Then
\begin{equation*}
 \|M\|_2=\|M\|_{\mathrm F}=1,
 \qquad
 \rho=MM^\dag,
 \qquad
 F=\mathrm{F}_\alpha(\rho)=\|M\|_{2\alpha}^{2\alpha}.
\end{equation*}
By Lemma~\ref{gt:lem:beta-sphere}, the coefficient matrix of the outcome of the covariant pure-state POVM (see line 6 of \cref{gt:alg:main}) is $\widehat M_T=\sqrt{1-T}\,M+\sqrt T\,G$,
where $T\sim\operatorname{Beta}(d^2-1,s+1)$ and $G$ is uniform on the unit sphere of
$M^\perp$, independently of $T$.  The one-batch statistic is
$Y_s=\|\widehat M_T\|_{2\alpha}^{2\alpha}$.

\subsubsection{Bias}

For $0\le t\le1/2$, set
$u:=\sqrt{t/(1-t)}\le1$.  By homogeneity,
\begin{equation}\label{gt:eq:scaled-Phi}
 \E_G\sbra*{Y_s\mid T=t}
 =\E_G\sbra*{\|\sqrt{1-t}M+\sqrt{t}G\|_{2\alpha}^{2\alpha}}=(1-t)^\alpha\E_G\sbra*{\|M+uG\|_{2\alpha}^{2\alpha}}.
\end{equation}
Since $\|M\|_p\le\|M\|_{\mathrm F}=1$ and
$\|G\|_p\le\|G\|_{\mathrm F}=1$, Lemma~\ref{gt:lem:schatten-taylor} applies.  The law of $G$
is invariant under $G\mapsto-G$, so every odd multilinear term has zero expectation.  The
largest even integer not exceeding $\lceil2\alpha\rceil-1$ is $2k$ (see \cref{gt:eq:parameters}).  Thus, with
$b_j:=\E_G\sbra{L_{2j,M}(G,\ldots,G)}$,
\begin{equation}\label{gt:eq:u-expansion}
 \E_G\sbra*{\|M+uG\|_{2\alpha}^{2\alpha}}
 =F+\sum_{j=1}^k b_j u^{2j}+\mathcal E(u).
\end{equation}
The multilinear bound in Lemma~\ref{gt:lem:schatten-taylor} and
Lemma~\ref{gt:lem:isotropic} give
\begin{align*}
 |b_j|
 \le O_\alpha\!\left(\|M\|_{2\alpha}^{2\alpha-2j}\E\sbra*{\|G\|_{2\alpha}^{2j}}\right)
 \le O_\alpha\!\left( F^{1-j/\alpha}d^{j(1/\alpha-1)}\right)
 \le O_\alpha(F),
\end{align*}
where the last step uses
$d^{1/\alpha-1}\le F^{1/\alpha}$.  Likewise,
\begin{equation}\label{gt:eq:u-rem}
 |\mathcal E(u)|
 \le O_\alpha\!\left(u^{2\alpha}\E\sbra*{\|G\|_{2\alpha}^{2\alpha}}\right)
 \le O_\alpha\!\left(Fu^{2\alpha}\right).
\end{equation}
Substituting \cref{gt:eq:u-expansion,gt:eq:u-rem} into
\cref{gt:eq:scaled-Phi} yields, for $0\le t\le1/2$,
\begin{equation}\label{gt:eq:t-functions}
 \E_G\sbra*{Y_s\mid T=t}
 =F(1-t)^\alpha
 +\sum_{j=1}^k b_jt^j(1-t)^{\alpha-j}
 +O_\alpha(Ft^\alpha).
\end{equation}
For each $0\le j\le k$, apply Lemma~\ref{gt:lem:endpoint-scalar} with
$\beta=\alpha-j$.  Since
$\lceil\alpha-j\rceil-1=k-j$, multiplying the resulting polynomial by $t^j$ produces a
polynomial of degree at most $k$ and an error of order $t^\alpha$.  Hence there are coefficients
$c_1(\rho),\ldots,c_k(\rho)$ with $|c_j(\rho)|\le O_\alpha(F)$ such that
\begin{equation}\label{gt:eq:conditional-mean-final}
 \E_G\sbra*{Y_s\mid T=t}
 =F+\sum_{j=1}^k c_j(\rho)t^j+r(t),
 \qquad
 |r(t)|\le O_\alpha(Ft^\alpha)
\end{equation}
for $0\le t\le1/2$.

The same bound extends to $1/2\le t\le1$.  Indeed, the Schatten triangle inequality and
Lemma~\ref{gt:lem:isotropic} imply
\[
 \E_G\sbra*{Y_s\mid T=t}= \E_G\sbra*{\|\sqrt{1-t}M+\sqrt tG\|_{2\alpha}^{2\alpha}}
 \le O_\alpha\bigl(F+\E\sbra*{\|G\|_{2\alpha}^{2\alpha}}\bigr)
 \le O_\alpha (F).
\]
The polynomial in \cref{gt:eq:conditional-mean-final} is also bounded by $O_\alpha(F)$ on
$[0,1]$.
For $1/2\leq t \leq 1$, let
\[r(t)\coloneqq \E_G\sbra*{Y_s\mid T=t}- \left(F+\sum_{j=1}^k c_j(\rho)t^j\right).\]
Thus $|r(t)|\leq O_\alpha(F) \leq O_\alpha(F t^\alpha)$ since $t^\alpha\ge2^{-\alpha}$ on $[1/2,1]$. Therefore \cref{gt:eq:conditional-mean-final} holds for all $0\leq t\leq 1$.

Averaging \cref{gt:eq:conditional-mean-final} over $T$ and using
Lemma~\ref{gt:lem:beta-sphere} proves \cref{gt:eq:one-batch-mean}.  Finally, replacing $M$ by $MU$
for a unitary $U$ on the purifying register sends $G$ to $GU$, preserves the Frobenius
hyperplane and all Schatten norms, and therefore leaves the conditional law and the
coefficients unchanged.  Thus the coefficients depend only on $\rho$, and the same expansion
holds after the random-purification channel.

\subsubsection{Variance}

For fixed $t$, write
\[
 h_t(G):=\|\sqrt{1-t}M+\sqrt tG\|_{2\alpha}^{2\alpha},
 \qquad
 \phi(t):=\E_G\sbra*{h_t(G)}.
\]
The law of total variance gives
\begin{equation}\label{gt:eq:total-var}
 \Var\sbra{Y_s}
 =\E_T\sbra*{\Var_G\sbra{h_T(G)\mid T}}
 +\Var_T\sbra{\phi(T)}.
\end{equation}

For the first term, apply
\cref{gt:lem:spherical-poincare} with $H=M^\perp$, viewed as a real
Hilbert space with inner product $\langle A,B\rangle_{\mathbb R} \coloneqq \Real\tr(A^\dag B)$.
The complex dimension of $M^\perp$ is $d^2-1$, so its real dimension is
$m=2d^2-2$.
Consequently, its unit sphere has real dimension $m-1=2d^2-3$.
For fixed $t$, consider the ambient extension
\[
 \widetilde h_t(Z)
 :=
 \|\sqrt{1-t}M+\sqrt tZ\|_{2\alpha}^{2\alpha},
 \qquad Z\in M^\perp,
\]
so that $h_t$ is the restriction of $\widetilde h_t$ to the unit
sphere of $M^\perp$.
By \cref{gt:eq:spherical-poincare-ambient},
\[
 \Var_G\sbra*{h_t(G)}
 \leq
 \frac{1}{2d^2-3}
 \E_G\sbra*{
 \left\|
 \nabla_{M^\perp}\widetilde h_t(G)
 \right\|_{\mathrm F}^2}.
\]
The gradient in $M^\perp$ is the orthogonal projection of the full
matrix gradient. 
We set \[\widehat M_t:=\sqrt{1-t}M+\sqrt tG.\]
Thus, by the chain rule,
\[
 \left\|
 \nabla_{M^\perp}\widetilde h_t(G)
 \right\|_{\mathrm F}
 \leq
 \sqrt t\,
 \left\|
 \nabla \|X\|_{2\alpha}^{2\alpha}
 \big|_{X=\widehat M_t}
 \right\|_{\mathrm F}.
\]
Using the Frobenius-gradient identity
\[
 \left\|
 \nabla_X\|X\|_{2\alpha}^{2\alpha}
 \right\|_{\mathrm F}^2
 =
 (2\alpha)^2
 \tr\!\left[(XX^\dag)^{2\alpha-1}\right],
\]
we obtain
\begin{align}
 \Var_G\sbra{h_t(G)}
 &\leq
 \frac{t}{2d^2-3}
 \E_G\sbra*{
 \left\|
 \nabla_X\|X\|_{2\alpha}^{2\alpha}
 \big|_{X=\widehat M_t}
 \right\|_{\mathrm F}^2}
 \nonumber\\
 &=
 \frac{(2\alpha)^2t}{2d^2-3}
 \E_G\sbra*{
 \tr\!\left[
 (\widehat M_t\widehat M_t^\dag)^{2\alpha-1}
 \right]}
 \nonumber\\
 &\leq
 O_\alpha\!\left(
 \frac{t}{d^2}
 \E_G\sbra*{
 \tr\!\left[
 (\widehat M_t\widehat M_t^\dag)^{2\alpha-1}
 \right]}
 \right).\nonumber
\end{align}

The Schatten triangle inequality, followed by Lemma~\ref{gt:lem:isotropic} with
$q=4\alpha-2$, yields
\begin{align}
 \E_G\sbra*{\tr\!\left[(\widehat M_t\widehat M_t^\dag)^{2\alpha-1}\right]}
 & = \E_G\sbra*{\|\sqrt{1-t}M+\sqrt{t}G\|_{4\alpha-2}^{4\alpha-2}} \nonumber \\ 
 & \leq O_\alpha\!\left((1-t)^{2\alpha-1}\E_G\sbra*{\|M\|_{4\alpha-2}^{4\alpha-2}} + t^{2\alpha-1}\E_{G}\sbra*{\|G\|_{4\alpha-2}^{4\alpha-2}}\right)\nonumber \\
 & = O_\alpha\!\left(
 \tr(\rho^{2\alpha-1})+t^{2\alpha-1}d^{2-2\alpha} \right).  \nonumber
\end{align}
Average over $T$ and use Lemma~\ref{gt:lem:beta-sphere}.  Since
$\E\sbra{T}/d^2\le O(1/s)$, this gives
\begin{align}
 \E_T\sbra*{\Var_G\sbra{h_T(G)\mid T}}
 &\le O_\alpha\!\left(
 \frac{\tr(\rho^{2\alpha-1})}{s}
 +\frac{d^{2-2\alpha}}{d^2}
  \left(\frac{d^2}{s}\right)^{2\alpha}
 \right) \nonumber \\
 &\le O_\alpha\!\left(F^2\left[
 \frac{d^{1-1/\alpha}}s
 +\left(\frac{d^2}s\right)^{2\alpha}
 \right]\right),\label{gt:eq:angular-final}
\end{align}
where the last step uses Lemma~\ref{gt:lem:relative-gradient} and
$d^{2-2\alpha}\le F^2$.  

For the second term, \cref{gt:eq:conditional-mean-final} gives
\[
 \phi(T)=F+\sum_{j=1}^k c_j(\rho)T^j+r(T),
 \qquad 
 |c_j(\rho)|\leq O_\alpha(F),\qquad
 |r(T)|\le O_\alpha(FT^\alpha).
\]
Because $k$ is fixed, Lemma~\ref{gt:lem:beta-variance} and
$\Var\sbra{r(T)}\le\E\sbra{r(T)^2}$ imply
\begin{align}
 \Var_T\sbra{\phi(T)}& = \Var_T\sbra{\sum_{j=1}^k c_j(\rho)T^j+r(T)} \nonumber\\
 &\le O_\alpha\!\left(F^2\left[
 \sum_{j=1}^k\frac1{d^2}\left(\frac{d^2}s\right)^{2j}
 +\E\sbra*{T^{2\alpha}}
 \right]\right)\notag\\
 &\le O_\alpha\!\left(F^2\left[
 \frac{d^2}{s^2}
 +\left(\frac{d^2}s\right)^{2\alpha}
 \right]\right),
 \label{gt:eq:radial-final}
\end{align}
where \cref{gt:eq:radial-final} uses \cref{gt:lem:beta-sphere}, $d^2/s\leq 1$ and $k$ is a constant.
Since $s\ge d^2$,
$d^2/s^2\le1/s\le d^{1-1/\alpha}/s$.  Combining
\cref{gt:eq:total-var}, \cref{gt:eq:angular-final}, \cref{gt:eq:radial-final} proves
\cref{gt:eq:one-batch-var} and completes the proof of \cref{gt:prop:one-batch}.

\subsection{Technical lemmas}\label{gt:sec:technical-lemmas}

\subsubsection{Hayashi's POVM}

\begin{lemma}\label{gt:lem:beta-sphere}
Let $\ket{v}$ be the outcome of the Hayashi's POVM applied on $\ket{\psi}^{\otimes s}$. Choose the phase of the outcome so that $\braket{v}{\psi}\ge0$.  Then
$\ket v=\sqrt{1-T}\,\ket\psi+\sqrt T\,\ket G$,
where $G$ is uniform on the unit sphere of $\psi^\perp$,
$T$ and $G$ are independent, and
$T\sim\operatorname{Beta}(d^2-1,s+1)$.
For every fixed $q>0$ and $s\ge d^2$,
\begin{equation}\label{gt:eq:beta-real-moment}
 \E\sbra*{T^q}
 =\frac{\Gamma(d^2-1+q)\Gamma(d^2+s)}
        {\Gamma(d^2-1)\Gamma(d^2+s+q)}
 \le O_q\!\left(\left(\frac{d^2}{s}\right)^q\right).
\end{equation}
For integer $j\ge0$, $\E\sbra{T^j}=\mu_j(s)$, with $\mu_j$ as in
\cref{gt:eq:mu-def}.
\end{lemma}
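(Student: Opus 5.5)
The plan is to compute the outcome density of Hayashi's POVM on $\ket{\psi}^{\otimes s}$ explicitly in the coordinates adapted to $\ket{\psi}$. Then we read off the joint law of $(T,G)$ and obtain the moments from the Beta integral.

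\textbf{Outcome law.} Write $D=d^2$. By \cref{def-8170435}, the outcome $\ket{v}$ has density $\binom{D+s-1}{s}\abs{\braket{v}{\psi}}^{2s}$ with respect to the Haar measure $\mathrm{d}v$ on the unit sphere of $\C^{D}$. After fixing the global phase so that $\braket{v}{\psi}\ge0$ (phases are irrelevant for $\widehat\rho$), decompose $\ket v=\sqrt{1-T}\ket\psi+\sqrt T\ket G$ with $T=1-\abs{\braket{v}{\psi}}^2\in[0,1]$ and $\ket G\in\psi^\perp$ of unit norm. Under the Haar measure, $\abs{\braket{v}{\psi}}^2$ is known to have law $\operatorname{Beta}(1,D-1)$. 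Equivalently, $T\sim\operatorname{Beta}(D-1,1)$, with density $(D-1)(1-x)^{0}x^{D-2}$.

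Also under Haar, the normalized component $G$ is uniform on the unit sphere of $\psi^\perp$ and independent of $T$. This follows from invariance under the unitaries fixing $\ket\psi$, which act transitively on that sphere and preserve $T$. The same invariance argument gives uniformity in the phase variable, which we quotient out.

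Reweighting by $\abs{\braket{v}{\psi}}^{2s}=(1-T)^s$, which depends only on $T$, preserves both the independence and the uniformity of $G$. It changes the density of $T$ to one proportional to $x^{D-2}(1-x)^s$, i.e., $\operatorname{Beta}(D-1,s+1)$. The normalizing constant matches automatically because the POVM integrates to the identity on $\operatorname{Sym}^s$, and $\ket\psi^{\otimes s}$ lies there. One can also check directly that $(D-1)B(D-1,s+1)\binom{D+s-1}{s}=1$.

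\textbf{Moments.} Using the Beta moment formula,
\[
\E[T^q]=\frac{B(D-1+q,s+1)}{B(D-1,s+1)}=\frac{\Gamma(D-1+q)\Gamma(D+s)}{\Gamma(D-1)\Gamma(D+s+q)}.
\]
For integer $j$, the telescoping $\Gamma(x+j)/\Gamma(x)=\prod_{r<j}(x+r)$ gives exactly $\mu_j(s)$ from \cref{gt:eq:mu-def}.

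For the real-$q$ upper bound, use the standard estimate $\Gamma(x+q)/\Gamma(x)\le C_q x^q$ for $x\ge1$, which follows from Gautschi/Wendel-type inequalities or log-convexity. For the denominator, use $\Gamma(y)/\Gamma(y+q)\le C_q' y^{-q}$. This yields $\E[T^q]\le C_q(D-1+q)^q/(D+s)^q\le O_q((D/s)^q)$, since $D-1+q\le(1+q)D$ and $D+s\ge s$.

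The only mildly delicate step is the rigorous change of variables showing that Haar measure factorizes into an independent product of $T$ and $G$ with the stated marginals. I would handle it by writing a Gaussian vector $g\in\C^D$ as $g=z\ket\psi+w$. Then $\abs z^2\sim\Gamma(1)$ and $\|w\|^2\sim\Gamma(D-1)$ are independent, and $w/\|w\|$ is uniform and independent of both, which gives the Beta$(D-1,1)$ law for $T=\|w\|^2/(\abs z^2+\|w\|^2)$ together with independence of $G=w/\|w\|$.
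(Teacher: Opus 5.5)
Your proposal is correct and follows the paper's proof essentially step for step: the Haar law $\operatorname{Beta}(1,d^2-1)$ of $\abs{\braket{v}{\psi}}^2$, reweighted by $\binom{d^2+s-1}{s}\abs{\braket{v}{\psi}}^{2s}$, gives $T\sim\operatorname{Beta}(d^2-1,s+1)$; unitary invariance on $\psi^\perp$ gives that $G$ is uniform and independent of $T$; and the beta integral gives the moment formula and $\mu_j(s)$. The only real difference is the bound for non-integer $q$: you apply Wendel/Gautschi-type Gamma-ratio inequalities directly, whereas the paper bounds integer moments by $((d^2+r)/s)^r$ and interpolates with H\"older's inequality, $\E\sbra{T^q}\le(\E\sbra{T^{m}})^{1-\theta}(\E\sbra{T^{m+1}})^{\theta}$ for $q=m+\theta$, and both routes are valid (one small point: in your Gaussian model the phase-fixed component is really $(\bar z/\abs{z})\,w/\Abs{w}$ rather than $w/\Abs{w}$, but it is still uniform and independent of $T$).
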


\begin{proof}
For a Haar random $\ket{v}$, one has $|\braket{v}{\psi}|^2 \sim\operatorname{Beta}(1,d^2-1)$, with density function $f(x)=(d^2-1)(1-x)^{d^2-2}$.
These facts follow, for example, by writing a
Haar-random vector as a normalized standard complex Gaussian vector.

Now, let $\ket{v}$ be the outcome of Hayashi's POVM. Then $\ket{v}$'s density is obtained from Haar measure by multiplying by $\binom{d^2+s-1}{s}|\braket{v}{\psi}|^{2s}$. 
Therefore, the $|\braket{v}{\psi}|^2$ has density
\begin{align*}
 \binom{d^2+s-1}{s}(d^2-1)x^s(1-x)^{d^2-2}
 &=
 \frac{x^s(1-x)^{d^2-2}}{B(s+1,d^2-1)},
\end{align*}
where $B(\cdot,\cdot)$ denotes the Euler beta function.
Therefore
$|\braket{v}{\psi}|^2\sim\operatorname{Beta}(s+1,d^2-1)$.
This means
$T=1-|\braket{v}{\psi}|^2\sim\operatorname{Beta}(d^2-1,s+1),$
and
$\ket v=\sqrt{1-T}\,\ket\psi+\sqrt T\,\ket G$,
where $G$ is on the unit sphere of $\psi^\perp$. Furthermore, one can easily check that the law of $G$ conditioned on any value of $T$ is unitary invariant on $\psi^\perp$, which means $G$ is uniformly distributed and independent of $T$.

The beta-integral identity
\cite[Eq.~(5.12.1)]{NISTDLMF} gives
\begin{align*}
 \E\sbra*{T^q}
 &=
 \frac{1}{B(d^2-1,s+1)}
 \int_0^1 t^{d^2-2+q}(1-t)^s\,\mathrm{d}t\\
 &=
 \frac{B(d^2-1+q,s+1)}{B(d^2-1,s+1)}\\
 &=
 \frac{\Gamma(d^2-1+q)\Gamma(d^2+s)}
      {\Gamma(d^2-1)\Gamma(d^2+s+q)}.
\end{align*}
To prove the required bound, first let $r\ge1$ be an integer. Then
\begin{align*}
 \E\sbra*{T^r}
 &=
 \prod_{k=0}^{r-1}\frac{d^2-1+k}{d^2+s+k}\le
 \left(\frac{d^2+r}{s}\right)^r
 \le
 (r+1)^r\left(\frac{d^2}{s}\right)^r
 =O_r\!\left(\left(\frac{d^2}{s}\right)^r\right).
\end{align*}
For general $q>0$, write $q=m+\theta$, where
$m=\lfloor q\rfloor$ and $0\le\theta<1$. By H\"older's inequality,
\[
 \E\sbra*{T^q}
 \le
 \rbra*{\E\sbra*{T^m}}^{1-\theta}
 \rbra*{\E\sbra*{T^{m+1}}}^\theta
 \le
 O_q\!\left(\left(\frac{d^2}{s}\right)^q\right).
\]
This proves \cref{gt:eq:beta-real-moment}.

Finally, for an integer $j\ge1$,
\[
 \E\sbra*{T^j}
 =
 \prod_{r=0}^{j-1}
 \frac{d^2-1+r}{d^2+s+r}
 =\mu_j(s),
\]
and the case $j=0$ is immediate.
\end{proof}

\subsubsection{Coefficients}

\begin{lemma}\label{gt:lem:richardson}
The system \cref{gt:eq:rich-system} has a unique solution, and
\begin{equation}\label{gt:eq:a-bound}
 \sum_{\ell=0}^k|a_\ell|+\sum_{\ell=0}^k a_\ell^2\le O_\alpha(1).
\end{equation}
\end{lemma}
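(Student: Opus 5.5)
The system \cref{gt:eq:rich-system} is a $(k+1)\times(k+1)$ linear system in $a_0,\ldots,a_k$. I would reduce it to a perturbed Vandermonde system in the small variables $x_\ell := d^2/s_\ell = 2^{-\ell} d^2/m \in (0,1]$, and prove that its matrix is uniformly well conditioned.

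\textbf{Step 1: a change of basis.} For each $j\le k$, expand
\[
\mu_j(s)=\prod_{r=0}^{j-1}\frac{d^2-1+r}{d^2+s+r}.
\]
The $\mu_j$ are not exact polynomials in $x$, so instead I would pass to a triangular basis. The moments satisfy the recursion
\[
\mu_{j}(s)=\mu_{j-1}(s)\cdot\frac{d^2-2+j}{d^2+s+j-1}.
\]
Setting $y_\ell := 1/(d^2+s_\ell)$, each factor becomes $(d^2-2+j)\,y_\ell\,(1+(j-1)y_\ell)^{-1}$. Rescale row $j$ by the $s$-independent constant $\prod_{r<j}(d^2-1+r)^{-1}$; this does not change the solution set of the homogeneous equations. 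Row $j$ then equals
\[
\prod_{r<j}\frac{1}{d^2+s_\ell+r}.
\]
Write $w_\ell := d^2 y_\ell = d^2/(d^2+s_\ell)\in(0,1/2]$. After multiplying row $j$ by $d^{2j}$, row $j$ becomes $w_\ell^j\,\prod_{r<j}\bigl(1+r w_\ell/d^2\bigr)^{-1}$. This is $w_\ell^j$ times a factor lying in $[e^{-k^2},1]$.

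\textbf{Step 2: existence and uniqueness.} The correction factor is an analytic function of $w_\ell$. Expanding it, row $j$ equals $w_\ell^j + \sum_{i>j}\gamma_{ji}w_\ell^i$. This is not a finite triangular combination, so I would argue uniqueness directly instead. The map $w\mapsto \bigl(w^j\prod_{r<j}(1+rw/d^2)^{-1}\bigr)_{j\le k}$ forms a Chebyshev (Descartes) system on $(0,1]$. The reason is that each ratio of consecutive entries equals $w/(1+(j-1)w/d^2)$, which is strictly increasing and positive, and such products yield a totally positive kernel. The nodes $w_\ell$ are distinct, because the $s_\ell$ are distinct. Hence the matrix is nonsingular, and the solution exists and is unique.

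\textbf{Step 3: the uniform bound, which is the main obstacle.} Since $k+1$ is fixed, a bound on $\sum_\ell|a_\ell|$ immediately bounds $\sum_\ell a_\ell^2$. By Cramer's rule, each $a_\ell$ is a ratio of determinants. Factor out $\prod_\ell$ (the first node power) column-wise, or equivalently write $w_\ell = 2^{-\ell}w_0(1+O(x_0))$ with $w_0\le 1/2$. Both determinants then carry the same scaling $w_0^{0+1+\cdots+k}$, which cancels. What remains is a ratio of determinants of matrices that converge to the Vandermonde matrix on the nodes $1,2^{-1},\ldots,2^{-k}$ as $w_0\to0$. The trouble is uniformity when $w_0$ is not small (e.g.\ $m\approx d^2$).

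For that case I would use compactness. The relative node ratios $w_\ell/w_0$ stay in a compact set bounded away from collisions: since $s_{\ell+1}=2s_\ell$ and $s_\ell\ge d^2$, we get $w_{\ell+1}/w_\ell\in[1/2,2/3]$. Likewise, the parameters $w_0\in(0,1/2]$ and $1/d^2\in(0,1/4]$ range over a compact closure. The normalized matrix depends continuously on $(w_0, w_\ell/w_0, 1/d^2)$ and is nonsingular at every point of the closure: in the limit $w_0\to0$ it is Vandermonde, and elsewhere nonsingularity follows from the Chebyshev argument of Step 2. Continuity and compactness then bound the normalized inverse uniformly. The resulting bound on $|a_\ell|$ depends only on $k$, and hence only on $\alpha$.
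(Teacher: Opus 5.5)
Steps 1 and 3 are sound, but Step 2 rests on a false general principle. In Step 1, the row rescalings leave the right-hand side $(1,0,\ldots,0)^{\mathsf T}$ unchanged. In Step 3, the ratios $w_{\ell+1}/w_\ell=(d^2+s_\ell)/(d^2+2s_\ell)\in(1/2,2/3]$ keep the normalized nodes separated, and Cramer's rule with a uniformly nonvanishing determinant then bounds $a$.

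\textbf{The gap.} Step 2 carries both the existence/uniqueness claim and the input to Step 3 that the matrix is ``nonsingular at every point of the closure''. You claim that if every consecutive ratio $u_{j+1}/u_j$ is positive and strictly increasing, then $(u_0,\ldots,u_k)$ is a Chebyshev system, and even totally positive. This is false: monotone ratios only give positivity of the $2\times 2$ minors.

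For example, take $u_0=1$, $u_1=w$, $u_2=w(1-e^{-10w})$. The ratios $w$ and $1-e^{-10w}$ are positive and increasing. But $h(w)=w(1-e^{-10w})$ has $h''(w)=10e^{-10w}(2-10w)$, so $h$ is strictly convex on $(0,1/5)$ and strictly concave on $(1/5,1)$. The $3\times 3$ collocation determinant at nodes $w_1<w_2<w_3$ equals $(w_2-w_1)(w_3-w_1)(w_3-w_2)$ times the second divided difference $h[w_1,w_2,w_3]$. It therefore takes both signs on the connected set of ordered triples in $(0,1]$, and so vanishes at some distinct nodes.

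\textbf{The fix.} For your particular functions the conclusion is true, but for a different reason. It is exactly the idea you dismissed with ``this is not a finite triangular combination''.
\begin{enumerate}
\item Put $X=d^2/w=d^2+s$. Your row-$j$ entry is then $w^j\prod_{r<j}(1+rw/d^2)^{-1}=d^{2j}/\prod_{r<j}(X+r)$.
\item Multiply column $\ell$ by the nonzero number $\prod_{r=0}^{k-1}(X_\ell+r)$. Entry $(j,\ell)$ becomes $d^{2j}\prod_{r=j}^{k-1}(X_\ell+r)$, which is $d^{2j}$ times a monic polynomial of degree $k-j$ in $X_\ell$.
\item Row operations now reduce the matrix to the Vandermonde matrix in the distinct $X_\ell$. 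This is how the paper proves nonsingularity.
\end{enumerate}

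The same computation works with $1/d^2$ replaced by any $\eta\in(0,1/4]$, since the entry is then $\eta^{-j}/\prod_{r<j}(X+r)$ with $X=1/(\eta w)$. At $w_0=0$ or $\eta=0$ the matrix is literally Vandermonde. Together these give nonsingularity on the whole closure, and your compactness argument then goes through.

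The paper avoids compactness altogether. It rescales row $j$ by $m^j$ and puts $q=d^2/m$, $\tau=1/m$. The same calculation then gives the explicit formula $|\det C|=\prod_{\ell<h}(2^h-2^\ell)/\prod_{\ell}\prod_{r<k}(2^\ell+q+r\tau)$. This is bounded below uniformly, while all entries of $C$ lie in $(0,1]$, so Cramer's rule bounds $a$ directly.
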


\begin{proof}
The factors
\[
 \prod_{r=0}^{j-1}(d^2-1+r)
\]
in \cref{gt:eq:mu-def} depend only on \(j\) and are nonzero.  It
therefore suffices to consider the matrix
\[
 B_{j\ell}
 :=
 \left[\prod_{r=0}^{j-1}(x_\ell+r)\right]^{-1},
 \qquad
 x_\ell:=d^2+s_\ell,
\]
and \cref{gt:eq:rich-system} is thus equivalent to $B a = (1,0,\ldots,0)^{{\mathsf T}}$.

Now, we prove $B$ is non-singular. Multiply column \(\ell\) of $B$ by $\prod_{r=0}^{k-1}(x_\ell+r)$, then the entry in row \(j\) and column $\ell$ then becomes
\[
\prod_{r=j}^{k-1}(x_\ell+r),
\]
which is a monic polynomial in \(x_\ell\) of degree \(k-j\).
After reversing the row order, the rows are evaluations of monic
polynomials of degrees \(0,1,\ldots,k\). 
By elementary row operations (subtracting suitable linear combinations of the preceding rows), this matrix is transformed into the ordinary Vandermonde matrix
$
 \{x_\ell^j\}_{0\le j,\ell\le k}.
$
Thus its determinant is, up to the sign coming from reversing the rows,
$\prod_{0\le \ell<h\le k}(x_h-x_\ell)$.
Since \(x_0<\cdots<x_k\), this determinant is nonzero, and hence the
system has a unique solution.

For the uniform estimate, rescale row \(j\) of $B$ by \(m^j\) and set 
$q:=\frac{d^2}{m}\in(0,1]$, $\tau:=\frac1m\in(0,1/4]$.
The rescaled matrix $C$ has entries
\[
 C_{j\ell}
 =
 \prod_{r=0}^{j-1}\frac1{2^\ell+q+r\tau}.
\]
The same Vandermonde calculation gives
\[
 \abs{\det C}
 =
 \frac{\displaystyle
   \prod_{0\le\ell<h\le k}(2^h-2^\ell)}
 {\displaystyle
   \prod_{\ell=0}^k\prod_{r=0}^{k-1}
   (2^\ell+q+r\tau)}.
\]
For $(q,\tau) \in [0,1]\times[0,1/4]$,
the absolute values of entries of \(C\) are uniformly upper bounded, while the displayed
determinant is bounded below by a positive constant depending only on
\(k\).  Hence the absolute values of entries of \(C^{-1}\) are uniformly upper bounded.
Note that $a$ is also the solution of $Ca=(1,0,\ldots,0)^{\mathsf T}$ since the row rescaling $R$ does not change the right hand side, i.e., let $R$ denote the row rescaling matrix, then
\[
 Ca= RBa=R(1,0,\ldots,0)^{\mathsf T}=(1,0,\ldots,0)^{\mathsf T},
\]
each \(a_\ell\) is bounded by a constant depending only on \(k\), and
therefore only on \(\alpha\).  As the number of coefficients is fixed,
\cref{gt:eq:a-bound} follows.
\end{proof}

\subsubsection{Schatten norm expansion and moments}

\begin{lemma}[{\cite{PS14}}]\label{gt:lem:schatten-taylor}
Fix $p>2$.  There are constants $C_p$ and
$C_{p,j}$ such that, for every $d\ge1$ and every $M\in\C^{d\times d}$ with
$\|M\|_p\le1$, there are symmetric real multilinear forms $L_{j,M}:(\C^{d\times d})^j\to\mathbb R$ for $1\le j\le \lceil p\rceil-1$,
where $\C^{d\times d}$ is viewed as a real vector space, satisfying
\begin{equation}\label{gt:eq:derivative-bound}
 \abs*{L_{j,M}(H_1,\ldots,H_j)}
 \le C_{p,j}\|M\|_p^{p-j}\prod_{i=1}^j\|H_i\|_p
\end{equation}
for any $H_1,\ldots,H_j\in\C^{d\times d}$, and for every
$H\in\C^{d\times d}$ with $\|H\|_p\le1$, it holds that
\begin{equation}\label{gt:eq:Taylor-uniform}
 \|M+H\|_p^p
 =\|M\|_p^p
 +\sum_{j=1}^{\lceil p\rceil-1} L_{j,M}(H,\ldots,H)
 +\mathcal R_p(M,H), \qquad 
 |\mathcal R_p(M,H)|
 \le C_p\|H\|_p^p.
\end{equation}
\end{lemma}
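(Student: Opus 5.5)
The plan is to prove the expansion by reducing to the scalar function $x\mapsto |x|^{p}$ through the spectral calculus of Hermitian dilations, and then to control the derivatives uniformly in $d$ using multiple operator integrals. First replace $M$ by its Hermitian dilation $\widetilde M=\begin{pmatrix}0&M\\M^\dag&0\end{pmatrix}$, and likewise $\widetilde H$ for $H$. Then
\[
\|M+H\|_p^p=\tfrac12\tr\bigl(|\widetilde M+\widetilde H|^p\bigr),
\]
and Schatten norms are preserved up to the factor $2^{1/p}$. The map $H\mapsto\widetilde H$ is real-linear, so it suffices to prove the statement for Hermitian $A=\widetilde M$ and $X=\widetilde H$ with the function $f(x)=|x|^p$. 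This function is $C^{\lceil p\rceil-1}$, and its top derivative is Hölder continuous of order $p-\lceil p\rceil+1$ (or Lipschitz when $p$ is an integer).

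Next, define $L_{j,M}(H_1,\dots,H_j)$ as the $j$-th Fréchet derivative of $A\mapsto\tr f(A)$ at $\widetilde M$, applied to $\widetilde H_1,\dots,\widetilde H_j$ and multiplied by $\frac{1}{2\, j!}$. These forms are automatically symmetric and real multilinear. By the Daletskii--Krein formula, of which \cref{lemma:dfah} is the first-order case, iterated as in \cref{lem:trace-Hessian}, $\tr$ of the $j$-th derivative equals a multiple operator integral whose symbol is the divided difference $(f')^{[j-1]}$. The bound \cref{gt:eq:derivative-bound} then follows from the Potapov--Sukochev estimates for multiple operator integrals on Schatten classes \cite{PS14}. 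Homogeneity of degree $p-j$ of these divided differences, combined with Hölder's inequality in the exponents $\frac{p-j}{p}+\frac{j}{p}=1$, gives
\[
|L_{j,M}(H_1,\dots,H_j)|\le C_{p,j}\|A\|_p^{p-j}\prod_i\|X_i\|_p .
\]
Boundedness of these transformers with constants independent of $d$ is exactly what the cited result supplies.

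The remainder is handled with Taylor's formula with integral remainder along $A+tX$, $t\in[0,1]$, expanded to order $\lceil p\rceil-1$. Write $N=\lceil p\rceil-1$. The remainder equals
\[
\int_0^1\frac{(1-t)^{N-1}}{(N-1)!}\Bigl(D^N\tr f(A+tX)-D^N\tr f(A)\Bigr)[X,\dots,X]\,\mathrm{d}t .
\]
The key estimate is a Hölder continuity bound for the top derivative,
\[
\bigl|D^N\tr f(B)-D^N\tr f(A)\bigr|[X^{N}]\le C\|B-A\|_p^{p-N}\|X\|_p^N ,
\]
valid uniformly in the dimension. This is the main obstacle, and I expect it to be the main content of \cite{PS14}, namely the Taylor remainder estimate for $\tr f$ with $f^{(N)}$ Hölder. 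I would invoke it directly. With $B-A=tX$, the bound gives $|\mathcal R_p|\le C_p\|X\|_p^p$.

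Two points need a separate check. First, the normalizations $\|M\|_p\le1$ and $\|H\|_p\le1$ are only used to make the constants uniform when $p-j<0$ cannot occur, which holds since $j\le N<p$. Second, when $p$ is an integer, the top derivative should be Lipschitz, and this gives the same power $p-N=1$.
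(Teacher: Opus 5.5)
Your outline is correct, and like the paper it puts all of the hard, dimension-free analysis on \cite{PS14}. Your forms are even the same as the paper's $L_{j,M}:=2^{p-j}\delta^{(j)}_{M/2}$: by uniqueness of Taylor coefficients and homogeneity, both are the $j$-th Taylor coefficients of $\|\cdot\|_p^p$ at $M$.

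The difference lies in which statement from \cite{PS14} carries the remainder, and how its hypotheses are met. The paper never passes through a Hölder bound on the top derivative. It takes the remainder estimate $\bigl|\|A_1\|_p^p-\|A_0\|_p^p-\sum_{j\le N}\delta^{(j)}_{A_0}(V,\ldots,V)\bigr|\le C_p\|V\|_p^p$ from the last step of the proof of \cite[Theorem~16]{PS14}, which it quotes under the hypothesis $\|A_0\|_p,\|A_1\|_p\le1$. Because $\|M+H\|_p$ can be as large as $2$, it applies this with $A_0=M/2$, $V=H/2$, $A_1=(M+H)/2$. That is the actual role of the normalizations $\|M\|_p,\|H\|_p\le1$, and your closing remark about them misidentifies it. In your setting, the same factor-$2$ rescaling (degree-$p$ homogeneity) would let you drop them altogether.

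Your Hölder estimate $\bigl|(D^N\tr f(B)-D^N\tr f(A))[X^N]\bigr|\le C\|B-A\|_p^{p-N}\|X\|_p^N$ is true. However, it is not what the paper extracts from \cite{PS14}, and up to constants it is equivalent to the remainder estimate. To get it from the (homogeneously globalized) remainder bound, take $\|X\|_p=\|B-A\|_p$ and compare the degree-$N$ Taylor polynomials of $s\mapsto\tr f(B+sX)$ centred at $B$ and at $A$ for $|s|\le1$. Then bound the difference of their leading coefficients by Chebyshev's inequality. Conversely, your integral formula recovers the remainder bound.

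So your Taylor-integral step is a detour through a reformulation of the very estimate you need. Citing the remainder bound directly is shorter and matches what \cite{PS14} is quoted for. The Hermitian dilation is a harmless standard reduction; the paper simply applies \cite{PS14} to non-Hermitian $M$ directly.
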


\begin{proof}
Let $r = \ceil{p} - 1$. 
Since $r<p\le r+1$, \cite[Theorem~16]{PS14} provides symmetric real
multilinear forms $\delta_A^{(j)}$ for $1\le j\le r$, which, by \cite[Equations (23)--(24) and Theorem 18]{PS14}, satisfy
\[
 |\delta_A^{(j)}(V_1,\ldots,V_j)|
 \le C_{p,j}\|A\|_p^{p-j}\prod_{i=1}^j\|V_i\|_p.
\]
Although \cite[Theorem~16]{PS14} states its remainder only as $\|V\|_p\to0$, the final
step of its proof gives
\[
 \left|\|A_1\|_p^p-\|A_0\|_p^p
 -\sum_{j=1}^r\delta_{A_0}^{(j)}(V,\ldots,V)\right|
 \le C_p\|V\|_p^p,
 \qquad V=A_1-A_0,
\]
whenever $\|A_0\|_p,\|A_1\|_p\le1$; see
\cite[Equation~(29), Theorem~14, Remark~15, and pp.~471--473]{PS14}.

For the matrices in the lemma, take $A_0=M/2$, $V=H/2$, and
$A_1=(M+H)/2$, which satisfy $\Abs{A_0}_p \leq 1$ and $\Abs{A_1}_p \leq 1$.
The proof is completed by setting
\[
 L_{j,M}(H_1,\ldots,H_j)
 :=2^{p-j}\delta_{M/2}^{(j)}(H_1,\ldots,H_j).
\]
\end{proof}

\begin{lemma}\label{gt:lem:isotropic}
Let $M\in\C^{d\times d}$ satisfy $\|M\|_{\mathrm F}=1$, and let $G$ be uniform on the unit
sphere of the complex Frobenius hyperplane $M^\perp = \cbra{ X \in \C^{d \times d} \mid \langle M,X\rangle_{\mathrm F} = 0 }$.  For every fixed $q\ge2$ and $u>0$,
\begin{equation}\label{gt:eq:isotropic-bound}
 \left(\E\sbra*{\|G\|_q^u}\right)^{1/u}
 \le O_{q,u}\rbra{d^{1/q-1/2}}.
\end{equation}
In particular,
\begin{equation}\label{gt:eq:G-p}
 \E\sbra*{\|G\|_{2\alpha}^{2\alpha}}
 \le O_\alpha \rbra*{d^{1-\alpha}}
 \le O_\alpha\rbra*{\mathrm{F}_\alpha(\rho)}
\end{equation}
whenever $\rho=MM^\dag$.
\end{lemma}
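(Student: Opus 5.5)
We prove \cref{gt:eq:isotropic-bound} by comparing $G$ with a complex Gaussian matrix. Let $Z\in\C^{d\times d}$ have i.i.d.\ standard complex Gaussian entries, and let $P$ be the orthogonal projection (in Frobenius inner product) onto $M^\perp$. Then $PZ$ is a standard Gaussian vector in $M^\perp$. Polar decomposition gives $PZ=\|PZ\|_{\mathrm F}\,G$, with $\|PZ\|_{\mathrm F}$ independent of $G$ and $G$ uniform on the unit sphere of $M^\perp$. Hence
\[
\E\sbra*{\|G\|_q^u}=\frac{\E\sbra*{\|PZ\|_q^u}}{\E\sbra*{\|PZ\|_{\mathrm F}^u}}.
\]
The denominator is a chi-type moment with $2d^2-2$ real degrees of freedom, so it equals $\Theta_u(d^u)$ for $d\ge2$. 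The case $d=1$ is vacuous, since then $M^\perp=\{0\}$; alternatively, one can treat small $d$ by absorbing them into constants.

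For the numerator, write $PZ=Z-\langle M,Z\rangle_{\mathrm F}M$. By the Schatten triangle inequality and $\|M\|_q\le\|M\|_{\mathrm F}=1$,
\[
\|PZ\|_q\le\|Z\|_q+|\langle M,Z\rangle_{\mathrm F}|.
\]
Here $\langle M,Z\rangle_{\mathrm F}$ is a single standard complex Gaussian, so all its moments are $O_u(1)$. For $\|Z\|_q$, use $\|Z\|_q\le d^{1/q}\|Z\|_{\mathrm{op}}$. The standard Gaussian matrix bound $\E\sbra{\|Z\|_{\mathrm{op}}^u}^{1/u}=O_u(\sqrt d)$ follows from $\E\|Z\|_{\mathrm{op}}\le C\sqrt d$ together with Gaussian concentration of the $1$-Lipschitz function $\|\cdot\|_{\mathrm{op}}$. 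This gives $\E\sbra{\|PZ\|_q^u}^{1/u}=O_{q,u}(d^{1/q+1/2})$. Dividing by the $\Theta(d)$ normalization yields $O_{q,u}(d^{1/q-1/2})$, which is \cref{gt:eq:isotropic-bound}.

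The main point of care is the operator-norm moment bound, which I would cite as standard (e.g.\ via Gordon/Slepian, or via $\E\tr(ZZ^\dag)^p$ moment computations). Independence of radius and direction for the Gaussian on $M^\perp$ is routine by rotation invariance.

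For \cref{gt:eq:G-p}, take $q=u=2\alpha\ge2$. This gives
\[
\E\sbra*{\|G\|_{2\alpha}^{2\alpha}}\le O_\alpha\rbra*{d^{2\alpha(1/(2\alpha)-1/2)}}=O_\alpha(d^{1-\alpha}).
\]
Finally, for $\alpha>1$, the power-mean (H\"older) inequality on the eigenvalues of $\rho$, which sum to $1$, gives $\tr(\rho^\alpha)\ge d\cdot d^{-\alpha}=d^{1-\alpha}$, matching \cref{eq:global-physical-range}. Hence $d^{1-\alpha}\le\mathrm{F}_\alpha(\rho)$.
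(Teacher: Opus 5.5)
Your proposal is correct and follows the paper's proof essentially step for step. The paper also realizes $G$ as the normalized projection $Z=W-\langle M,W\rangle_{\mathrm F}M$ of a complex Gaussian matrix, uses independence of radius and direction to write $\E\|G\|_q^u$ as a ratio of moments, evaluates the denominator via $\|Z\|_{\mathrm F}^2\sim\operatorname{Gamma}(d^2-1,1)$, and bounds the numerator by $d^{1/q}\|W\|_{\mathrm{op}}+|\langle M,W\rangle_{\mathrm F}|$.

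The differences are cosmetic. You get the operator-norm moments from $\E\|W\|_{\mathrm{op}}=O(\sqrt d)$ plus Lipschitz Gaussian concentration, whereas the paper integrates the explicit tail bound $\Pr[\|W\|_{\mathrm{op}}>2\sqrt d+t]\le\frac12 e^{-t^2}$. You also spell out the $q=u=2\alpha$ specialization and the bound $\mathrm{F}_\alpha(\rho)\ge d^{1-\alpha}$, which the paper leaves implicit.
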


\begin{proof}
Let $W\in\C^{d\times d}$
have independent standard complex Gaussian entries, i.e., $W_{ij} \sim \mathcal N_{\C}(0,1)$, and set $Z\coloneqq W-\langle M,W\rangle_{\mathrm F}M$.
Then $Z$ is a standard complex Gaussian vector in $M^\perp$.  Thus, in any complex
orthonormal basis $E_1,\ldots,E_{d^2-1}$ of $M^\perp$,
\[
 Z=\sum_{j=1}^{d^2-1}\zeta_jE_j,
 \qquad
 \zeta_1,\ldots,\zeta_{d^2-1}\sim
 \mathcal N_{\C}(0,1).
\]
Set $\widetilde G:=Z/\Abs{Z}_{\mathrm F}$. 
As noted in \cite[proof of Proposition~6.34, p.~171]{AS17}, 
$\widetilde G$ is uniform on the unit sphere of $M^\perp$ and is independent of $\Abs{Z}_{\mathrm F}$;
in particular, $\widetilde G$ has the same distribution as $G$. 
Therefore,
\[
 \E\sbra*{\|Z\|_q^u}
 =\E\sbra*{\Abs{Z}_{\mathrm F}^u\|\widetilde G\|_q^u}
 =\E\sbra*{\Abs{Z}_{\mathrm F}^u} \E\sbra*{\|G\|_q^u},
\]
and hence
\begin{equation}\label{gt:eq:isotropic-moment-ratio}
 (\E\sbra*{\|G\|_q^u})^{1/u}
 =\frac{(\E\sbra*{\|Z\|_q^u})^{1/u}}{(\E\sbra*{\Abs{Z}_{\mathrm F}^u})^{1/u}}.
\end{equation}
Note that $|\zeta_j|^2 \sim \operatorname{Exp}\rbra{1}$ (cf.\ \cite[Equation~(A.2), p.~307]{AS17}), and thus
\[
 \Abs{Z}_{\mathrm F}^2=\sum_{j=1}^{d^2-1}|\zeta_j|^2\sim\operatorname{Gamma}(d^2-1,1).
\]

Now that $q\ge2$ and $\|M\|_q\le\|M\|_{\mathrm F}=1$, we have
\begin{align} \label{eq:Zneq}
 \|Z\|_q
 \leq \Abs{W}_q + \abs*{\langle M,W\rangle_{\mathrm F}} \Abs*{ M }_q \le d^{1/q}\|W\|_{\mathrm{op}}
      +|\langle M,W\rangle_{\mathrm F}|.
\end{align}
For $u>0$, put $K_u:=2^{\max\{u-1,0\}}$, so that
$(x+y)^u\le K_u(x^u+y^u)$ for all $x,y\ge0$.  
By \cite[Proposition~6.33, Equation~(6.42), p.~169]{AS17}, we have
\[
 \Pr \sbra*{ \|W\|_{\mathrm{op}}>2\sqrt d+t }
 \le\frac12e^{-t^2},\qquad t>0.
\]
Let $Y:=\max\{\|W\|_{\mathrm{op}}-2\sqrt d,0\}$.  Then,
\begin{align*}
 \E \sbra*{Y^u}
 &=\int_0^\infty ut^{u-1}\Pr\sbra{Y>t}\,\mathrm{d}t\\
 &\le\frac u2\int_0^\infty t^{u-1}e^{-t^2}\,\mathrm{d}t
 =\frac12\Gamma\!\left(1+\frac u2\right).
\end{align*}
Since $\|W\|_{\mathrm{op}}\le2\sqrt d+Y$, it follows that
\begin{equation} \label{eq:EWuop}
 \E\sbra*{\|W\|_{\mathrm{op}}^u}
 \le K_u\left(2^ud^{u/2}+\frac12\Gamma\!\left(1+\frac u2\right)\right)
 \le O_u \rbra*{ d^{u/2} },
\end{equation}
and hence $(\E\sbra{\|W\|_{\mathrm{op}}^u})^{1/u}\le O_u\rbra{\sqrt d}$.

Next, note that
$\langle M,W\rangle_{\mathrm F} \sim\mathcal N_{\C}(0,1)$.
Therefore,
\cite[Equation~(A.2), p.~307]{AS17} gives
\[
 \E\sbra*{|\langle M,W\rangle_{\mathrm F}|^u}=\Gamma\!\left(1+\frac u2\right).
\]
By \cref{eq:Zneq,eq:EWuop}, we have
\begin{align*}
 \E\sbra*{\|Z\|_q^u}
 \le K_u\left(
      d^{u/q}\E\sbra*{\|W\|_{\mathrm{op}}^u}+\E\sbra*{|\langle M,W\rangle_{\mathrm F}|^u}
     \right)
 \le O_{q,u} \rbra*{d^{u(1/q+1/2)}},
\end{align*}
and hence $(\E\sbra{\|Z\|_q^u})^{1/u}\le O_{q,u}\rbra{d^{1/q+1/2}}$.
Since $M^\perp$ has complex dimension $d^2-1$, the gamma-moment formula gives
\[
 (\E\sbra*{\Abs{Z}_{\mathrm F}^u})^{1/u}
 =\left(
   \frac{\Gamma(d^2-1+u/2)}{\Gamma(d^2-1)}
  \right)^{1/u}
 = \Theta_u\rbra{d}.
\]
Substituting these two estimates into
\cref{gt:eq:isotropic-moment-ratio} yields
\[
 \rbra*{\E\sbra*{\|G\|_q^u}}^{1/u}
 \le O_{q,u}\rbra*{d^{1/q-1/2}}.
\]
\end{proof}

\subsubsection{Some facts}

\begin{lemma}\label{gt:lem:endpoint-scalar}
For every $\beta>0$, there is a polynomial $J_\beta$ of degree at most
$\lceil\beta\rceil-1$ such that
\[
 |(1-t)^\beta-J_\beta(t)|\le O_\beta (t^\beta),
 \qquad(0\le t\le1).
\]
One may take the Taylor polynomial at $t=0$ through degree
$\lceil\beta\rceil-1$.
\end{lemma}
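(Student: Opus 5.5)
We take $J_\beta$ to be the Taylor polynomial of $f(t)=(1-t)^\beta$ at $t=0$ through degree $r:=\lceil\beta\rceil-1$, namely $J_\beta(t)=\sum_{i=0}^{r}\binom{\beta}{i}(-t)^i$. The error estimate is then handled by splitting $[0,1]$ at $t=1/2$, since the only difficulty is the singularity of the derivatives of $f$ at $t=1$.

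\textbf{The range $0\le t\le 1/2$.} Here $f$ is smooth, and we use Taylor's theorem with Lagrange remainder of order $r+1$. For some $\xi\in[0,t]$,
\[
f(t)-J_\beta(t)=\frac{f^{(r+1)}(\xi)}{(r+1)!}\,t^{r+1}.
\]
Since $f^{(r+1)}(\xi)=\pm\beta(\beta-1)\cdots(\beta-r)(1-\xi)^{\beta-r-1}$ and $1-\xi\ge 1/2$, this derivative is bounded by a constant depending only on $\beta$, whatever the sign of $\beta-r-1$. Because $r+1=\lceil\beta\rceil\ge\beta$ and $t\le1$, we have $t^{r+1}\le t^\beta$. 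The error is therefore $O_\beta(t^\beta)$.

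\textbf{The range $1/2\le t\le 1$.} Here we bound crudely. We have $|f(t)|\le 1$, and $|J_\beta(t)|\le\sum_{i=0}^r\bigl|\binom{\beta}{i}\bigr|=O_\beta(1)$. On the other hand $t^\beta\ge 2^{-\beta}$, so the difference is at most $O_\beta(1)\le O_\beta(2^\beta t^\beta)=O_\beta(t^\beta)$.

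Combining the two ranges proves the claim. There is no real obstacle. The one point to get right is using the Lagrange remainder only away from $t=1$, where $f^{(r+1)}$ may blow up when $\beta$ is not an integer, and checking that the exponent comparison $t^{r+1}\le t^\beta$ goes the right way. It does, since $r+1\ge\beta$ and $t\le 1$.
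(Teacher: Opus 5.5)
Your proposal is correct and follows essentially the same route as the paper: take the Taylor polynomial, use the Lagrange remainder of order $\lceil\beta\rceil$ on $[0,1/2]$ with $t^{\lceil\beta\rceil}\le t^\beta$, and on $[1/2,1]$ bound both terms by constants against $t^\beta\ge 2^{-\beta}$. Your explicit bound on $f^{(r+1)}$ away from $t=1$ simply fills in a detail the paper leaves implicit.
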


\begin{proof}
For $0\le t\le1/2$, Taylor's theorem gives a remainder of order
$t^{\lceil\beta\rceil}$, which is at most $t^\beta$.  For $1/2\le t\le1$, both the function and
the fixed polynomial are bounded, while $t^\beta\ge2^{-\beta}$.
\end{proof}

\begin{lemma}\label{gt:lem:beta-variance}
Let $T\sim\operatorname{Beta}(d^2-1,s+1)$, where $d\ge2$ and $s\ge d^2$.  For every fixed
integer $j\ge1$,
\begin{equation}\label{gt:eq:beta-var}
 \Var\sbra{T^j}
 \le O_j\!\left(\frac{1}{d^2}\left(\frac{d^2}{s}\right)^{2j}\right).
\end{equation}
\end{lemma}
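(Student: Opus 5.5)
We compute $\Var[T^j]=\E[T^{2j}]-(\E[T^j])^2=\mu_{2j}(s)-\mu_j(s)^2$ exactly, using the product formula for integer moments in \cref{gt:lem:beta-sphere}. Write $a:=d^2-1$ and $b:=d^2+s$, so that $\mu_j(s)=\prod_{r=0}^{j-1}\frac{a+r}{b+r}$. Then
\[
\frac{\mu_{2j}(s)}{\mu_j(s)^2}=\prod_{r=0}^{j-1}\frac{(a+j+r)(b+r)}{(a+r)(b+j+r)}.
\]
Each factor equals $1+\frac{j(b-a)}{(a+r)(b+j+r)}$. Since $b-a=s+1$, $a+r\ge d^2-1\ge d^2/2$ (using $d\ge2$), and $b+j+r\ge s$, each factor is at most $1+\frac{2j(s+1)}{d^2 s}\le 1+\frac{4j}{d^2}$.

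Taking the product over the $j$ factors gives
\[
\frac{\mu_{2j}(s)}{\mu_j(s)^2}\le\Bigl(1+\frac{4j}{d^2}\Bigr)^j\le 1+O_j\Bigl(\frac1{d^2}\Bigr),
\]
because $4j/d^2\le j$ is bounded for fixed $j$, so $(1+x)^j-1\le C_j x$ on that range. Hence
\[
\Var[T^j]=\mu_j(s)^2\Bigl(\frac{\mu_{2j}(s)}{\mu_j(s)^2}-1\Bigr)\le O_j\Bigl(\frac1{d^2}\Bigr)\mu_j(s)^2.
\]

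It remains to bound $\mu_j(s)$. By \cref{gt:eq:beta-real-moment}, $\mu_j(s)=\E[T^j]\le O_j\bigl((d^2/s)^j\bigr)$, so $\mu_j(s)^2\le O_j\bigl((d^2/s)^{2j}\bigr)$. Multiplying by the $O_j(1/d^2)$ factor gives \cref{gt:eq:beta-var}.

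The only delicate point is keeping the $1/d^2$ gain instead of the cruder bound $\E[T^{2j}]$, which lacks that factor. The exact ratio identity is what delivers it: the lower bound $a+r\gtrsim d^2$ controls the deviation of each factor from $1$. Everything else is bookkeeping with constants depending on $j$.
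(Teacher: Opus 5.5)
Your proof is correct and follows essentially the same route as the paper. Both write $\mu_{2j}(s)/\mu_j(s)^2$ as a product of $j$ factors, show it is $1+O_j(1/d^2)$, and then bound $\mu_j(s)^2$ via \cref{gt:eq:beta-real-moment}; the only cosmetic difference is that the paper drops the factor $\frac{d^2+s+r}{d^2+s+j+r}\le 1$ and keeps $\frac{d^2-1+j+r}{d^2-1+r}=1+\frac{j}{d^2-1+r}$, whereas you combine the two factors exactly into $1+\frac{j(s+1)}{(d^2-1+r)(d^2+s+j+r)}$ before bounding.
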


\begin{proof}
Set $a=d^2-1$, $b=s+1$, and $M_j:=\E\sbra{T^j}$.  The beta moment formula gives
\[
 \frac{M_{2j}}{M_j^2}
 =\prod_{r=0}^{j-1}
   \frac{a+j+r}{a+r}\frac{a+b+r}{a+b+j+r}
 \le\prod_{r=0}^{j-1}\left(1+\frac{j}{a+r}\right)
 \le1+O_j\!\left(\frac{1}{d^2}\right).
\]
Thus $\Var\sbra{T^j}\le O_j(M_j^2/d^2)$.  Also
$M_j\le O_j((d^2/s)^j)$ by \cref{gt:eq:beta-real-moment}, proving the claim.
\end{proof}

\begin{lemma}\label{gt:lem:relative-gradient}
For every state $\rho$ and every $\alpha>1$,
\begin{equation}\label{gt:eq:spectral-relative}
 \tr(\rho^{2\alpha-1})
 \le\mathrm{F}_\alpha(\rho)^{2-1/\alpha}
 \le\mathrm{F}_\alpha(\rho)^2d^{1-1/\alpha}.
\end{equation}
\end{lemma}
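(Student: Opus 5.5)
We need to prove $\tr(\rho^{2\alpha-1}) \le F^{2-1/\alpha} \le F^2 d^{1-1/\alpha}$, where $F=\mathrm{F}_\alpha(\rho)=\sum_i \lambda_i^\alpha$ and $\lambda_1,\dots,\lambda_d$ are the eigenvalues of $\rho$.

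\textbf{First inequality.} We work with the eigenvalues and write
\[
\sum_i \lambda_i^{2\alpha-1}=\sum_i \lambda_i^{\alpha}\cdot\lambda_i^{\alpha-1}.
\]
Since $\alpha>1$, the factor $\lambda_i^{\alpha-1}$ is at most $\max_j \lambda_j^{\alpha-1}=\Abs{\rho}_{\mathrm{op}}^{\alpha-1}$. Moreover $\Abs{\rho}_{\mathrm{op}}^{\alpha}\le\sum_j\lambda_j^\alpha=F$, so $\Abs{\rho}_{\mathrm{op}}^{\alpha-1}\le F^{(\alpha-1)/\alpha}=F^{1-1/\alpha}$. Combining,
\[
\tr(\rho^{2\alpha-1})\le F\cdot F^{1-1/\alpha}=F^{2-1/\alpha}.
\]
This is just monotonicity of Schatten norms, $\Abs{\rho}_\infty\le\Abs{\rho}_\alpha$.

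\textbf{Second inequality.} It is equivalent to $F^{-1/\alpha}\le d^{1-1/\alpha}$, that is, $F\ge d^{1-\alpha}$. This is exactly the lower bound in \cref{eq:global-physical-range} for $\alpha>1$. If a self-contained justification is wanted, it follows from Hölder or the power-mean inequality:
\[
1=\sum_i\lambda_i\le d^{1-1/\alpha}\Bigl(\sum_i\lambda_i^\alpha\Bigr)^{1/\alpha}=d^{1-1/\alpha}F^{1/\alpha},
\]
and raising to the power $\alpha$ gives $F\ge d^{1-\alpha}$.

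There is no real obstacle; the only point needing care is the direction of the exponents. Both steps use $\alpha>1$: it makes $\alpha-1>0$ in the first step and ensures $F^{-1/\alpha}$ is decreasing in $F$ in the second.
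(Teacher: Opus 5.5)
Your proposal is correct and follows essentially the same route as the paper: bound $\tr(\rho^{2\alpha-1})\le\lambda_{\max}^{\alpha-1}F$, use $\lambda_{\max}^\alpha\le F$, and obtain the second inequality from $F\ge d^{1-\alpha}$. Your H\"older derivation of that lower bound just fills in a fact the paper takes from \cref{eq:global-physical-range}.
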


\begin{proof}
Let $\lambda_{\max}$ be the largest eigenvalue of $\rho$.  Then
\[
 \tr(\rho^{2\alpha-1})
 \le\lambda_{\max}^{\alpha-1}\mathrm{F}_\alpha(\rho)
 \le\mathrm{F}_\alpha(\rho)^{(\alpha-1)/\alpha}
      \mathrm{F}_\alpha(\rho),
\]
because $\lambda_{\max}^\alpha\le\mathrm{F}_\alpha(\rho)$.  The second inequality in
\cref{gt:eq:spectral-relative} follows from
$\mathrm{F}_\alpha(\rho)\ge d^{1-\alpha}$.
\end{proof}

\begin{proof}
The assumption gives
$\widehat F/F\ge e^{-\gamma}$ and
$\widehat F/F\le2-e^{-\gamma}$.  Since
$e^\gamma+e^{-\gamma}\ge2$, one has
$2-e^{-\gamma}\le e^\gamma$.  Thus
$e^{-\gamma}\le\widehat F/F\le e^\gamma$, and taking logarithms proves the claim.
\end{proof}

\begin{lemma}\label{gt:lem:spherical-poincare}
Let $H$ be a real inner-product space of dimension $m\geq 2$, and let
\[
 \mathbb{S}(H):=\{x\in H:\|x\|=1\}.
\]
If $G$ is uniformly distributed on $\mathbb{S}(H)$, then every
continuously differentiable function
$f:\mathbb{S}(H)\to\mathbb{R}$ satisfies
\begin{equation}
\label{gt:eq:spherical-poincare}
 \Var\sbra{f(G)}
 \leq
 \frac{1}{m-1}
 \E\sbra*{\left\|
 \nabla_{\mathbb{S}(H)}f(G)
 \right\|^2}.
\end{equation}
Here $\nabla_{\mathbb{S}(H)}f$ denotes the tangential gradient of $f$ along the sphere.

In particular, if $\widetilde f$ is a continuously differentiable
extension of $f$ to a neighborhood of $\mathbb{S}(H)$ in $H$, then
\begin{equation}
\label{gt:eq:spherical-poincare-ambient}
 \Var\sbra{f(G)}
 \leq
 \frac{1}{m-1}
 \E\sbra*{\left\|
 \nabla \widetilde f(G)
 \right\|^2},
\end{equation}
where $\nabla$ is simply the Euclidean gradient with respect to the inner product on $H$.
\end{lemma}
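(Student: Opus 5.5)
The plan is to reduce to the classical Poincaré inequality on the round sphere $\mathbb{S}^{m-1}\subset\mathbb{R}^m$. Choosing an orthonormal basis of $H$ identifies $H$ isometrically with $\mathbb{R}^m$ and $\mathbb{S}(H)$ with $\mathbb{S}^{m-1}$. This identification preserves the uniform measure and transports tangential gradients and norms, so it suffices to prove the statement for $\mathbb{S}^{m-1}$.

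On $\mathbb{S}^{m-1}$, we use the spectral decomposition of the Laplace--Beltrami operator $\Delta_{\mathbb{S}}$ into spherical harmonics. Its eigenvalues are $\ell(\ell+m-2)$ for $\ell=0,1,2,\ldots$, and the eigenspaces are mutually orthogonal in $L^2$ of the uniform measure. The constants form the $\ell=0$ space, so the spectral gap is $\ell=1$, giving $m-1$.

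For $f\in C^1$ we expand $f=\sum_\ell f_\ell$ in $L^2$. The subtracted mean is $\E[f(G)]=f_0$. Integration by parts, justified by density of smooth functions in $C^1$ with the $H^1$ norm, gives
\[
\E\sbra*{\|\nabla_{\mathbb S}f\|^2}=\sum_{\ell\ge1}\ell(\ell+m-2)\,\E\sbra*{f_\ell^2}\ge (m-1)\sum_{\ell\ge1}\E\sbra*{f_\ell^2}=(m-1)\Var\sbra{f(G)}.
\]
This proves \cref{gt:eq:spherical-poincare}.

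An alternative that avoids harmonic analysis is the Bakry--Émery criterion. The sphere $\mathbb{S}^{m-1}$ has Ricci curvature $m-2$, so Lichnerowicz's theorem bounds the first eigenvalue below by $\frac{m-1}{m-2}(m-2)=m-1$. Either route is standard, and I would simply cite it, for instance Ledoux's monograph. The case $m=2$, the circle, is Wirtinger's inequality with constant $1$, which matches $m-1=1$.

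For \cref{gt:eq:spherical-poincare-ambient}, the tangential gradient of the restriction is the orthogonal projection of the ambient gradient:
\[
\nabla_{\mathbb S}f(x)=\nabla\widetilde f(x)-\langle\nabla\widetilde f(x),x\rangle x .
\]
Hence $\|\nabla_{\mathbb S}f\|\le\|\nabla\widetilde f\|$ pointwise, and substituting this into \cref{gt:eq:spherical-poincare} gives the second inequality.

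The only delicate point is regularity. In the application, $\widetilde h_t$ is $C^1$ since $x\mapsto\|x\|_{2\alpha}^{2\alpha}$ is $C^1$ for $2\alpha>2$. For general $C^1$ functions, approximating by smooth functions on the compact sphere converges in $H^1$ and passes the inequality to the limit. I therefore expect no real obstacle beyond quoting the spectral gap correctly for the real dimension $m$.
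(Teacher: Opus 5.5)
Your proposal is correct and follows the same route as the paper. Both identify $H$ isometrically with $\mathbb{R}^m$ and invoke the classical Poincar\'e inequality on $\mathbb{S}^{m-1}$ with spectral gap $m-1$; the paper simply cites it, while you also sketch it via spherical harmonics or Lichnerowicz. Both then obtain the ambient version because the tangential gradient is the orthogonal projection of $\nabla\widetilde f$ onto the tangent space, so its norm is pointwise smaller.
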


\begin{proof}
After identifying $H$ isometrically with $\mathbb{R}^m$,
\cref{gt:eq:spherical-poincare} is the classical Poincar\'e
inequality on $\mathbb{S}^{m-1}$, applied to
$f-\E\sbra{f(G)}$; see
\cite[Proposition~2.3]{BianchiniColesantiPagniniRoncoroni2023}.

For the second assertion, observe that
$\nabla_{\mathbb{S}(H)}f(x)$ is the orthogonal projection of
$\nabla_H\widetilde f(x)$ onto the tangent space
$T_x\mathbb{S}(H)$.  Hence
\[
 \left\|\nabla_{\mathbb{S}(H)}f(x)\right\|
 \leq
 \left\|\nabla\widetilde f(x)\right\|,
\]
and \cref{gt:eq:spherical-poincare-ambient} follows from
\cref{gt:eq:spherical-poincare}.
\end{proof}

\section*{Acknowledgment}

The authors used Large Language Models as AI-assisted research and writing tools throughout the
preparation of this paper. The tools were used to help brainstorm ideas and explore proof
strategies. Portions of the paper text were redrafted or modified with AI assistance across all
sections. All final mathematical claims, algorithms, proofs, citations, and wording were reviewed,
edited, and validated by the authors. The authors assume responsibility for all content of the paper.

\addcontentsline{toc}{section}{References}

\bibliographystyle{alphaurl}
\bibliography{main}

\end{document}